\def\draft{0}  
\definecolor{darkred}  {rgb}{0.5,0,0}
\definecolor{darkblue} {rgb}{0,0,0.5}
\definecolor{darkgreen}{rgb}{0,0.5,0}
\crefname{lemma}{Lemma}{Lemmas}
\crefname{definition}{Definition}{Definitions}
\crefname{theorem}{Theorem}{Theorems}
\crefname{section}{Section}{Sections}
\crefname{appendix}{Appendix}{Appendices}
\crefname{figure}{Fig.}{Figs.}
\crefname{equation}{Eq.}{Eqs.}
\crefname{table}{Table}{Tables}
\crefname{example}{Example}{Examples}
\newtheorem{theorem}{Theorem}
\newtheorem*{lemma*}{Lemma}
\newtheorem{lemma}{Lemma}
\newtheorem*{claim}{Hoeffding's inequality}
\newcommand{\ket}[1]{|#1\rangle}
\newcommand{\bra}[1]{\langle#1|}
\newcommand{\proj}[1]{|#1\rangle\langle#1|}
\DeclareMathOperator{\Tr}{Tr}
\DeclareMathOperator{\poly}{poly}
\DeclareMathOperator{\cc}{cc}
\newcommand{\C}{\mathbb{C}}
\newcommand{\A}{\mathcal{A}}
\DeclareFontFamily{U}{mathc}{}
\DeclareFontShape{U}{mathc}{m}{it}{<->s*[1.03] mathc10}{}
\DeclareMathAlphabet{\mathscr}{U}{mathc}{m}{it}
\renewcommand{\a}{\mathscr{a}}
\newcommand{\E}{\mathbb{E}}
\newcommand{\mx}[1]{\begin{pmatrix}#1\end{pmatrix}}
\newcommand{\smx}[1]{\bigl(\begin{smallmatrix}#1\end{smallmatrix}\bigr)}
\DeclarePairedDelimiter{\set}{\lbrace}{\rbrace}
\DeclarePairedDelimiter{\abs}{\lvert}{\rvert}
\DeclarePairedDelimiter{\norm}{\lVert}{\rVert}
\DeclarePairedDelimiter{\of}{\lparen}{\rparen}
\DeclarePairedDelimiter{\sof}{\lbrack}{\rbrack}
\newcommand{\ct}{^{\dagger}}
\newcommand{\x}{\otimes}
\newcommand{\xp}[1]{^{\otimes #1}}
\newcommand{\Part}{\mathcal{P}}
\newcommand{\Ecut}{{E_{\text{cut}}}}
\newsavebox{\Hgate}
\savebox{\Hgate}{$H:=\smx{1&1\\1&-1}/\sqrt{2}$}
\newsavebox{\Sgate}
\savebox{\Sgate}{$S:=\smx{1&0\\0&i}$}
\newcommand{\CR}[1]{#1} 
\newcommand{\Cb}[1]{#1}
\newcommand{\Lone}[1]{{\color{red} \textbf{#1}}}
\newcommand{\Ltwo}[1]{{\color{blue} #1}}
\newcommand{\Lone}[1]{#1}
\newcommand{\Ltwo}[1]{}
\newcommand{\sect}[1]{\emph{#1.---}}
\begin{document}
\title{Simulating Large Quantum Circuits on a Small Quantum Computer}

\author{Tianyi Peng}
\email{tianyi@mit.edu}
\affiliation{Laboratory for Information and Decision Systems,
Massachusetts Institute of Technology, Cambridge, MA 02139, USA.}

\author{Aram W.~Harrow}
\affiliation{Center for Theoretical Physics,
Massachusetts Institute of Technology, Cambridge, MA 02139, USA.}

\author{Maris Ozols}
\affiliation{University of Amsterdam and QuSoft, Amsterdam, the Netherlands.}

\author{Xiaodi Wu}
\email[corresponding author: ]{xwu@cs.umd.edu}
\affiliation{Department of Computer Science, Institute for Advanced Computer Studies, and Joint Center for Quantum Information and Computer Science, University of Maryland, College Park, MD 20742, USA.}


\begin{abstract}
Limited quantum memory is one of the most important constraints for near-term quantum devices. Understanding whether a small quantum computer can simulate a larger quantum system, or execute an algorithm requiring more qubits than available, is both of theoretical and practical importance.
In this Letter, we introduce cluster parameters $K$ and $d$ of a quantum circuit. The tensor network of such a circuit can be decomposed into clusters of size at most $d$ with at most $K$ qubits of inter-cluster quantum communication.
\Cb{We propose a \textit{cluster simulation scheme} that can simulate any $(K,d)$-clustered quantum circuit on a $d$-qubit machine in time roughly $2^{O(K)}$, with further speedups possible when taking more fine-grained circuit structure into account.
We show how our scheme can be used to simulate clustered quantum systems---such as large molecules---that can be partitioned into multiple significantly smaller clusters with weak interactions among them.
By using a suitable clustered ansatz, we also experimentally demonstrate that a quantum variational eigensolver can still achieve the desired performance for estimating the energy of the BeH$_2$ molecule while running on a physical quantum device with half the number of required qubits.}
\end{abstract}

\pacs{}


\maketitle

\sect{\Lone{Introduction}}%
Near-term quantum computing applications will focus on Noisy Intermediate-Scale Quantum (NISQ) devices~\cite{preskill2018quantum}, where quantum memory is limited both in quantity and quality.
To meet the memory need of such applications (e.g., quantum simulation~\cite{lloyd1996universal, cirac2012goals, o2016scalable}, quantum optimization~\cite{farhi2014quantum, moll2017quantum}, and quantum machine learning~\cite{rebentrost2014quantum, biamonte2017quantum}), it is desirable to seek a way to perform computations that require more qubits than physically available, at the cost of additional affordable classical processing.

Trading classical computation for quantum computation is a well-motivated topic of long-standing interest. An extreme example of this is
the \emph{(fully) classical simulation} with no quantum computation at all, which is however limited to small dimensions, weak entanglement, or circuits with special gate sets~\cite{MS08, Vidal03, chen2018classical, SDV06, Jozsa}.
Recently, the possibility of trading classical computation for  ``virtual qubits''
has been discussed in~\cite{BSS16}.
A systematic understanding of such trade-offs will be crucial for realizing near-term quantum applications.

In this Letter, we \Cb{introduce a \textit{cluster simulation scheme}}, a general framework for simulating large quantum circuits on a quantum computer with a small amount of quantum memory. The performance of our simulation depends on the \textit{cluster parameters} of the given circuits.
In particular, we are inspired by the classical \emph{fragmentation} methods and \emph{Quantum Mechanics/Molecular Mechanics} (QM/MM) methods for simulating molecules~\cite{gordon2011fragmentation, li2007generalized, warshel1976theoretical, li2008fragmentation}
that can be partitioned into multiple weakly interacting clusters of significantly smaller size
(e.g., an oligosaccharide consisting of several monosaccharides).
Following the spirit of~\cite{BSS16} and~\cite{gordon2011fragmentation, li2007generalized, warshel1976theoretical, li2008fragmentation}, a natural definition of cluster parameters of a quantum circuit should capture the decomposability of the circuit into clusters of bounded size and limited inter-cluster interactions.

Our definition of the cluster parameters is guided by the above intuition, but with an important distinction.
Instead of looking into the decomposability of any given circuit, we are concerned about the decomposability of the corresponding \emph{tensor network}, which is inspired by the tensor-network-based classical simulation of quantum circuits~\cite{MS08, arfken1999mathematical, orus2014practical, AL10}.
One significant advantage of our definition, as we will see below, is to use more flexible decompositions of tensor networks than are possible with simple partitioning of qubits (e.g., as in~\cite{BSS16}).

\begin{figure*}
\begin{tikzpicture}[semithick, > = latex,
	gate/.style = {fill = white, draw, inner sep = 6pt},
		pt/.style = {draw, fill = white, inner sep = 3pt},
		tr/.style = {isosceles triangle, inner sep = 0pt, minimum width = 8pt, isosceles triangle apex angle = 70},
  	me/.style = {tr, draw, fill = white},
		cr/.style = {tr, draw, fill = white, shape border rotate = 180},
		bd/.style = {draw = none, fill = black!15, rounded corners = 2pt},
		lo/.style = {orange!20},
		lb/.style = {blue!20}
	]


\def\H{18pt}
\def\W{23pt}
\def\arrowscale{1}

\newcommand{\gate}[3]{
	\draw[gate] (#1*\W-0.45*\H,#2*\H-0.5*#3*\H+0.15*\H)
		rectangle (#1*\W+0.45*\H,#2*\H+0.5*#3*\H-0.15*\H);
}

\newcommand{\measure}[2]{
	\def\r{0.22}
	\gate{#1}{#2}{1}
	\draw[-{Latex[scale=0.8*\arrowscale]}] (#1*\W,#2*\H-0.2*\H) -- (#1*\W+0.28*\W,#2*\H+0.3*\H);
	\draw (#1*\W+\r*\W,#2*\H-0.2*\H) arc [start angle = 0, end angle = 180, radius = \r*\W];
}

\newcommand{\gnode}[3]{
  \node[pt] (#3) at (#1*\W,#2*\H) {};
}

\newcommand{\circuit}[1]{
	\foreach \y in {0,1,2,3} {
	  \node at (0,\y*\H) {$\ket{0}$};
	}
	\foreach \y in {0,2,3} {
  	\draw (0.3*\W,\y*\H) -- (3*\W+#1*\W,\y*\H);
	}
  \draw (0.3*\W,\H) -- (1*\W,\H);
	\draw (2*\W+#1*\W,\H) -- (3*\W+#1*\W,\H);
	\gate{1}{3.0}{1}
	\gate{1}{1.5}{2}
	\begin{scope}[xshift = #1*\W]
  	\gate{2}{2.5}{2}
	  \gate{2}{0.5}{2}
	  \measure{3.0}{0}
	  \measure{3.0}{1}
  	\measure{3.0}{3}
		\measure{3.0}{2}
  \end{scope}
}

\tikzset{modified circuit/.pic = {

\def \d{1pt}
\fill[bd,lb]
     (-0.4*\W,    0.5*\H+\d)
  -- ( 2.5*\W-\d, 0.5*\H+\d)
	-- ( 2.5*\W-\d, 1.5*\H+\d)
	-- ( 5.5*\W,    1.5*\H+\d)
	-- ( 5.5*\W,    3.5*\H)
	-- (-0.4*\W,    3.5*\H) -- cycle;

\fill[bd,lo]
	   (-0.4*\W,    0.5*\H-\d)
	-- ( 2.5*\W+0.5pt, 0.5*\H-\d)
	-- ( 2.5*\W+0.5pt, 1.5*\H-\d)
	-- ( 5.5*\W,    1.5*\H-\d)
	-- ( 5.5*\W,   -0.5*\H)
	-- (-0.4*\W,   -0.5*\H) -- cycle;

  \def\s{2}
  \draw[red] (\W,\H) -- (2*\W,\H);
  \draw[red] (1.4*\W+\s*\W,\H) -- (2*\W+\s*\W,\H);
	\circuit{\s}
	\measure{2.0}{1}
	\node at (2.1*\W, \H+0.67*\H) {\footnotesize $O_{#1}$};
	\node at (\W+\s*\W,\H) {\small$\ket{\rho_{#1}}$};
}}


\tikzset{phase1/.pic = {
\begin{scope}[scale = #1, xshift = -2*\W, yshift = -1.5*\H]
	\def\d{1pt}
	\fill[bd,lb]
     (-0.3*\W,    0.5*\H+\d)
  -- ( 1.5*\W-\d, 0.5*\H+\d)
	-- ( 1.5*\W-\d, 1.5*\H+\d)
	-- ( 3.5*\W,    1.5*\H+\d)
	-- ( 3.5*\W,    3.5*\H)
	-- (-0.3*\W,    3.5*\H) -- cycle;
	\fill[bd,lo]
	   (-0.3*\W,    0.5*\H-\d)
	-- ( 1.5*\W+\d, 0.5*\H-\d)
	-- ( 1.5*\W+\d, 1.5*\H-\d)
	-- ( 3.5*\W,    1.5*\H-\d)
	-- ( 3.5*\W,   -0.5*\H)
	-- (-0.3*\W,   -0.5*\H) -- cycle;
	\draw[red] (\W,\H) -- (2*\W,\H);
	\circuit{0}
	\draw[rounded corners = 5pt] (3.7*\W+0.21*\W,1.5*\H) -- +(-5pt,0) -- (3.7*\W, 3.3*\H) -- +(-5pt,0);
	\draw[rounded corners = 5pt] (3.7*\W+0.21*\W,1.5*\H) -- +(-5pt,0) -- (3.7*\W,-0.3*\H) -- +(-5pt,0);
	\node (y) at (3.8*\W+0.5*\W,1.5*\H) {$f(y)$};

	\node at (2*\W, 4.5*\H) {$(C, f)$};

\end{scope}
}}


\tikzset{network/.pic = {

	\foreach \y in {0,1,2,3} {
    \node[cr] (\y) at (0,\y*\H) {};
  }

	\gnode{1}{3.0}{A}
  \gnode{1}{1.5}{B}
  \gnode{2}{2.5}{C}
  \gnode{2}{0.5}{D}

 	\node[me] (F) at (3.17*\W,1.5*\H) {};

	\draw[->] (3) -- (A);
  \draw[->] (2) -- (B);
	\draw[->] (1) -- (B);
	\draw[->] (0) -- (D);

  \draw[->] (A) -- (C);
  \draw[->] (B) -- (C);

	\draw[->] ($(C)+(0, 1.5pt)$) -- ($(F.west)+(0,3pt)$);
	\draw[->] ($(C)+(0,-1.5pt)$) -- (F.west);

	\draw[->] ($(D)+(0, 1.5pt)$) -- (F.west);
	\draw[->] ($(D)+(0,-1.5pt)$) -- ($(F.west)+(0,-3pt)$);

	\gnode{2}{2.5}{C}
  \gnode{2}{0.5}{D}

}}


\tikzset{phase2/.pic = {
\begin{scope}[scale = #1, xshift = -1.5*\W, yshift = -1.5*\H]

  \fill[bd,lb] (-0.3*\W,0.5*\H) -- (2.3*\W,1.7*\H) -- (2.3*\W, 3.5*\H) -- (-0.3*\W, 3.5*\H) -- cycle;
	\fill[bd,lo] (-0.3*\W,0.2*\H) -- (2.3*\W,1.4*\H) -- (2.3*\W,-0.5*\H) -- (-0.3*\W,-0.5*\H) -- cycle;
	\node at (1*\W, 2.3*\H) {$S_1$};
  \node at (2*\W,-0.1*\H) {$S_2$};

	\pic at (0,0) {network};
	\node at (F) [label = $O_f$] {};
	\draw[->,red] (B) -- (D);
	\node[red] at (1.9*\W,1.5*\H) {$K=1$};
	\node      at (3.0*\W,3.2*\H) {$d=3$};

  \tikzset{txt/.style = {baseline, anchor = west}}
  \path (0*\W,-1.2*\H)
  	        node[cr] {} +(5pt,1pt) node[txt] {state}
++(0,-12pt) node[pt] {} +(5pt,1pt) node[txt] {gate}
++(0,-12pt) node[me] {} +(5pt,1pt) node[txt] {observable};

\node at (1.6*\W, 4.5*\H) {$(G, \mathcal{A})$};

\end{scope}
}}


\tikzset{modified network/.pic = {
\begin{scope}[scale = 0.9, every node/.style = {scale = 0.9}]

\fill[bd,lb] (-0.3*\W,0.5*\H)
	-- (1*\W,1.1*\H)
	-- (2.3*\W, 1.1*\H)
	-- (2.3*\W, 3.25*\H)
	-- (1*\W, 3.25*\H)
	-- (-0.3*\W, 3.25*\H) -- cycle;

	\fill[bd,lo] (-0.3*\W,0.3*\H)
	-- (1*\W,0.9*\H)
	-- (2.3*\W,  0.9*\H)
	-- (2.3*\W, 0.2*\H)
	-- (-0.3*\W,-0.5*\H) -- cycle;

  \pic at (0,0) {network};
  \node[me] (M) at ($(B)+( \W,-0.1*\H)$) {};
  \node[cr] (S) at ($(D)+(-\W, 0.1*\H)$) {};
	\draw[->,red] (B) -- (M);
	\draw[->,red] (S) -- (D);

	\node at (2.1*\W, 1.9*\H) {\footnotesize $O_{#1}$};

	\node at (0.6*\W, 0.5*\H) {\footnotesize $\rho_{#1}$};

\end{scope}
}}


\tikzset{phase3/.pic = {
\begin{scope}[scale = #1, xshift = -1cm, yshift = -1.2cm]

  \pic at (0, 1.9) {modified network = {1}};
	\pic at (0,-1.3) {modified network = {8}};

	\node at (1.0,1.7) {$+$};
	\node at (1.0,1.3) {$\vdots$};
	\node at (1.0,0.7) {$+$};

	\node at (2.35, 3.6) {\small $(G', \mathcal{A}_1)$};

	\node at (2.35, 0.4) {\small $(G', \mathcal{A}_8)$};

\end{scope}
}}


\tikzset{phase4/.pic = {
\begin{scope}[scale = #1, xshift = -1cm, yshift = -1.2cm]

  \def\arrowscale{0.7}
  \pic[scale = 0.75] at (-0.2, 2.1) {modified circuit = {1}};
	\pic[scale = 0.75] at (-0.2,-1.2) {modified circuit = {8}};

	\node at (1.3,1.65) {$+$};
	\node at (1.3,1.25) {$\vdots$};
	\node at (1.3,0.65) {$+$};

\end{scope}
}}


\pic at ( 0.0,2) {phase1 = {1.0}}; \node at ( 0.0, 5) {Phase 1}; \draw[dashed] ( 2.5,-0.7) -- +(0,5.2);
\pic at ( 4.4,2) {phase2 = {1.0}}; \node at ( 4.4, 5) {Phase 2}; \draw[dashed] ( 6.4,-0.7) -- +(0,5.2);
\pic at ( 8.0,2) {phase3 = {1.0}}; \node at ( 8.0, 5) {Phase 3}; \draw[dashed] (10.0,-0.7) -- +(0,5.2);
\pic at (12.0,2) {phase4 = {1.0}}; \node at (12.0, 5) {Phase 4};

\end{tikzpicture}
\caption{Four phases of our simulation: (1)~the original quantum circuit; (2)~the corresponding tensor network; (3)~a collection of tensor networks obtained by cutting an edge; (4) a collection of smaller quantum circuits.}
\label{fig:overall}
\end{figure*}

Given our definition of clustered circuits, our main contribution is a scheme to simulate the entire quantum circuit by simulating each cluster on a small quantum machine with classical post-processing. \CR{The key difference between our scheme and fully classical simulation schemes  \cite{bernstein1997quantum, MS08, Vidal03, chen2018classical, SDV06, Jozsa, markov2018quantum, boixo2017simulation, aaronson2016complexity, chen201864} is that we keep part of the computation quantum (i.e., unitary). In particular}, we design a method to simulate inter-cluster quantum interactions by classical means. Comparing to ``virtual qubits'' in~\cite{BSS16}, our technique can be deemed as trading classical computation for ``virtual quantum communication''.
The cluster simulation scheme applies to general quantum circuits, which distinguishes it from application-specific techniques for saving qubits~\cite{romero2017quantum, bravyi2017tapering, moll2016optimizing, steudtner2017lowering}.

We apply our scheme to Hamiltonian simulation~\cite{vidal2004efficient, berry2007efficient, berry2015simulating, low2017optimal}, particularly for clustered quantum systems, and Variational Quantum Eigensolvers (VQE), a popular candidate for showing quantum advantages on near-term quantum devices~\cite{mcclean2016theory, barrett2013simulating, wecker2015progress}.
In both applications, we show that the particular quantum circuits have favorable cluster parameters that are amenable to our techniques.
One can also interpret our technique as a \emph{hybrid variational ansatz} in which a quantum computer is used for some parts of the circuit and a classical computer is used for other parts, which might be of independent interest.

Our scheme can easily be extended to allow limited inter-cluster quantum communication. It can hence be leveraged to perform general quantum circuits on modular quantum systems, a leading proposal of scalable quantum computers (e.g.,~\cite{MSL16, PhysRevA.76.062323, Yao12, Dai20}).

\sect{\Lone{Computational model}}
We use the same computational model as in~\cite{BSS16}; see \cref{fig:overall} (Phase~1).
An $m$-gate quantum circuit $C$ with 1- and 2-qubit gates is applied to $\ket{0}\xp{n}$ and all output qubits are measured in the computational basis.
A classical post-processing function $f: \set{0,1}^n \to [-1,1]$ is then applied to the measurement outcomes.
We assume that $f$ can be efficiently computed classically.
We call the overall procedure a \emph{quantum-classical algorithm} (QC algorithm) and denote it by $(C,f)$. Its expected output $\E_y f(y)$ is averaged over all measurement outcomes $y \in \set{0,1}^n$.
The goal of our simulation is to approximate $\E_y f(y)$ within precision $\epsilon$ with high probability, say  at least $2/3$.

\sect{\Lone{Clustered circuits}}%
Any QC algorithm $(C,f)$ can be represented by a tensor network $(G,\A)$ consisting of a directed graph $G(E,V)$ and a collection of tensors $\A=\{A(v): v \in V\}$.
The vertices $V$ of $G$ represent individual gates (denoted by~$\Box$), input qubits (denoted by~$\lhd$), and observables (denoted by~$\rhd$) as shown in \cref{fig:overall} (Phase~2), whereas the flow of qubits is encoded by the directed edges $E$ of $G$.
Note that each gate vertex~$\Box$ has the same in- and out-degree
(i.e., the same number of incoming and outgoing edges)
whereas~$\lhd$ vertices only have outgoing edges and~$\rhd$ vertices only have incoming edges.
For each $v \in V$, $A(v)$ is a tensor that encodes the matrix entries of the corresponding gate, state, or observable, and the value $T(G,\A)$ of the tensor network $(G,\A)$ coincides with the output expectation of the corresponding $(C,f)$ algorithm, i.e.,
\begin{equation}
  T(G,\A) = \E_y f(y).
\end{equation}
See the Supplemental Material for more details.

A QC algorithm $(C,f)$  is \emph{$(K,d)$-clustered} if its tensor network $(G,\A)$ has the following structure.
Setting the final observable $O_f$ aside, we partition the remaining vertices of $G$ into clusters $S_1, \dotsc, S_r$ and let $g$ be the $(r+1)$-vertex multi-graph obtained by contracting each cluster to a single vertex.
Let $K$ be the number of edges in $g$ minus the in-degree of $O_f$ (intuitively, $K$ is the total number of qubits communicated between clusters)
and let $d$ be the number of qubits sufficient for simulating each cluster.

While finding the minimal $d$ can be non-trivial (especially if qubits can be recycled after measurement),
a good estimate of $d$ is $\max_{i}d(S_i)$ where $d(S_i)$ is the out-degree of cluster $S_i$.
This is a valid upper bound on the minimal $d$ since $d(S_i)$ is the number of~$\lhd$ vertices in $S_i$ plus the number of incoming edges to $S_i$, which upper bounds the total number of qubits required to simulate $S_i$.

For example, in \cref{fig:overall} (Phase~2), two parts of a partition $\{S_1, S_2\}$ are indicated by blue and orange, respectively. Since only one qubit is sent from $S_1$ to $S_2$, $K=1$.
We have $d(S_1) = 3$ due to two outgoing edges from $S_1$ to $O_f$ and one from $S_1$ to $S_2$. Similarly, $d(S_2)=2$ and thus we can take $d=3$. This circuit is hence $(1,3)$-clustered.

Our framework generally allows for more flexibility in decomposing quantum circuits compared to~\cite{BSS16}. Consider the $2n$-qubit example in \cref{fig:tensornet}. Assuming each block $B_i$ with \CR{depth $D$} is dense, i.e., contains two-qubit gates between all pairs of qubits,
any partition of the initial $2n$ qubits induces at least $\Omega(n)$ gates across the parties and thus requires $\Omega(n)$ qubits of communication to implement the circuit.
However, this is a $(2, n+1)$-clustered circuit with only two qubits of communication between the blue and orange clusters in \cref{fig:tensornet}. \CR{Furthermore, the depth is reduced from $3D$ to $2D$ when simulating each cluster separately.}

\begin{figure}
\begin{tikzpicture}[semithick, > = latex,
	gate/.style = {fill = white, draw, inner sep = 6pt},
		pt/.style = {draw, fill = white, inner sep = 3pt},
		tr/.style = {isosceles triangle, inner sep = 0pt, minimum width = 8pt, isosceles triangle apex angle = 70},
  	me/.style = {tr, draw, fill = white},
		cr/.style = {tr, draw, fill = white, shape border rotate = 180},
		bd/.style = {draw = none, rounded corners = 2pt},
		lo/.style = {orange!20},
		lb/.style = {blue!20}
	]

\newcommand{\meas}[2]{
  \def\r{0.1}
  \def\AR{0.12}

  \def \HM{0.35}
  \draw[gate] (#1-0.5*\HM,#2-0.5*\HM+0.1*\HM)
    rectangle (#1+0.5*\HM,#2+0.5*\HM-0.1*\HM);

  \draw[very thin,->] (#1,#2-\r/2-0.05) -- (#1+\AR,#2+\AR);
  \draw[thin] (#1+\r,#2-\r/2-0.05) arc [start angle = 0, end angle = 180, radius = \r];
}

  \def\H{0.4}
  \def\W{1}
  \def\G{8}

\def\d{0.02}

\fill[bd,lb]
     (-0.7*\W,    4.5*\H+\d)
     -- (1.75*\W+\d, 4.5*\H+\d)
     -- (1.75*\W+\d, 3.35*\H+\d)
     -- (3.25*\W-\d, 3.35*\H+\d)
     -- (3.25*\W-\d, 4.5*\H+\d)
  -- ( 5.5*\W, 4.5*\H+\d)
	-- ( 5.5*\W, 8.5*\H)
	-- ( -0.7*\W,    8.5*\H) -- cycle;

\fill[bd,lo]
     (-0.7*\W,    0.5*\H)
  -- ( 5.5*\W, 0.5*\H)
	-- ( 5.5*\W, 4.5*\H-\d)
	-- ( 3.25*\W+\d,    4.5*\H-\d)
	-- ( 3.25*\W+\d,    3.35*\H-\d)
	-- (1.75*\W-\d, 	3.35*\H-\d)
	--(1.75*\W-\d, 4.5*\H-\d)
	-- (-0.7*\W,    4.5*\H-\d) -- cycle;

  \foreach \x in {1,2,3,5,6,7,\G} {
    \draw (0,\x*\H) -- (5.2*\W,\x*\H);
    \meas{5.2*\W}{\x*\H}
  }

 \draw (0, 4*\H) -- (1*\W, 4*\H);

  \draw[red] (1*\W, 4*\H) -- (2*\W, 4*\H);

  \draw[red] (3*\W, 4*\H) -- (4*\W, 4*\H);

 \draw (4, 4*\H) -- (5.2*\W, 4*\H);
 \meas{5.2*\W}{4*\H};

  \node at (-0.3,\H) {\small$q_{2n}$};
  \node at (-0.3,4*\H) {\small$q_{n+1}$};
  \node at (-0.3,5*\H) {\small$q_{n}$};
  \node at (-0.3,8*\H) {\small$q_{1}$};
  \node at (-0.3, 2.8*\H) {\small$\vdots$};
  \node at (-0.3, 6.8*\H) {\small $\vdots$};

  \def\w{0.8}

  \draw[gate] (0.5*\W,0.6*\H) rectangle (1.5*\W,4.4*\H);
  \draw[gate] (2*\W,3.6*\H) rectangle (3*\W,\G.4*\H);
  \draw[gate] (3.5*\W,0.6*\H) rectangle (4.5*\W,4.4*\H);

  \node at (1*\W, 2.5*\H) {$B_1$};
  \node at (2.5*\W, 6*\H) {$B_2$};
  \node at (4*\W, 2.5*\H) {$B_3$};

\end{tikzpicture}
\caption{A $(2, n+1)$-clustered circuit with three dense blocks.
While any partition of its qubits induces $\Omega(n)$ gates between different parts,
merging blocks $B_1$ and $B_3$ into a single cluster results in only two qubits communicated between the two clusters. \CR{In this example, the size and the depth of the circuit after clustering are both reduced compared to the original circuit.}}
\label{fig:tensornet}
\end{figure}

\sect{\Lone{Cluster simulation scheme}}%
To fulfill the above intuition, for any $(K,d)$-clustered circuit, we need to show how (i) each cluster can be simulated on a $d$-qubit quantum machine and (ii) how to simulate the interaction among clusters.
We design an edge-cutting procedure to decompose tensor networks as shown in \cref{fig:overall} (Phase~3).
In particular, we replace each edge, modeled as a perfect channel for communicating a qubit, by a collection of tensor networks that reproduce the inter-cluster communication by operations within each cluster.
This, however, comes at a cost of having to average over several runs.
(In the spirit of~\cite{BSS16}, this technique can be thought of as ``virtual quantum communication.'')

\begin{lemma}\label{lem:edgecut}
Let $(G(E,V),\A)$ be a tensor network of a QC algorithm. For any edge $e \in E$,
\begin{equation} \label{eqn:edge}
  T(G,\A) = \sum_{i=1}^8 c_i T(G',\A_i),
\end{equation}
where $G'$ differs from $G$ by removing $e$ and adding one $\lhd$ and one $\rhd$ vertex, each $c_i \in \set{-\frac{1}{2},\frac{1}{2}}$, and each $(G',\A_i)$ corresponds to a valid quantum circuit.
\end{lemma}
(All proofs in this Letter are deferred to the Supplemental Material~\footnote{See Supplemental Material for more details about (1) the precise definition of tensor network corresponding to clustered circuits; (2) the proofs of Lemma 1, Theorem 1, 2, 3; (3) the details of experiments about VQE, which includes Refs.~\cite{Hoeffding,miller1973symmetry,mitarai2019constructing,RevModPhys.82.277}.}.) By repeating this process and deleting more edges, the tensor network can eventually be partitioned into individual clusters. Each cluster will only have outgoing edges to $O_f$ and can hence be simulated by a $d$-qubit quantum computer plus classical processing of the measurement outcomes (Phase 4 in \cref{fig:overall}).
We combine individual simulation results by a simple sampling procedure according to Eq. (\ref{eqn:edge}).

Our overall simulation scheme consists of several iterations of the following steps: (i)~producing a classical description of a quantum circuit with $O(m)$ gates and $d$ qubits (potentially recycled during the circuit), (ii)~running this circuit on $\ket{0}\xp{d}$, and (iii)~classically post-processing the measurement outputs.
The final step has to produce
with probability at least 2/3
an $\epsilon$-approximation of $T(G,\A)$.

The complexity of our scheme scales with the cluster parameters $(K,d)$ as well as the total number of qubits $n$ and gates $m$ in the original circuit $C$, and the desired additive simulation accuracy $\epsilon$.
The total classical and quantum running time of our simulator is $O(Q\poly(n+m)/\epsilon^2)$, for some exponentially scaling parameter $Q$.
For simplicity, we ignore the polynomial part of the run-time and call this a \emph{$(Q,d)$-simulator}. A fully classical simulator is thus a $(2^{O(n)},0)$-simulator,
while a scalable quantum computer is a $(1,n)$-simulator with an exponentially improved total run-time.
Our result can be deemed as a smooth trade-off between these two extreme cases.

\begin{theorem}\label{thm:main}
Any QC algorithm $(C,f)$ with a $(K,d)$-clustered circuit $C$ has a $\of{2^{O(K)},d}$-simulator.
The total classical and quantum running time of this simulator is $O(2^{4K}(n+m)/\epsilon^2)$, where $n$ and $m$ are the total number of qubits and gates in $C$, and $\epsilon$ is the desired accuracy.
\end{theorem}
\CR{In the special case when there are only two clusters, the number of qubits
  $K$ communicated among the clusters can be regarded as an upper bound on
  entanglement. Hence, the result relates the classical computation cost to the
  entanglement between the two clusters.}

The efficiency of our simulation can be further improved for special classes of post-processing functions $f: \set{0,1}^n \to [-1,1]$. For example, consider \emph{decomposable} $f$ satisfying $f(y) = \prod_{j=1}^r f_j(y_j)$, where $y = y_1 \dots y_r$ is a partition of the original $n$-bit string $y$ into substrings $y_j$ that correspond to outputs of different clusters~\footnote{We have assumed that the number of terms in the decomposition of $f$ agrees with the number of clusters $r$. If some cluster does not produce a qubit that feeds directly into the final observable, we can insert a fictitious function $f_j$ in the decomposition ($f_j$ has no arguments and is identically equal to $1$).}, and $f_j(y_j) \in [-1,1]$.
Typical examples of such decomposable functions arise from Pauli observables in VQE~\cite{kandala2017hardware} or estimating probabilities of specific output strings~\cite{pashayan2017estimation}.
For such functions, we can replace $O_f$ by smaller tensors $O_{f_j}$ and include them in the corresponding clusters $S_j$.
As a result, the induced graph $g$ no longer contains $O_f$.
Nevertheless, we can still apply Lemma~\ref{lem:edgecut} to decompose each cluster and simulate it on a $d$-qubit quantum machine.
However, inspired by~\cite{MS08}, a more efficient scheme for combining individual simulations is now possible.
Its complexity depends on $\cc(g)$---the \emph{contraction complexity} of $g$---that is the minimum (over all possible contraction orders) of the maximum node degree during the procedure of contracting the graph to a single vertex.

\begin{theorem}\label{thm:Treewidth}
Any QC algorithm $(C,f)$ with a $(K,d)$-clustered circuit $C$, a decomposable function $f$, and induced graph $g$ has a $\of{2^{O(\cc(g))},d}$-simulator.
\end{theorem}

Note that $\cc(g) \le K$, where $K$ is the number of edges in $g$.
Compared to $2^{O(K)}$ in \cref{thm:main}, the factor $2^{O(\cc(g))}$ in \cref{thm:Treewidth} is a significant improvement for some families of graphs. For example, among constant-degree graphs with $n$ nodes, $\cc(g)=O(1)$ for trees and $\cc(g) = O(\sqrt{n})$ for planar graphs~\cite{bodlaender1994tourist}, while $K$ can be as large as $O(n)$.

\sect{\Lone{Application to Hamiltonian simulation}}%
One of the most promising potential applications of our result is the simulation of clustered quantum systems.
Specifically, we consider quantum systems with geometric layouts where each qubit only interacts with $O(1)$ adjacent qubits.
The corresponding interaction graph $G$ (i.e., qubits as vertices and interactions as edges) has constant degree.
Assume further that qubits in $G$ can be grouped into $n$ parties and let $g$ be the induced graph obtained by contracting each party of $G$ to a single vertex.
The Hamiltonian of such a system can be written as a sum of local terms
\begin{equation}
  H = \sum_j H^{(1)}_j + \sum_j H^{(2)}_j, \quad
  \forall i,j: \norm{H_j^{(i)}} \leq 1,
  \label{eq:H}
\end{equation}
where each term acts on at most two qubits and the superscripts $(1)$ and $(2)$
indicate that these qubits belong to a single party or two different parties, respectively.
The \emph{interaction strength} between all parties can be characterized by $h = \sum_{j} \norm{H_j^{(2)}}$.
We are interested in quantum systems with weak interaction strength (e.g., \cref{fig:Hamiltonian}).
Assume that the system is initialized in a product state $\rho = \rho_1 \x \dotsb \x \rho_n$, where $\rho_i$ is an efficiently preparable state of the $i$-th party.
Our goal is to approximate the following \emph{correlation function}: $\Tr \sof[\big]{ (O_1 \x \dotsb \x O_n) e^{-iHt} (\rho_1 \x \dotsb \x \rho_n) e^{iHt} },$
where $t$ is the evolution time and $O_i$ is an efficiently measurable observable of the $i$-th part with eigenvalues in $[-1,1]$.

\begin{theorem}\label{thm:Hamiltonian}
The correlation function of the Hamiltonian $H$ in \cref{eq:H}
can be approximated to accuracy $\epsilon$
by a $\of[\big]{2^{O\of*{(ht)^2\cc(g)/\epsilon}}, d}$-simulator, where
$\cc(g)$ is the contraction complexity of its induced graph $g$,
$h$~is the interaction strength, $t$ the evolution time, and $d$ the number of qubits in the largest party.
\end{theorem}
At a high level, the above result is obtained by applying Theorem~\ref{thm:Treewidth} to Hamiltonian simulation circuits of $e^{-iHt}$ based on the Trotter-Suzuki approximation, but with the following important improvements. To obtain a better estimate of the cluster parameters $(K,d)$, we need to apply Lemma~\ref{lem:edgecut} to trim the tensor network beyond simulating inter-cluster communication, and to conduct a careful analysis of $d$ to allow recycling of qubits. Inspired by~\cite{haah2018quantum}, we also need to improve the naive error analysis  and to obtain an error bound in terms of the interaction strength $h$. 

The exponential dependence on $t$ \footnote{This dependence on $t$ has been subsequently improved by using a $p$-th order product formula ($p>1$). See \cite{Trotter} for details.} seems necessary as suggested by hardness results of classical simulation of quantum circuits (e.g., \cite{TerDiv:J04}).
It was also previously known that a classical algorithm can estimate local observables in time exponential in the size of the light-cone~\cite{Hastings04,Osborne06}, i.e., the number of input qubits that could influence a particular output qubit, resulting in a similar run-time bound. (For Hamiltonian evolution we still have an effective light-cone due to Lieb-Robinson bounds, e.g., \cite{haah2018quantum}.)
Our approach is, however, strictly stronger in the sense that we could estimate correlations across the entire system, something that cannot be achieved by the light-cone argument.

\begin{figure}
\begin{tikzpicture}[thick, > = latex,
  semi/.style = {semithick},
  pt/.style = {circle, draw = black, fill = black, inner sep = 1pt},
  nd/.style = {circle, draw = black, fill = black, inner sep = 0pt}
]

\def\W{4pt}
\def\O{56pt}

\def\d{4pt}
\foreach \t in {0,1,2,3}{
  \fill[blue!20, rounded corners = 2pt] (\O*\t-\d,-\d) rectangle +(8*\W+2*\d,8*\W+2*\d);
}

\draw[red] (8*\W,8*\W) -- (1*\O,8*\W);
\draw[red] (8*\W+1*\O,8*\W) -- (2*\O,8*\W);
\draw[red] (8*\W+2*\O,8*\W) -- (3*\O,8*\W);

\draw[red] (8*\W,6*\W) -- (1*\O, 6*\W);
\draw[red] (8*\W+1*\O,6*\W) -- (2*\O,6*\W);
\draw[red] (8*\W+2*\O,6*\W) -- (3*\O,6*\W);

\foreach \t in {0,1,2,3}{

	\foreach \x in {0,2,6,8}{
  		\foreach \y in {0,2,6,8} {
    		\node[pt] at (\x*\W+\t*\O,\y*\W) {};
  		}
  		\foreach \y in {3,4,5} {
  			\node[nd] at (\x*\W+\t*\O,\y*\W) {};
  		}
  		\foreach \y in {0,6}{
  			\draw (\x*\W+\t*\O,\y*\W) -- (\x*\W+\t*\O,\y*\W+2*\W);
  		}
  }

  \foreach \y in {0,2,6,8}{
  		\foreach \x in {3,4,5}{
  				\node[nd] at (\x*\W+\t*\O,\y*\W) {};
  		}
  		\foreach \x in {0,6}{
  				\draw (\x*\W+\t*\O,\y*\W) -- (\x*\W+2*\W+\t*\O,\y*\W);
  		}
  }

}

\def\L{7pt};


\node at (6*\W,-2.8*\W) {$\sqrt{n} \times \sqrt{n}$ grid};

\def\C{8*\W+\O+2*\L};

\end{tikzpicture}
\caption{Interaction graph $G$ of a local Hamiltonian
with four parties, each a square grid of size $\sqrt{n} \times \sqrt{n}$. Each pair of adjacent parties has a weak interaction, indicated by the red lines.  Since the contraction complexity $\cc(g)$ of the induced graph $g$ and the interaction strength $h$ are both $O(1)$, for short periods of time (e.g., $t=O(1)$), this $4n$-qubit system can be efficiently simulated on an $n$-qubit quantum computer. }
\label{fig:Hamiltonian}
\end{figure}
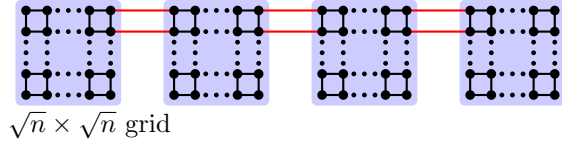

\newcommand{\ttheta}{\theta}
\sect{\Lone{Application to VQE}}%
Variational Quantum Eigensolver (VQE) is a variational method for finding the lowest eigenvalue of an $n$-qubit Hamiltonian $H$ by applying some parameterized circuit $U(\ttheta)$ to $\ket{0}\xp{n}$ and minimizing the expectation with $H$: $ \min_{\ttheta} \; \bra{0}\xp{n} U(\ttheta)^{\dag} H U(\ttheta) \ket{0}\xp{n}.$
\Cb{
This method has been proposed for solving optimization problems on quantum computers \cite{wecker2015progress, mcclean2016theory, farhi2014quantum} and, thanks to its short-depth circuits, has become a promising candidate to surpass the classical optimization methods and show quantum advantage on NISQ devices \cite{preskill2018quantum, peruzzo2014variational, o2016scalable, kandala2017hardware, moll2017quantum}.
}
\newcommand{\Uent}{U_\text{ENT}}

\Cb{
In \cite{kandala2017hardware}, Kandala et al.\ propose a class of \textit{hardware-friendly} variational circuits $U(\ttheta)$ and experimentally demonstrate the effectiveness of their VQE method for addressing problems of small molecules and quantum magnetism, using up to $6$ qubits. Their ansatz $U(\ttheta)$ has the following form:
\begin{equation}
  U(\ttheta) = U_D(\theta_D) \Uent \dotsb U_1(\theta_1) \Uent U_0(\theta_0),
\end{equation}
where $U_i(\theta_i) = \bigotimes_{j=1}^n U_i^j(\theta_i^j)$ and each $U_i^j(\theta_i^j)$ is a parameterized single-qubit gate applied on the $j$-th qubit out of $n$, $\Uent$ is a fixed sequence of two-qubit gates meant for producing entanglement, and $D$ is the number of rounds. }

\Cb{
In the context of current NISQ devices, we propose a way to reduce the number of qubits required for implementing $U(\ttheta)$ by using our cluster simulation scheme.
This involves the following steps:
(i)~choosing a partition $\Part = \set{S_1, \dotsc, S_r}$ of $n$ qubits such that $\abs{S_i} \leq d$ for each $i$;
(ii)~removing some entangling gates from $\Uent$ that go across different parts of $\Part$ to decrease $\cc(g)$, where $g$ is the graph induced by regarding each set $S_i$ as a node and each gate that acts across two sets as an edge;
(iii) runing this $n$-qubit $U(\theta)$ using a $(2^{O(\cc(g))},d)$-simulator.
}

\Cb{
We report an experiment estimating the ground energy of the $\text{BeH}_2$ molecule; see \cref{fig:VQE}. Using a 3-qubit physical device, we run the 6-qubit $U(\ttheta)$ from~\cite{kandala2017hardware} and achieve the similar accuracy, thus demonstrating the potential of implementing VQE with limited quantum memory. Additional details about the experiment and the discussion of reducing $\cc(g)$ can be found in the Supplemental Material.
}

\begin{figure}
\includegraphics[scale=0.45]{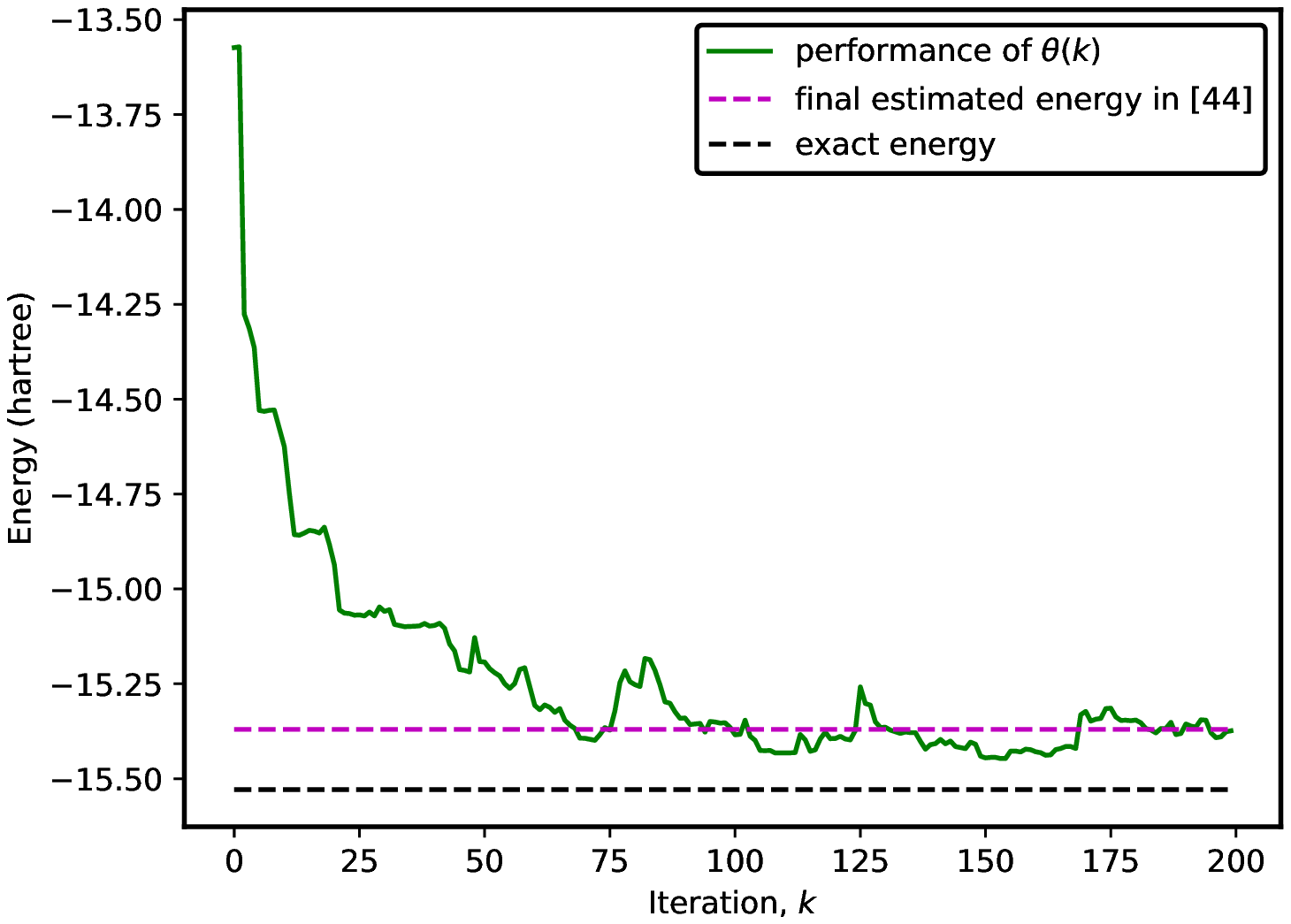}
\caption{
\Cb{
Estimating the ground energy of $\text{BeH}_2$ with interatomic distance of 1.7~\AA{} by running the 6-qubit VQE of \cite{kandala2017hardware} on “ibmq ourense”, a 5-qubit device provided by the IBM Quantum Experience \cite{IBM:20}. We use up to three qubits of the device. At the $k$th step, we employ the iterative optimization to update $\theta(k)$ based on the simultaneous perturbation stochastic approximation (SPSA) method, similar as \cite{kandala2017hardware}. In particular, each run of the 6-qubit ansatz $U(\ttheta_{k})$ with $D=1$ layers is simulated by executing 12 different 3-qubit circuits. See the source code at \url{https://github.com/TianyiPeng/Partiton_VQE}. 
}
}
\label{fig:VQE}
\end{figure}

\sect{\Lone{Summary}}%
In this Letter, we provide a systematic approach for simulating clustered quantum circuits with limited use of quantum memory. \CR{Our scheme is relevant to promising NISQ applications such as Hamiltonian simulation and VQE. By reducing the number of qubits and the depth of the circuit, it is particularly applicable to intermediate scale devices and potentially also improves the circuit's robustness to correlated noise.} We leave open the problem of determining the best $(K,d)$ or $(\cc(g),d)$ for a given quantum circuit (this may be related to the graph partitioning, graph clustering, and treewidth problems). \CR{Another direction is to develop more case-by-case optimization techniques for realistic applications under our scheme.}


\begin{acknowledgements}
We thank Robin Kothari, Shuhua Li, Xiao Yuan, and Yuan Su for helpful discussions. We thank Linsen Li, Kaidong Peng, Yufeng Ye, Zhen Guo for the help on experiments. Part of this work was done while MO and XW were visiting MIT.
XW is supported by NSF CCF-1755800, CCF-1816695 and CCF-1942837. MO acknowledges Leverhulme Trust Early Career Fellowship (ECF-2015-256) and NWO Vidi grant VI.Vidi.192.109 for financial support.  TP acknowledges support from the Top Open program in Tsinghua University, China.
AWH was funded by NSF grants CCF-1452616, CCF-1729369, and  PHY-1818914; ARO contract W911NF-17-1-0433; and
the MIT-IBM Watson AI Lab under the project {\it Machine Learning in Hilbert space}.
\end{acknowledgements}


\bibliographystyle{apsrev4-1}
\bibliography{references}

%
%
%
%
%
%
%
%
%
%
%



\section{Supplemental Material}

\subsection{I: Table of Notation}

\begin{table}[h]
\centering
\begin{tabular}{|c|c|}
\hline
\textbf{Notation} & \textbf{Definition}\\
\hline
$C$ & the quantum circuit\\
\hline
$f$ & the classical post-processing function\\
\hline
\multirow{2}{*}{$y$}
  & the $n$-bit string obtained\\
  & when measuring all qubits\\
\hline
\multirow{2}{*}{$G$}
  & the graph associated with\\
  & the circuit (or the Hamiltonian)\\
\hline
\multirow{2}{*}{$\A$}
  & the set of tensors associated\\
	& with the graph $G$\\
\hline
\multirow{2}{*}{$K$}
  & the number of edges\\
  & between different clusters of $G$\\
\hline
\multirow{2}{*}{$d$}
  & the number of qubits\\
	& sufficient for simulating each cluster\\
\hline
\multirow{2}{*}{$g$}
  & the graph obtained by contracting\\
  & each cluster of $G$ to a single vertex\\
\hline
$\cc(g)$ & the contraction complexity of $g$\\
\hline
\end{tabular}
\end{table}

\subsection{II: Tensor networks}

Any quantum circuit $C$ can be represented by a directed graph $G = (V,E)$ with three types of vertices: input states (denoted by~$\lhd$), quantum gates (denoted by~$\Box$), and observables (denoted by~$\rhd$),
with directed edges indicating the flow of qubits between them.
We can further associate to each vertex $v \in V$ a tensor $A(v)$ that encodes the matrix entries of the object represented by this vertex,
and denote by $\A = \set{A(v) : v \in V}$ the collection of all these tensors.
This results in a tensor network $(G,\A)$ that completely captures the algorithm described by the original quantum circuit $C$~\cite{Vidal03,SDV06,Jozsa,MS08}.
In particular, any QC algorithm $(C,f)$ can also be represented by such a tensor network when we combine the final computational basis measurement and the classical post-processing function $f$ into a diagonal observable $O_f$ (see Figure~1, Phase~2 in the main paper).

Our index convention for $k$-qubit tensors is borrowed from~\cite{MS08}. Consider $\alpha = (\alpha_1, \dotsc, \alpha_k)$ where each $\alpha_j = (\alpha_j^1, \alpha_j^2)$ is a pair of binary indices, i.e., $\alpha_j \in \Pi = \set{0,1}^2$ \footnote{We need two indices per qubit since we are working with an operator rather than a vector basis.}. Let $M(\alpha_j) = \ket{\alpha_j^1} \bra{\alpha_j^2}$ be the elementary matrix corresponding to $\alpha_j$ and let $M(\alpha) = \bigotimes_{j=1}^k M(\alpha_j) = \ket{\alpha_1^1, \dotsc, \alpha_k^1} \bra{\alpha_1^2, \dotsc, \alpha_k^2}$ be its $k$-qubit generalization. For a state $\rho$, gate $U$, and observable $O$, the entries of their tensors are computed as follows:
\begin{align}
  A(\rho)_\alpha
  &= \Tr \sof[\big]{
       \rho \cdot M(\alpha)\ct
     }, \label{eq:AR} \\
  {}_\alpha A(U)_\beta
  &= \Tr \sof[\big]{
       U M(\alpha)
       U^{\dag} \cdot
       M(\beta)\ct
     }, \label{eq:AU} \\
  {}_\beta A(O)
  &= \Tr \sof[\big]{
       M(\beta) \cdot O
     }, \label{eq:AO}
\end{align}
where the location of subscripts indicates whether qubits are incoming or outgoing. For example,
\begin{equation}
  \Tr \sof[\big]{U \rho U\ct O}
  = \sum_{\alpha,\beta} A(\rho)_\alpha \cdot {}_\alpha A(U)_\beta \cdot {}_\beta A(O).
\end{equation}
This corresponds to the following tensor network:
\begin{center}
\begin{tikzpicture}[thick, > = latex,
    gate/.style = {fill = white, draw, text height = 1.5ex, text depth = .25ex, minimum size = 18pt},
    tr/.style = {isosceles triangle, inner sep = 0pt, minimum width = 22pt, isosceles triangle apex angle = 70},
    ob/.style = {tr, draw},
    st/.style = {tr, draw, shape border rotate = 180},
  ]
  \node[st]   (R) at (0,0) {$\rho$};
  \node[gate] (U) at (2,0) {$U$};
  \node[ob]   (O) at (4,0) {$O$};
  \draw[->] (R) -> (U);
  \draw[->] (U) -> (O);
  \node at (1,0.2) {$\alpha$};
  \node at (3,0.2) {$\beta$};
\end{tikzpicture}
\end{center}
where arrows indicate the flow of information.

For a general tensor network $(G,\A)$, its \emph{value} is
\begin{equation}
  T(G,\A) = \sum_{\alpha \in \Pi^E} \prod_{v\in V} A(v)_{\alpha(v)},
  \label{eq:TGA}
\end{equation}
where $\alpha = (\alpha_e : e \in E)$ and each $\alpha_e = (\alpha_e^1, \alpha_e^2) \in \Pi$ ranges over all possible index assignments~\footnote{Equivalently, one can think of each $\alpha \in \Pi^E$ as a function of the form $\alpha: E \to \Pi$.} for edge $e$, and $\alpha(v)$ denotes the restriction of $\alpha$ to edges incident to $v$. In particular, $A(v)_{\alpha(v)}$ is the entry of $A(v)$ when all edges incident to $v$ are labeled according to $\alpha$~\footnote{Our notation in \cref{eq:TGA} ignores the distinction between incoming and outgoing edges, but one has to take this into account when evaluating the tensor entries using \cref{eq:AR,eq:AU,eq:AO}.}. Note that
\begin{equation}
  T(G,\A) = \E_y f(y),
  \label{eq:TGAEf}
\end{equation}
so a quantum circuit can be simulated by approximating the value of its tensor network~\cite{MS08}.

\subsection{III: Proof of \cref{lem:edgecut} \label{proof:lem:edgecut}}

\setcounter{lemma}{0}

\begin{lemma}\label{lem:edgecut}
Let $(G(E, V),\A)$ be a tensor network of a QC algorithm and let $uv$ be any edge in $G$. There is a set of eight coefficients $c_i \in \set{-\frac{1}{2},\frac{1}{2}}$, observables $O_i$, and states $\rho_i$, such that the following modification of the edge $uv$
\begin{center}
  \begin{tikzpicture}[thick, > = latex,
      gate/.style = {fill = white, draw, text height = 1.5ex, text depth = .25ex, minimum size = 18pt},
      pt/.style = {circle, draw = black, fill = black, inner sep = 1.5pt},
      tr/.style = {isosceles triangle, inner sep = 0pt, minimum width = 28pt, isosceles triangle apex angle = 70},
      ob/.style = {tr, draw},
      st/.style = {tr, draw, shape border rotate = 180}
    ]
    \node[pt] (u) at (-4.0,0) {}; \node at (-4.0,0.3) {$u$};
    \node[pt] (v) at (-2.5,0) {}; \node at (-2.5,0.3) {$v$};
    \draw[->] (u) -- (v);
    \node at (-1.7,0) {$\Longrightarrow$};
    \node at (-0.7,0) {$\displaystyle\sum_{i=1}^8 c_i$};
    \node[pt] (U) at (0.0,0) {}; \node at (0,0.3) {$u$};
    \node[ob] (O) at (0.9,0) {$O_i$};
    \node[st] (R) at (2.1,0) {$\rho_i$};
    \node[pt] (V) at (3.0,0) {}; \node at (3,0.3) {$v$};
    \draw[->] (U) -- (O);
    \draw[->] (R) -- (V);
  \end{tikzpicture}
\end{center}
does not affect the value of the overall tensor network. Moreover, each term above yields some tensor network $(G',\A_i)$ that again corresponds to a valid quantum circuit, and their values are related as follows:
\begin{equation}
  T(G,\A) = \sum_{i=1}^8 c_i T(G',\A_i).
\end{equation}
\end{lemma}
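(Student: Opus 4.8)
The plan is to reduce the statement to a single local algebraic identity living on the cut edge, and then to invoke the multilinearity of the tensor network value \cref{eq:TGA}. In that contraction the edge $uv$ enters only through one shared summed index $\alpha_{uv}\in\Pi$. Duplicating this index — i.e.\ replacing $\sum_{\alpha_{uv}}(\cdots)_{\alpha_{uv}}(\cdots)_{\alpha_{uv}}$ by $\sum_{\beta,\gamma\in\Pi}(\cdots)_\beta\,\delta_{\beta\gamma}\,(\cdots)_\gamma$, where $(\cdots)_\beta$ collects all tensor factors incident to $u$ and $(\cdots)_\gamma$ those incident to $v$ (at a fixed assignment of the remaining edges) — leaves the value unchanged. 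The displayed surgery then replaces $\delta_{\beta\gamma}$ by $\sum_{i=1}^8 c_i\,{}_\beta A(O_i)\,A(\rho_i)_\gamma$, where ${}_\beta A(O_i)$ terminates the $u$-edge into the observable $O_i$ and $A(\rho_i)_\gamma$ launches a fresh $v$-edge out of the state $\rho_i$. By multilinearity it therefore suffices to establish the edge identity
\begin{equation}
  \delta_{\beta\gamma} = \sum_{i=1}^8 c_i\,{}_\beta A(O_i)\,A(\rho_i)_\gamma
  \qquad (\beta,\gamma\in\Pi);
  \label{eq:coreid}
\end{equation}
summing over the remaining edges then yields $T(G,\A)=\sum_{i=1}^8 c_i\,T(G',\A_i)$.

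The mechanism behind \cref{eq:coreid} is the ``measure-and-prepare'' resolution of the single-qubit identity map. Starting from the Pauli expansion
\begin{equation}
  \Lambda = \tfrac12\Tr[\Lambda]\,I + \tfrac12\sum_{P\in\set{X,Y,Z}}\Tr[P\Lambda]\,P
  \label{eq:pauli}
\end{equation}
(which uses $\Tr[\Lambda]=\Tr[I\Lambda]$), I turn each coefficient operator into genuine states by decomposing every Pauli into its $\pm1$ eigenprojectors, $P=\ket{\psi_+^P}\bra{\psi_+^P}-\ket{\psi_-^P}\bra{\psi_-^P}$, and writing $I=\ket0\bra0+\ket1\bra1$. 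This produces exactly $2+2+2+2=8$ terms of the form $c_i\,\Tr[O_i\Lambda]\,\rho_i$, where each $O_i\in\set{I,X,Y,Z}$ is a valid observable, each $\rho_i$ is a pure Pauli eigenstate, and each $c_i\in\set{-\tfrac12,\tfrac12}$, the minus signs being supplied by the $-1$ eigenprojectors. Hence $\Lambda=\sum_{i=1}^8 c_i\,\Tr[O_i\Lambda]\,\rho_i$ for every single-qubit operator $\Lambda$.

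It remains to match this operator identity to the doubled binary indices of the tensor network. From \cref{eq:AO,eq:AR} and $M(\beta)=\ket{\beta^1}\bra{\beta^2}$ one computes ${}_\beta A(O_i)=\Tr[O_i M(\beta)]$ and $A(\rho_i)_\gamma=\bra{\gamma^1}\rho_i\ket{\gamma^2}$, so that specializing the channel identity to $\Lambda=M(\beta)$ gives
\begin{equation}
  \sum_{i=1}^8 c_i\,{}_\beta A(O_i)\,A(\rho_i)_\gamma
  = \bra{\gamma^1}\Bigl(\sum_{i=1}^8 c_i\,\Tr[O_i M(\beta)]\,\rho_i\Bigr)\ket{\gamma^2}
  = \bra{\gamma^1}M(\beta)\ket{\gamma^2}
  = \delta_{\beta\gamma},
\end{equation}
which is exactly \cref{eq:coreid}. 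The step I expect to demand the most care is precisely this last bookkeeping: tracking the incoming/outgoing placement of the subscripts and the adjoints in \cref{eq:AR,eq:AU,eq:AO} so that $M(\beta)$ — and not $M(\beta)\ct$ or its transpose — appears, ensuring that the two binary indices of $\beta$ and $\gamma$ contract into the correct $\delta_{\beta\gamma}$ rather than a transposed or conjugated variant. Finally, since each $O_i$ is a Pauli or computational-basis observable and each $\rho_i$ is a pure stabilizer state, every cut network $(G',\A_i)$ is again the tensor network of a legitimate quantum circuit, which establishes the remaining claim.
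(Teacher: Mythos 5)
Your proposal is correct and follows essentially the same route as the paper: both decompose the single-qubit identity channel into the eight measure-and-prepare maps $\Phi_i(\Lambda)=\Tr(\Lambda O_i)\,\rho_i$ obtained from the Pauli expansion together with the eigenprojector splitting of each Pauli, and both reduce the lemma to the local tensor identity $\delta_{\beta\gamma}=\sum_{i=1}^8 c_i\,{}_\beta A(O_i)\,A(\rho_i)_\gamma$ inserted on the cut edge. The only cosmetic difference lies in the final verification: you substitute $\Lambda=M(\beta)$ directly into the channel identity and read off the matrix element $\bra{\gamma^1}M(\beta)\ket{\gamma^2}=\delta_{\beta\gamma}$, whereas the paper recognizes $\sum_{i=1}^8 c_i\,(O_i\otimes\rho_i)$ as the two-qubit SWAP operator---both are one-line checks of the same underlying fact, and your index bookkeeping against the conventions of \cref{eq:AR,eq:AO} is handled correctly.
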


\begin{proof}
Let $A$ be any $2 \times 2$ matrix and let
\begin{align*}
  I &= \smx{1&0\\0&1},  &
  X &= \smx{0&1\\1&0},  &
  Y &= \smx{0&-i\\i&0}, &
  Z &= \smx{1&0\\0&-1}
\end{align*}
be the Pauli matrices. Since the normalized Pauli matrices $\set{I,X,Y,Z}/\sqrt{2}$ form an orthonormal matrix basis,
\begin{equation*}
  A
  = \frac{
    \Tr(A I) I
  + \Tr(A X) X
  + \Tr(A Y) Y
  + \Tr(A Z) Z}{2}.
\end{equation*}

Let us expand each Pauli matrix in its eigenbasis. If we denote the Pauli marices by $O_i$, their eigenprojectors by $\rho_i$ and the corresponding eigenvalues by $2c_i$ so that $O_1 = O_2 = 2 (c_1 \rho_1 + c_2 \rho_2)$, etc., then
\begin{align}
  O_1 &= I, & \rho_1 &= \proj{0}, & c_1 &= +1/2, \label{eq:1} \\
  O_2 &= I, & \rho_2 &= \proj{1}, & c_2 &= +1/2, \\
  O_3 &= X, & \rho_3 &= \proj{+}, & c_3 &= +1/2, \\
  O_4 &= X, & \rho_4 &= \proj{-}, & c_4 &= -1/2, \\
  O_5 &= Y, & \rho_5 &= \proj{{+}i}, & c_5 &= +1/2, \\
  O_6 &= Y, & \rho_6 &= \proj{{-}i}, & c_6 &= -1/2, \\
  O_7 &= Z, & \rho_7 &= \proj{0}, & c_7 &= +1/2, \\
  O_8 &= Z, & \rho_8 &= \proj{1}, & c_8 &= -1/2, \label{eq:8}
\end{align}
where $\ket{\pm} = \frac{\ket{0} \pm \ket{1}}{\sqrt{2}}$ and $\ket{{\pm}i} = \frac{\ket{0} \pm i \ket{1}}{\sqrt{2}}$.

For each case, define a corresponding map
\begin{equation}
  \Phi_i(A) = \Tr(A O_i) \rho_i
\end{equation}
that can be implemented by first measuring the Pauli observable $O_i$ and then preparing the corresponding eigenstate $\rho_i$. It follows immediately that
\begin{equation}
  A = \sum_{i=1}^8 c_i \Phi_i(A).
  \label{eq:sum}
\end{equation}

If $\Phi$ denotes the trivial single-qubit channel that acts as $\Phi(A) = A$, for any $2 \times 2$ matrix $A$, then \cref{eq:sum} is equivalent to
\begin{equation}
  \Phi = \sum_{i=1}^8 c_i \Phi_i.
  \label{eq:Phi}
\end{equation}
The same decomposition works even when $\Phi$ acts on a subsystem of a larger system.

To prove \cref{lem:edgecut}, note that adding the trivial map $\Phi$ on any edge of $G$ does not affect the value of the tensor network $(G,\A)$. Thus our strategy for constructing the decomposition $\set{(G',\A_i)}_{i=1}^8$ is to explicitly include the trivial map $\Phi$ on the edge $uv$, and then decompose $\Phi$ according to \cref{eq:Phi}. Intuitively, we are cutting the edge $uv$ and mending its both ends with additional tensors in several different ways. This gives us a collection of eight new tensor networks $(G',\A_i)$ that each correspond again to a quantum circuit: the \mbox{$i$-th} circuit measures the qubit produced at vertex $u$ with observable $O_i$ and supplies the qubit required at vertex $v$ in state $\rho_i$.

To guarantee that the modified tensor network has the same value as the original, we have to verify that
\begin{center}
\begin{tikzpicture}[thick, > = latex,
    gate/.style = {fill = white, draw, text height = 1.5ex, text depth = .25ex, minimum size = 18pt},
    pt/.style = {circle, draw = black, fill = black, inner sep = 1.5pt},
    tr/.style = {isosceles triangle, inner sep = 0pt, minimum width = 28pt, isosceles triangle apex angle = 70},
    ob/.style = {tr, draw},
    st/.style = {tr, draw, shape border rotate = 180}
  ]
  \node (u) at (-4.3,0) {}; \node at (-4.0,0.3) {$\alpha$};
  \node (v) at (-2.2,0) {}; \node at (-2.5,0.3) {$\beta$};
  \node (mm) at (-3.44, 0) {};
  \draw[->] (u) -- (mm);
  \draw[->] (mm) -- (v);
  \node[gate] (I) at (-3.25,0) {$I$};
  \node at (-1.7,0) {$=$};
  \node at (-0.7,0) {$\displaystyle\sum_{i=1}^8 c_i$};
  \path (U) coordinate (0,0); \node at (0.3,0.3) {$\alpha$};
  \node[ob] (O) at (0.9,0) {$O_i$};
  \node[st] (R) at (2.1,0) {$\rho_i$};
  \path (V) coordinate (3,0); \node at (2.7,0.3) {$\beta$};
  \draw[->] (U) -- (O);
  \draw[->] (R) -- (V);
\end{tikzpicture}
\end{center}
where on the left-hand side we use conjugation by the identity matrix $I$ as a way of implementing the trivial channel $\Phi$. Following the notation of \cref{eq:AR,eq:AU,eq:AO}, this is equivalent to verifying
\begin{equation}
  {}_\alpha A(I)_\beta = \sum_{i=1}^8 c_i \cdot {}_\alpha A(O_i) \cdot A(\rho_i)_\beta.
  \label{eq:Iab}
\end{equation}
According to \cref{eq:AO}, the left-hand side is
\begin{equation}
  {}_\alpha A(I)_\beta
  = \Tr \sof[\big]{ M(\alpha) M(\beta)\ct } = \delta_{\alpha,\beta}.
\end{equation}
From \cref{eq:AR,eq:AO}, the right-hand side is
\begin{equation}
  \sum_{i=1}^8 c_i
  \cdot \Tr \sof[\big]{ M(\alpha) O_i }
  \cdot \Tr \sof[\big]{ \rho_i M(\beta)\ct }.
\end{equation}
We can also write this as
\begin{equation}
  \Tr \sof[\big]{ \of[\big]{M(\alpha) \x M(\beta)\ct} \cdot S }
\end{equation}
where $S = \sum_{i=1}^8 c_i (O_i \x \rho_i)$. Note from \cref{eq:1} to \cref{eq:8} that
\begin{align}
  S &= \frac{1}{2} \of{ I \x I + X \x X + Y \x Y + Z \x Z} \\
    &= \mx{
         1 & 0 & 0 & 0 \\
         0 & 0 & 1 & 0 \\
         0 & 1 & 0 & 0 \\
         0 & 0 & 0 & 1
       }
\end{align}
is the two-qubit SWAP operator, so
\begin{align}
  \Tr \sof[\big]{ \of[\big]{M(\alpha) \x M(\beta)\ct} \cdot S }
  &= \Tr \sof[\big]{ M(\alpha) M(\beta)\ct } \\
  &= \delta_{\alpha,\beta}
\end{align}
which proves \cref{eq:Iab}.
\end{proof}

\subsection{Hoeffding's inequality}

Our proofs of \cref{thm:main,thm:Treewidth} rely on the convergence speed of averaged i.i.d.\ random variables. They use the following inequality from \cite{Hoeffding}, which we state here for the sake of reference.
\begin{claim}
Suppose $X_1,\dotsc,X_N$ are i.i.d.\ real random variables such that $|X_i| \leq a$ and let $\bar{X} = (X_1 + \dotsb + X_N) / N$. Then
\begin{equation*}
  \Pr \sof*{\abs*{\bar{X} - \E[\bar{X}]} < \delta}
  \geq 1 - 2 \exp \of*{-\frac{N\delta^2}{2a^2}}
  > \frac{2}{3},
\end{equation*}
where the last inequality holds if $N \geq 4 a^2 / \delta^2$.
\end{claim}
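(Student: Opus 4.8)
The plan is to prove this by the standard exponential-moment (Chernoff) method, treating the two tails separately. First I would center the variables, setting $Y_i = X_i - \E[X_i]$, so that each $Y_i$ has mean zero and, because $X_i \in [-a,a]$, lies in an interval of width at most $2a$; moreover $\bar{X} - \E[\bar{X}] = \bar Y$ where $\bar Y = (Y_1 + \dotsb + Y_N)/N$. The target probability then becomes $\Pr \sof*{\abs*{\bar Y} \geq \delta}$, which a union bound splits into the two one-sided tails $\Pr \sof*{\bar Y \geq \delta}$ and $\Pr \sof*{\bar Y \leq -\delta}$.

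For the upper tail I would fix any $s > 0$, apply Markov's inequality to the nonnegative variable $e^{sN\bar Y}$, and use independence to factor the moment generating function:
\begin{equation*}
  \Pr \sof*{\bar Y \geq \delta}
  \leq e^{-sN\delta} \, \E \sof*{e^{sN\bar Y}}
  = e^{-sN\delta} \prod_{i=1}^N \E \sof*{e^{sY_i}}.
\end{equation*}
The crucial ingredient is Hoeffding's lemma: for a mean-zero random variable $Y$ supported on an interval of width $w$ one has $\E \sof*{e^{sY}} \leq \exp \of*{s^2 w^2 / 8}$. Applying it with $w = 2a$ gives $\E \sof*{e^{sY_i}} \leq \exp \of*{s^2 a^2 / 2}$, hence $\Pr \sof*{\bar Y \geq \delta} \leq \exp \of*{-sN\delta + N s^2 a^2 / 2}$. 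Minimizing the exponent over $s$, whose optimum sits at $s = \delta / a^2$, yields $\Pr \sof*{\bar Y \geq \delta} \leq \exp \of*{-N\delta^2 / (2a^2)}$. By symmetry the lower tail satisfies the same bound, so the union bound gives $\Pr \sof*{\abs*{\bar Y} \geq \delta} \leq 2 \exp \of*{-N\delta^2 / (2a^2)}$, which is exactly the claimed inequality after passing to the complementary event.

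I expect the main obstacle to be Hoeffding's lemma itself, the only genuinely nontrivial step. The standard argument bounds $e^{sy}$ on the supporting interval by the chord joining its endpoint values (convexity of $\exp$), takes the logarithm of the resulting expectation to obtain a cumulant function $\psi(s)$, and shows via Taylor's theorem that $\psi(0) = \psi'(0) = 0$ while $\psi''(s) \leq w^2/4$ uniformly; integrating twice gives $\psi(s) \leq s^2 w^2 / 8$. The remaining claim $1 - 2\exp \of*{-N\delta^2/(2a^2)} > 2/3$ is then pure arithmetic: when $N \geq 4a^2/\delta^2$ the exponent satisfies $N\delta^2/(2a^2) \geq 2$, so that $2\exp \of*{-N\delta^2/(2a^2)} \leq 2e^{-2} < 1/3$, completing the bound.
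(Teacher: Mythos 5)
Your proof is correct: the centering step, the Chernoff bound with independence factoring the moment generating function, Hoeffding's lemma applied at width $w = 2a$, the optimization at $s = \delta/a^2$ giving the exponent $-N\delta^2/(2a^2)$, and the closing arithmetic (for $N \geq 4a^2/\delta^2$ the exponent is at least $2$, and $2e^{-2} < 1/3$) all check out and reproduce the stated constants exactly. Note that the paper itself does not prove this claim at all---it states it purely for reference, citing Hoeffding's original paper---and your argument is precisely the standard moment-generating-function derivation found in that source, so there is no methodological divergence to report.
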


\subsection{IV: Proof of \cref{thm:main}}

\setcounter{theorem}{0}

\begin{theorem}\label{thm:main}
Any QC algorithm $(C,f)$ with a $(K,d)$-clustered circuit $C$ consisting of $r$ clusters can be simulated with precision $\epsilon$ by running a $d$-qubit quantum device $O\left(2^{4K} r/\epsilon^2\right)$ times.
The total classical and quantum run-time of this simulation is $O\left(2^{4K}(n+m)/\epsilon^2\right)$ where $n$ and $m$ are the number of qubits and gates in $C$, respectively.
\end{theorem}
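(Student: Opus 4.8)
The plan is to combine the edge-cutting identity from Lemma~\ref{lem:edgecut} with a Monte Carlo sampling scheme, and then to bound the sampling error via Hoeffding's inequality. First I would apply Lemma~\ref{lem:edgecut} simultaneously to each of the $K$ edges crossing between clusters. Each single edge-cut replaces one edge by a sum of $8$ terms indexed by $i \in \set{1,\dotsc,8}$, with coefficients $c_i \in \set{-\tfrac12, +\tfrac12}$; cutting all $K$ inter-cluster edges therefore expresses $T(G,\A)$ as a sum over a multi-index $\vec\imath = (i_1,\dotsc,i_K) \in \set{1,\dotsc,8}^K$ of terms $C_{\vec\imath}\, T(G', \A_{\vec\imath})$, where $C_{\vec\imath} = \prod_{k=1}^K c_{i_k}$ satisfies $|C_{\vec\imath}| = 2^{-K}$. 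The key structural point is that once every crossing edge is cut and replaced by an observable--state pair, the resulting tensor network $G'$ is \emph{disconnected} into the $r$ clusters: each cluster is now a self-contained quantum circuit on at most $d$ qubits (its original qubits plus the freshly prepared $\rho_{i}$ states on the cut ends, which by the $(K,d)$-clustered hypothesis still fit within $d$ qubits), and its value factorizes. Hence $T(G',\A_{\vec\imath}) = \prod_{\ell=1}^r t_\ell(\vec\imath)$, where each factor $t_\ell(\vec\imath)$ is the value of a $d$-qubit circuit that the quantum device can estimate.

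Next I would recast this exponentially large sum as an expectation amenable to sampling. Writing $C_{\vec\imath} = 2^{-K}\operatorname{sign}(\vec\imath)$ with $\operatorname{sign}(\vec\imath) = \prod_k \operatorname{sign}(c_{i_k}) \in \set{-1,+1}$, and noting there are $8^K$ multi-indices, I would draw $\vec\imath$ uniformly at random and set the random variable
\begin{equation*}
  X = 8^K \cdot 2^{-K} \operatorname{sign}(\vec\imath)\, T(G',\A_{\vec\imath}) = 4^K \operatorname{sign}(\vec\imath) \prod_{\ell=1}^r t_\ell(\vec\imath),
\end{equation*}
so that $\E[X] = T(G,\A) = \E_y f(y)$ by \cref{eq:TGAEf}. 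To evaluate a single sample of $X$, the quantum device runs each of the $r$ cluster circuits (on the sampled observables and input states) and the products are assembled classically. Since each $t_\ell(\vec\imath)$ is the expectation of a bounded observable (the Pauli measurement outcomes and the post-processing $f$ are normalized so that each factor lies in $[-1,1]$), we have $|X| \leq 4^K$, giving the bound $a = 4^K = 2^{2K}$ in the Hoeffding claim.

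Finally I would invoke the Hoeffding inequality quoted above. To achieve precision $\epsilon$ with constant success probability it suffices to take $N \geq 4a^2/\epsilon^2 = 4 \cdot 2^{4K}/\epsilon^2 = O(2^{4K}/\epsilon^2)$ independent samples of $X$. Each sample requires simulating the $r$ clusters once on the quantum device, which accounts for the stated $O(2^{4K} r/\epsilon^2)$ device runs. For the run-time bound, each device run of a cluster costs time proportional to the number of gates in that cluster, and summing over all clusters and the $O(2^{4K}/\epsilon^2)$ rounds gives $O(2^{4K} m/\epsilon^2)$ for the gate simulation; the classical overhead of preparing the $K$ random indices, assembling the $r$ products, and applying $f$ contributes the additional $O(2^{4K} n/\epsilon^2)$, yielding the total $O\!\left(2^{4K}(n+m)/\epsilon^2\right)$.

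The main obstacle I anticipate is the careful bookkeeping needed to verify that the cut-and-reconnect operation genuinely \emph{factorizes} the network into $r$ independent $d$-qubit circuits and that each cluster's estimated quantity is correctly normalized so that $|X| \le 4^K$ holds tightly; in particular one must check that preparing fresh eigenstates $\rho_i$ on the $K$ dangling edges does not inflate any cluster beyond $d$ qubits, which is exactly what the $(K,d)$-clustered definition must be set up to guarantee. A secondary subtlety is confirming that each cluster factor $t_\ell$ is individually estimable on a single $d$-qubit device run and that the variance-free combination via Hoeffding—rather than a naive product-of-estimators argument—is the correct way to control the error, since the factors share the randomness of $\vec\imath$.
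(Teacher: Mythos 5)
Your overall scheme---cut all $K$ crossing edges via \cref{lem:edgecut}, sample the multi-index uniformly, rescale by $8^K$, and apply Hoeffding with $a = 8^K \cdot 2^{-K} = 2^{2K}$, giving $N = O(2^{4K}/\epsilon^2)$ samples and $O(2^{4K}r/\epsilon^2)$ device runs---is exactly the paper's, and your run-time accounting also matches. But there is one genuine gap: your ``key structural point'' that cutting the $K$ inter-cluster edges disconnects $G'$ into $r$ independent circuits, so that $T(G',\A_{\vec\imath}) = \prod_{\ell=1}^r t_\ell(\vec\imath)$, is false for the general $f$ of \cref{thm:main}. The final observable $O_f$ encodes an arbitrary, in general \emph{non-decomposable}, post-processing function and is attached to every cluster; the edges feeding into $O_f$ are not in $\Ecut$ and are never cut, so after cutting, $(G',\A_s)$ remains connected through $O_f$ and its value is not a product of per-cluster values. (The paper makes this point explicitly in a footnote to the proof of \cref{thm:Treewidth}: factorization is precisely what the decomposability hypothesis $f(y)=\prod_j f_j(y_j)$ of that theorem buys, and it is not available here.) As written, your estimator $X = 4^K \operatorname{sign}(\vec\imath)\prod_\ell t_\ell(\vec\imath)$ is not even well defined in general, since a cluster containing ``its share'' of $O_f$ has no standalone value $t_\ell$.

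The fix, which is what the paper does, is to keep $f$ joint and go single-shot: the \emph{quantum} parts do separate operationally (each cluster's inputs are $\ket{0}$'s or the fixed Pauli eigenstates $\rho_{s_e}$, its outputs are Pauli or standard-basis measurements), because $O_f$ is diagonal and hence purely classical. One runs each of the $r$ pieces once, collects the Pauli outcomes $\sigma^e_{s_e}\in\set{-1,1}$ and the standard-basis strings $y_1,\dotsc,y_r$, and forms the single scalar $t_s = f(y_1,\dotsc,y_r)\prod_{e\in\Ecut}\sigma^e_{s_e}$; since the cluster runs are independent, $\E[t_s] = T(G',\A_s)$, and $|t_s|\leq 1$, so $|8^K c_s t_s| \leq 2^{2K}$ and Hoeffding applies directly to the i.i.d.\ samples of $8^K c_s t_s$. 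This also resolves the ``secondary subtlety'' you flag: no per-cluster expectation is ever estimated---one multiplies raw $\pm 1$ outcomes and evaluates $f$ once per sample---so there is no product-of-estimators error propagation at all. Your remaining worry about the freshly prepared $\rho_i$ inflating a cluster beyond $d$ qubits is a non-issue by definition: in the paper's setup, $d$ is the number of qubits sufficient for simulating each cluster, cut ends included.
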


\begin{proof}
From the discussion leading up to \cref{eq:TGAEf} we know that simulating a quantum circuit is equivalent to approximating $T(G,\A)$, the value of its tensor network given in \cref{eq:TGA}. Let $\Ecut$ denote the set of edges that connect different clusters of $G$. Since $G$ is $(K,d)$-clustered, $\abs{\Ecut} = K$. Let us decompose each edge $e \in \Ecut$ according to \cref{lem:edgecut} and denote the modified graph by $G'$. If we collect all indices appearing in this decomposition into a single string $s = (s_e : e \in \Ecut)$, where $s_e \in [8] = \set{1,\dotsc,8}$ represents the index $i$ of edge $e \in \Ecut$, then
\begin{align}
  T(G,\A) &= \sum_{s \in [8]^{\Ecut}} c_s T(G',\A_s), \label{eq:decomp} \\
  c_s &= \prod_{e \in \Ecut} c^e_{s_e}, \label{eq:cs}
\end{align}
where $c^e_1, \dotsc, c^e_8 \in \set{-\frac{1}{2}, \frac{1}{2}}$ are the coefficients $c_i$ from \cref{lem:edgecut}. Regarding $s$ as a random variable sampled uniformly from $[8]^{\Ecut}$,
\begin{equation}
  T(G,\A) = \E_s \sof[\big]{8^K c_s T(G',\A_s)}.
  \label{eq:Es}
\end{equation}

To estimate the value of $T(G',\A_s)$, for a given $s \in [8]^{\Ecut}$, recall from \cref{lem:edgecut} that each $(G',\A_s)$ is also a quantum circuit. Moreover, since $G$ is $(K,d)$-clustered, each $(G',\A_s)$ consists of $r$ pieces (not counting the final observable $O_f$), each of which can be executed on a $d$-qubit quantum computer. Our simulator runs these pieces sequentially. To obtain the correct result, we need to initialize and measure all qubits of each piece in the right basis. Our strategy is as follows. We initialize an input qubit of a given piece to $\ket{0}$ if it corresponds to an input qubit also in the original circuit. Similarly, we measure an output qubit in the standard basis if it is originally fed into the final observable $O_f$. The remaining input and output qubits are supposed to be communicated between different clusters. In the simulation, however, we respectively initialize them to Pauli eigenstates and measure using Pauli observables determined by $s$. Specifically, if $e \in \Ecut$ is an edge between two clusters, its endpoints correspond to a pair of input / output qubits that are initialized / measured according to the value of $s_e \in [8]$ (see the proof of \cref{lem:edgecut}).

After executing all pieces as described above, we obtain a collection of Pauli measurement outcomes $\sigma^e_{s_e} \in \set{-1,1}$, one for each $e \in \Ecut$, and a collection of strings $y_1, \dots, y_r$ of total length $n$~\footnote{Some of these strings might be empty. This happens when all outgoing qubits from some cluster are fed into other clusters and none are fed directly into the final measurement.}. We can now compute the following quantity by evaluating the classical post-processing function $f$:
\begin{equation}
  t_s = f(y_1, \dots, y_r)\prod_{e\in \Ecut} \sigma^e_{s_e}.
\end{equation}
Note that, for each $s \in [8]^{\Ecut}$, $t_s \in [-1,1]$ is a random variable with expectation $T(G',\A_s)$. Hence, using \cref{eq:Es} we can express the value of $T(G,\A)$ as an expectation over a combined random variable $(s,t_s)$, where $s$ is drawn uniformly from $[8]^{\Ecut}$ and $t_s$ is obtained by running the corresponding quantum circuit as described above:
\begin{equation}
  T(G,\A) = \E_{(s,t_s)} \sof{8^K c_s t_s}.
\end{equation}

Our simulation approximates $T(G,\A)$ by producing $N$ samples $\set{(s_1, t_{s_1}), \dotsc, (s_N, t_{s_N})}$ of the random variable $(s, t_s)$ and computing their average $\frac{1}{N} \sum_{i=1}^N 8^{K} c_{s_i} t_{s_i}$. The magnitude of each term is bounded by
\begin{equation}
  |8^{K}c_{s_i}t_{s_i}| \leq 8^{K} \frac{1}{2^{K}} = 2^{2K} = a,
\end{equation}
where we used $|c_{s}| = \prod_{e \in \Ecut} |c_{s_e}^{e}| = 1/2^K$. By Hoeffding's inequality (see above), we can achieve accuracy $\epsilon$ with success probability at least $\frac{2}{3}$ if the number of experiments $N$ satisfies
\begin{equation}
  N = 4a^2/\epsilon^2 = O(2^{4K}/\epsilon^2).
\end{equation}
Since each sample of $t_s$ requires running $r$ independent quantum circuits, where $r$ is the number of clusters, the total number of circuits run on the $d$-qubit quantum device is
\begin{equation}
  N \cdot r = O(2^{4K} r/\epsilon^2).
\end{equation}

To bound the total running time, let us denote the number of qubits and gates in the $i$-th cluster by $n_i$ and $m_i$, respectively.
The total processing time for simulating the $i$-th cluster once is $O(n_i + m_i)$.
Note that $\sum_{i=1}^{r} n_i + m_i = O(n+m)$, where $n$ is the total number of qubits and $m$ is the total number of gates of $C$.
Hence the total complexity is
\begin{equation}
  O(2^{4K} (n+m) /\epsilon^2),
\end{equation}
as claimed.
\end{proof}

\subsection{V: Proof of \cref{thm:Treewidth}}

\begin{theorem}\label{thm:Treewidth}
An $n$-qubit and m-gate QC algorithm $(C,f)$ with a decomposable function $f$ and a $(K,d)$-clustered circuit $C$ can be simulated with precision $\epsilon$ by running a $d$-qubit quantum device $O\of[\big]{2^{O(d'(g))} (r^3 \log r) / \epsilon^2}$ times.
Here $g$ denotes the induced (multi-)graph obtained from $G$ by removing the vertex $O_f$ and contracting each cluster to a single vertex while keeping the edges between different clusters, $d'(g)$ is the maximum degree of a vertex in $g$, and $r$ is the number of vertices in $g$ or clusters in $G$ (note that $r\leq n+m$).
The total running time of this simulation is $T_Q + T_C$ where
\begin{align}
  T_Q &= O\of[\big]{2^{O(d'(g))} (r^2 \log r)(n+m)/\epsilon^2}, \\
  T_C &= 2^{O(\cc(g))} \poly(r)
\end{align}
are the total quantum and classical running times, and $\cc(g)$ is the contraction complexity of $g$.
\end{theorem}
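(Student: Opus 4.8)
The plan is to reuse the edge-cutting decomposition of \cref{lem:edgecut}, but instead of collapsing the whole network into a single scalar estimated by Monte Carlo (as in \cref{thm:main}, which pays the prohibitive factor $2^{4K}$) I would keep the cut network organized as a small tensor network living on $g$ and estimate its node tensors locally. Concretely, I would apply \cref{lem:edgecut} to each of the $K$ edges in $\Ecut$. Because $f$ is decomposable, the final observable $O_f$ factorizes across clusters, so the classical post-processing can be folded into each cluster separately. The upshot is that $T(G,\A)$ equals the value of a tensor network $(g,\mathcal M)$ whose vertices are the $r$ clusters, whose edges are exactly the cut edges of $\Ecut$ (each carrying a bond of dimension $8$, or dimension $4$ in the Pauli basis arising from the SWAP identity $S=\frac12\sum_{P}P\otimes P$), and whose node tensor $M_i$ records, for every assignment of Pauli observable / eigenstate labels to the cut edges incident to cluster $i$, the expectation $\E[f_i(y_i)\prod_e \sigma^e]\in[-1,1]$ produced by running cluster $i$ with the corresponding boundary conditions. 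Since cluster $i$ has at most $d'(g)$ incident cut edges, $M_i$ has at most $8^{d'(g)}=2^{O(d'(g))}$ entries.

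Next I would estimate these tensors by sampling. For each cluster $i$ and each of its $\le 2^{O(d'(g))}$ label assignments, I run cluster $i$ on the $d$-qubit device and average the observed value $f_i(y_i)\prod_e\sigma^e$; by the Hoeffding claim, $O(\log(\cdot)/\delta^2)$ runs estimate one entry to additive error $\delta$. A union bound over all $\le r\,2^{O(d'(g))}$ entries guarantees that every entry is accurate to $\delta$ simultaneously with probability $\ge 2/3$, at the cost of a $\log r$ factor in the number of runs. This yields an estimated network $(g,\hat{\mathcal M})$, and I would then compute $\hat T := T(g,\hat{\mathcal M})$ classically by contracting it along an elimination order achieving $\cc(g)$; since every intermediate tensor then has at most $\cc(g)$ open bonds, each of dimension $O(1)$, this contraction costs $2^{O(\cc(g))}\poly(r)=T_C$.

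The crux of the argument — and the step I expect to be the main obstacle — is the error analysis showing that a per-entry precision of order $\epsilon/r$ (up to the $2^{O(d'(g))}$ factor) already forces $\abs{\hat T - T(G,\A)}\le\epsilon$. The naive bound, expanding the contraction as a signed sum over all $s\in[8]^{\Ecut}$ and estimating termwise, is useless because it reintroduces the factor $4^K$. Instead I would use a hybrid/telescoping argument over clusters: writing $E_i=\hat M_i - M_i$, the error $\hat T - T(G,\A)$ is a sum of $r$ contractions, the $i$-th of which replaces a single node tensor by $E_i$ while leaving the others as $\hat M$ or $M$. The point is that each such contraction is, up to the normalization absorbed into the bonds, a composition of completely positive maps whose completely-bounded norm is at most $1$, so substituting one node by the error tensor cannot amplify: the $i$-th term is bounded by $\norm{E_i}\le 2^{O(d'(g))}\delta$ times a product of norms each $\le 1$. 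Establishing this norm-nonincreasing property of the contracted network is exactly the delicate part. Granting it, summing over the $r$ clusters gives $\abs{\hat T - T(G,\A)}\le 2^{O(d'(g))}\,r\,\delta$, so choosing $\delta=\Theta\of[\big]{\epsilon/(2^{O(d'(g))}r)}$ suffices.

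Finally I would assemble the bounds. The number of runs per entry is $2^{O(d'(g))}(r^2\log r)/\epsilon^2$; multiplying by the $\le 2^{O(d'(g))}$ entries per cluster and the $r$ clusters gives the claimed $O\of[\big]{2^{O(d'(g))}(r^3\log r)/\epsilon^2}$ total runs of the $d$-qubit device. Since one run of cluster $i$ costs $O(n_i+m_i)$ and $\sum_i(n_i+m_i)=O(n+m)$, the total quantum time telescopes to $T_Q=O\of[\big]{2^{O(d'(g))}(r^2\log r)(n+m)/\epsilon^2}$, while the single classical contraction contributes $T_C=2^{O(\cc(g))}\poly(r)$, as claimed.
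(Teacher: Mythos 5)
Your plan is, in its architecture, the same as the paper's proof: fold the factors $f_j$ of the decomposable $f$ into their clusters, cut all $K$ edges of $\Ecut$ via \cref{lem:edgecut}, reorganize the result as a tensor network on $g$ whose node tensor at cluster $j$ has entries $c_{s(j)}\, T(G'^j,\A^j_{s(j)}) \in [-1,1]$ (at most $8^{d'(g)}$ per node), estimate each entry by Hoeffding plus a union bound over all $\le r\,8^{d'(g)}$ entries, contract the estimated network classically in time $2^{O(\cc(g))}\poly(r)$, and prove a perturbation bound that is \emph{linear} in $r$ rather than exponential in $K$. Your run-count and $T_Q$, $T_C$ accounting also match. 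However, the step you explicitly ``grant''---the norm-nonincreasing property of the partially contracted network---is precisely the crux, and your proposed justification via completely bounded norms of CP maps does not work as stated: each cut edge carries the eight signed terms $c_i \Phi_i$ with $\sum_i |c_i| \cdot \|\Phi_i\|_{cb} = O(1)$ \emph{per term} but total signed weight $4$ per edge, so any argument that bounds the terms separately reintroduces the $4^K$ factor you are trying to avoid. The cancellation you need is that summing an internal bond's eight labels, \emph{with} the coefficients $c_i$, reconstitutes the identity channel on that edge ($\Phi = \sum_i c_i \Phi_i$, i.e.\ \cref{lem:edgecut} read in reverse). The paper turns this into \cref{eq:aL}: contracting any subset $L \subseteq V(g)$ re-mends all cut edges internal to $L$, so the contracted tensor $a(L)$ is again the expectation of a \emph{valid quantum circuit} whose observables (Paulis and the $O_{f_j}$) have eigenvalues in $[-1,1]$, whence $|a(L)_{s(L)}| \le 1$. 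This operational argument, not a cb-norm estimate, is what closes the gap.

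There is also a technical wrinkle in your telescoping hybrid that the paper's \cref{lem:AB} is engineered to avoid: in the $i$-th hybrid term, the nodes $j<i$ carry the \emph{estimated} tensors, and partial contractions of estimated tensors are no longer exactly circuit expectations, so \cref{eq:aL} does not apply to them directly. The paper instead expands $T(g,\tilde{\a}) - T(g,\a)$ fully over subsets $L \subsetneq V$, with exact tensors $a(v)$ on $L$ and pure error tensors $\Delta(v)$ (entries $\le \delta$) off $L$; the inner contraction over $E(L)$ is then exactly $a(L)_{s(L)}$, bounded by $1$, and the count $|E \setminus E(L)| \le d' \, |V \setminus L|$ yields
\begin{equation*}
  |T(g,\tilde{\a}) - T(g,\a)| \le (1 + 8^{d'}\delta)^r - 1 \le (e-1)\, r\, 8^{d'} \delta ,
\end{equation*}
matching your target $2^{O(d'(g))} r \delta$. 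Your hybrid can be repaired (e.g.\ by recursion, or by bounding the perturbed partial contractions using the same binomial expansion), but as written it rests on the unproven claim. With \cref{eq:aL} established and the perturbation lemma in place, your choices $\delta = \Theta(\epsilon/(2^{O(d'(g))} r))$ and the resulting counts reproduce the theorem's bounds exactly as in the paper.
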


\begin{proof}
Since $f$ is decomposable, $f(y) = \prod_{j=1}^r f_j(y_j)$, so we can replace the final observable $O_f$ by $r$ independent smaller observables $O_{f_1}, \dotsc, O_{f_r}$. Moreover, since each $O_{f_j}$ is connected to a different cluster, we can absorb it into the corresponding cluster. Let us denote this slightly modified tensor network by $(G,\A)$ and note that it is also $(K,d)$-clustered just like the original one. If we cut each edge in $\Ecut$ using \cref{lem:edgecut} (just like we did in the proof of \cref{thm:main}), the resulting decomposition is again given by \cref{eq:decomp,eq:cs}. However, this time the graph $G'$ of the modified network fully breaks up into individual clusters \footnote{Recall that $\Ecut$ does \emph{not} include the edges attached to $O_f$, so the original tensor network did not fully break up because the clusters connected to $O_f$ remained in the same connected component even when the edges in $\Ecut$ were removed.}:
\begin{center}
\begin{tikzpicture}[semithick,
    pt/.style = {circle, draw = black, fill = black, inner sep = 1.5pt},
    PT/.style = {circle, draw = black, fill = black, inner sep = 2.5pt},
  	cr/.style = {circle, draw = black, fill = white, inner sep = 1.5pt}
	]

  \begin{scope}[xshift = -1cm]
    \draw (0,0) circle [radius = 0.5];
    \node at (-0.75,-0.1) {$S_2$};
    \node[pt] (a1) at ( 0.2,-0.2) {};
    \node[pt] (a2) at (-0.2,-0.2) {};
    \node[pt] (a3) at (-0.2, 0.2) {};
    \node[pt] (a4) at ( 0.2, 0.2) {};
    \draw (a1) -- (a2) -- (a3) -- (a4) -- (a1);
    \draw (a1) -- (a3);
    \draw (a2) -- (a4);
  \end{scope}

  \begin{scope}[xshift = 1cm]
    \draw (0,0) circle [radius = 0.5];
    \node at (0.75,-0.1) {$S_3$};
    \node[pt] (b1) at (-0.2,-0.18) {};
    \node[pt] (b2) at ( 0.2,-0.18) {};
    \node[pt] (b3) at ( 0.0, 0.22) {};
    \draw (b1) -- (b2) -- (b3) -- (b1);
  \end{scope}

  \begin{scope}[yshift = 1.7cm]
    \draw (0,0) circle [radius = 0.5];
    \node at (0.75,0.1) {$S_1$};
    \node[pt] (c1) at ( 0.2, 0.2) {};
    \node[pt] (c2) at (-0.2, 0.2) {};
    \node[pt] (c3) at (-0.2,-0.2) {};
    \node[pt] (c4) at ( 0.2,-0.2) {};
    \draw (c1) -- (c2) -- (c3) -- (c4) -- (c1);
    \draw (c1) -- (c3);
  \end{scope}

  \foreach \s/\t in {a1/b1, b3/c4, c3/a4} {
    \draw (\s) -- (\t);
    \node[pt,white] at ($(\s)!0.5!(\t)$) {};
    \node[cr] at ($(\s)!0.4!(\t)$) {};
    \node[cr] at ($(\s)!0.6!(\t)$) {};
  }

  \node at (-3.5, 0.7) {$T(G,\A) \; = \; \displaystyle\sum_{s \in [8]^{\Ecut}} c_s$};
  \node at ( 0.0,-0.8) {$T(G',\A_s)$};

\end{tikzpicture}
\end{center}
If we denote these clusters by $S_1, \dotsc, S_r$, we can further decompose each term as a product over them:
\begin{equation}
  T(G',\A_s)
  = \prod_{j=1}^r T \of[\big]{ G'^j, \A^j_{s(S_j)} },
  \label{eq:prod}
\end{equation}
where $G'^j$ is the $j$-th cluster (or the $j$-th component) of $G'$ and $s(S_j)$ denotes the restriction of $s$ to the edges incident to cluster $S_j$ (just like in \cref{eq:TGA}). In particular, note that $\bigcup_{j=1}^r \A^j_{s(S_j)} = \A_s$ and that each $\of[\big]{ G'^j, \A^j_{s(S_j)} }$ is a piece of the larger network $(G', \A_s)$ as well as a tensor network on its own.

Operationally, the value $T\of[\big]{G'^j,\A^j_{s(S_j)}}$ in \cref{eq:prod} is the expectation of the circuit of the $j$-th component of $(G',\A_s)$ whose all input qubits are initialized to specific eigenvectors of Paulis (determined by $s$) and all output qubits are measured by specific Pauli observables (also determined by $s$) or the observables $O_{f_j}$ whose eigenvalues lay in $[-1,1]$. As a consequence,
\begin{equation}
  \abs[\big]{T\of[\big]{G'^j,\A^j_{s(S_j)}}} \leq 1.
  \label{eq:bounded}
\end{equation}

We define a new (multi)-graph $g$ obtained by contracting each $S_j$ of graph $G$ to a single node. The vertices and edges of the graph $g$ can be identified with $\set{S_1, \dotsc, S_r}$ and $\Ecut$, respectively. For simplicity, let us abbreviate the vertex $S_j$ as $j$. Then, we define a new tensor network $(g, \a)$ associated with the graph $g$. For each vertex $j$ of $g$, we assign a tensor $a(j) \in \a$ whose indices range over $[8] = \set{1,\dotsc,8}$ and whose entries are given by
\begin{equation}
  a(j)_{s(j)}
  = c_{s(j)} \;
    T \of[\big]{ G'^j, \A^j_{s(j)} },
  \label{eq:asj}
\end{equation}
where $s(j)$ is the restriction of $s \in [8]^{\Ecut}$ to $S_j$ and $c_{s(j)}$ is the product of $c^e_{s_e}$ (see \cref{eq:cs}), with $e$ ranging over all edges in $\Ecut$ connecting $S_j$ and any $S_{j'}$ with $j'>j$. Note that the indices of edges in $(g, \a)$ range over $[8]=\set{1,\dotsc, 8}$, while the indices of edges in $(G,\A)$ range over $\Pi = \set{0,1}^2$.

Intuitively, $a(j)_{s(j)}$ is obtained by contracting the $j$-th cluster of $\A_s$ and absorbing the coefficients $c^e_{s_e}$ into the cluster with the smaller index (each edge $e \in \Ecut$ connects two different clusters and we always absorb into the one with the smallest index). Following \cref{eq:TGA}, the value of $(g,\a)$ is
\begin{align}
  T(g,\a)
  &= \sum_{s \in [8]^{\Ecut}} \prod_{j=1}^r a(j)_{s(j)} \\
  &= \sum_{s \in [8]^{\Ecut}} c_s \prod_{j=1}^r T(G'^j, \A^j_{s(j)}) \\
  &= \sum_{s \in [8]^{\Ecut}} c_s T(G', \A_s) \\
  &= T(G,\A),
\end{align}
where we used equations \eqref{eq:asj}, \eqref{eq:prod}, and \eqref{eq:decomp}, and the fact that each $c^e_{s_e}$ occurs exactly once (from the cluster with the smallest index out of the two connected by the edge $e$). Because of this identity, we can instead focus on computing the value of the smaller tensor network $(g,\a)$.

Our method of computing $T(g,\a)$ starts by estimating all entries $a(j)_{s(j)}$ of tensors in $\a = \set{a(1), \dotsc, a(r)}$, which can be done by running a $d$-qubit quantum computer multiple times to estimate $T(G'^j, \A^j_{s(j)})$. This yields an approximation $\tilde{\a}$ of $\a$. Then we contract $(g,\tilde{\a})$ using a classical algorithm. To obtain $T(g,\a)$ with precision $\epsilon$, we need to compute the entries of each $a(j)$ to precision roughly $2^{-O(K)}$ where $K$ is the number of edges of $g$. We can improve this trivial estimate by using the fact that these tensors come from quantum circuits.

Recall from \cref{eq:bounded} that $\abs[\big]{T(G'^j, \A^j_{s(j)})} \leq 1$, for any $j \in \set{1, \dotsc, r}$ and $s \in [8]^{\Ecut}$. Hence,
\begin{equation}
  \abs{a(j)_{s(j)}} \leq 1
  \label{eq:abound}
\end{equation}
from the definition in \cref{eq:asj}. More generally, consider a new tensor $a(L)$ obtained by contracting any subset $L \subseteq \set{1, \dotsc, r} = V(g)$ of vertices of $g$ and their associated tensors $\set{a(v) : v \in L}$ to a single vertex. Just like we argued above \cref{eq:bounded}, the subset $L$ is also associated with the expectation of some quantum circuit with bounded observables. Therefore, \cref{eq:abound} extends to any subset of vertices:
\begin{equation}
  \abs{a(L)_{s(L)}} \leq 1.
  \label{eq:aL}
\end{equation}
Using this, we can complete the proof by applying \cref{lem:AB} (below)---it bounds by how much a tensor network's value can deviate when each tensor is perturbed entry-wise by some small amount. Indeed, this is precisely what we need to bound $|T(g,\tilde{\a}) - T(g,\a)|$ where $\tilde{\a}$ is our estimate of $\a$.

Let us flesh out this strategy in more detail. Note that the total number of entries in all tensors is upper bounded by $D = r 8^{d'}$, where $r$ is the number of tensors and $d'$ is the maximum degree of the graph $g$. To estimate a single entry of a given tensor, let us use the following parameters:
\begin{align}
  \delta &= \frac{\epsilon}{(e-1)D}, &
  N &= \frac{2\ln(6D)}{\delta^2},
\end{align}
where $\delta$ is the desired accuracy of our estimate, $N$ is the number of times we run the corresponding quantum circuit, and $\epsilon > 0$ can be chosen arbitrarily (it determines the accuracy of our estimate of the tensor network's value).

By Hoeffding's inequality (see above), the probability that the estimate is not within $\delta$ of the actual value is at most
\begin{equation}
  p = 2 \exp \of*{ -\frac{N\delta^2}{2} }.
\end{equation}
By union bound, the probability that at least one out of $D$ estimates if off by more than $\delta$ is at most
\begin{align}
  p D
  &= 2 D \exp \of*{-\frac{N\delta^2}{2}} \\
  &= 2 D \exp \of[\big]{-\ln(6D)} \\
  &= 2 D \cdot \frac{1}{6D}\\
  & = \frac{1}{3}.
\end{align}
In other words, all estimates are correct with probability at least $2/3$, yielding a tensor network $(g,\tilde{\a})$ that is entry-wise $\delta$-close to the original network $(g,\a)$. By \cref{lem:AB} (see below), the values of these tensor networks differ by
\begin{equation}
  |T(g,\tilde{\a}) - T(g,\a)| \leq (e-1) D \delta = \epsilon,
\end{equation}
where $\epsilon$ can be chosen arbitrarily in $(0,1]$.

To analyze the total cost, note that our simulator executes at most $D N$ quantum circuits, where $D$ is an upper bound on the total number of entries of all tensors in the network and $N$ is the number of circuits run to estimate each entry. Expressing this in terms of the relevant parameters,
\begin{align}
  DN
  &= D \cdot 2 \ln (6D) / \delta^2 \\
  &= 2(e-1)^2 D^3 \ln (6D) / \epsilon^2 \\
  &= O \of[\big]{(r^3\log r) 8^{3d'}d'/\epsilon^2} \\
  &= O \of[\big]{(r^3\log r) 2^{O(d')}/\epsilon^2},
\end{align}
where $r = \abs{V(g)}$ and $d'$ is the maximum degree of a vertex in $g$.

To compute the total quantum running time, note that $O(n+m)$ quantum operations are needed for running all $r$ clusters once (by a similar argument as in \cref{thm:main}).
Hence, we can replace a factor of $r$ by $O(n+m)$ and the total quantum running time is bounded by
\begin{align}
    T_Q
    &= \frac{DN}{r}O(n+m)\\
    &= O\of[\big]{(r^2\log r)(n+m) 2^{O(d')}/\epsilon^2}.
\end{align}

After all entries are estimated, we can classically compute the value of the tensor network $(g,\tilde{\a})$ in classical time $T_C = \poly(r)2^{O(\cc(g))}$ by finding a near-optimal tree decomposition of the line graph of $g$~\cite{MS08}. Noting that $d'(g) \leq \cc(g)$, i.e., the maximum degree is at most the contraction complexity for any graph $g$, leads to the simplified formula in the main part of the paper.
\end{proof}

\begin{lemma}\label{lem:AB}
Let $(g,\a)$ be a tensor network with $g = (V,E)$ and $|V| = r$. Assume that, for any subset $L \subseteq V$, all entries of the tensor $a(L)$ obtained by contracting $L$ to a single vertex have norm at most one.
Let $\tilde{\a}$ be another tensor network that is entry-wise close to $\a$, i.e., for all $v \in V$ and $s(v) \in [8]^{d'(v)}$,
\begin{equation}
  |a(v)_{s(v)} - \tilde{a}(v)_{s(v)}| \leq \delta,
\end{equation}
for some $0 \leq \delta \leq 1/(r 8^{d'})$, where $d'(v)$ denotes the degree of a vertex $v$ and $d'=\max_{v} d'(v)$ denotes the maximum degree of $g$. Then
\begin{equation}
  |T(g,\a) - T(g,\tilde{\a})| \leq (e-1) r 8^{d'} \delta.
\end{equation}
\end{lemma}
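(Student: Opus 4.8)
The plan is to treat this as a multilinear perturbation estimate: expand $T(g,\tilde{\a})$ around $T(g,\a)$ and control each error term using the hypothesis that partial contractions stay bounded. First I would set $\Delta(v) = a(v) - \tilde{a}(v)$, so that every entry obeys $\abs{\Delta(v)_{s(v)}} \le \delta$, and expand the value of the perturbed network over the choice of which vertices carry the perturbation:
\[
  T(g,\tilde{\a}) = \sum_{S \subseteq V} (-1)^{\abs{S}} W_S, \qquad
  W_S = \sum_{s \in [8]^{E}} \prod_{v \in S} \Delta(v)_{s(v)} \prod_{v \notin S} a(v)_{s(v)}.
\]
The term $S = \emptyset$ is exactly $T(g,\a)$, so $T(g,\a) - T(g,\tilde{\a}) = -\sum_{\emptyset \ne S \subseteq V}(-1)^{\abs{S}} W_S$, and by the triangle inequality it suffices to bound each $\abs{W_S}$ and sum over nonempty $S$.

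The key step is to bound $\abs{W_S}$ for a fixed $S$ with $\abs{S} = k$ without paying an exponential price in the total number of edges. I would first perform the sum over all edges internal to $V \setminus S$ inside $W_S$; this collapses the unperturbed factors $\prod_{v \notin S} a(v)_{s(v)}$ into the single contracted tensor $a(V \setminus S)$, whose entries are at most $1$ in magnitude by hypothesis. What remains is a sum only over the edges incident to $S$, of which there are at most $\sum_{v \in S} d'(v) \le k d'$ (an internal edge of $S$ is counted once, a crossing edge once), hence at most $8^{k d'}$ index assignments. Each surviving summand is a product of one bounded entry of $a(V \setminus S)$ and $k$ perturbation entries, so $\abs{W_S} \le 8^{k d'} \delta^{k} = (8^{d'}\delta)^{k}$; the case $S = V$ is bounded directly since there is nothing left to contract.

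Grouping the nonempty subsets by their size $k$ and using that there are $\binom{r}{k}$ of them gives
\[
  \abs{T(g,\a) - T(g,\tilde{\a})} \le \sum_{k=1}^{r} \binom{r}{k} (8^{d'}\delta)^{k} = (1 + 8^{d'}\delta)^{r} - 1.
\]
Writing $x = r\,8^{d'}\delta$, the hypothesis $\delta \le 1/(r\,8^{d'})$ forces $x \le 1$, and then $(1 + 8^{d'}\delta)^{r} \le e^{x}$ together with the elementary convexity bound $e^{x} - 1 \le (e-1)x$ on $[0,1]$ yields $\abs{T(g,\a)-T(g,\tilde{\a})} \le (e-1)\,r\,8^{d'}\delta$, as claimed.

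The main obstacle is precisely the middle step: a naive bound that sums the product over all of $[8]^{E}$ is exponential in $\abs{E}$ and useless. The hypothesis that every contracted subset $a(L)$ remains entrywise bounded by $1$ is exactly what rescues the argument, since it lets the entire unperturbed region $V \setminus S$ collapse to a single number per boundary assignment, leaving only the $\le k d'$ edges touching the $k$ perturbed tensors as free indices. The two places demanding care are the edge-counting (using that the edges incident to $S$ number at most $\sum_{v\in S} d'(v)$, rather than twice that) and producing the clean constant $e-1$, which comes from comparing the binomial sum to $e^{x}-1$ and invoking the convexity inequality on $[0,1]$.
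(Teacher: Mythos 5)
Your proof is correct and follows essentially the same route as the paper's: a multilinear expansion of the difference over subsets of perturbed vertices, collapsing the unperturbed vertices into a single contracted tensor whose entries are bounded by $1$ (the key use of the hypothesis), counting the surviving edges by $\le k d'$, and arriving at the same binomial sum $(1+8^{d'}\delta)^r - 1$. The only (equally valid) cosmetic difference is the final elementary step, where you use $(1+8^{d'}\delta)^r \le e^{r 8^{d'}\delta}$ together with the convexity bound $e^x - 1 \le (e-1)x$ on $[0,1]$, whereas the paper bounds the binomial sum term by term via $\binom{r}{t}x^t \le (rx)^t/t!$.
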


\begin{proof}
We denote the entry-wise perturbation by $\Delta(v)_{s(v)} = \tilde{a}(v)_{s(v)} - a(v)_{s(v)}$, so $|\Delta(v)_{s(v)}| \leq \delta$. Then, following \cref{eq:TGA},
\begin{align}
 & |T(g,\tilde{\a}) - T(g,\a)| \nonumber \\
 &= \Biggl{|}
      \sum_{s \in [8]^{E}} \prod_{v \in V}
      \of[\big]{ a(v)_{s(v)} + \Delta(v)_{s(v)} } \\ &\;\;
      - \sum_{s \in [8]^{E}} \prod_{v \in V}
      a(v)_{s(v)}
    \Biggr{|}. \nonumber
\end{align}
We expand the first sum, cancel its first term, and parametrize the remaining terms by subsets $L \subsetneq V$ that correspond to those $v \in V$ for which we take the entries of $a(v)$ as opposed to $\Delta(v)$:
\begin{align}
 & |T(g,\tilde{\a}) - T(g,\a)| \\
 &= \abs*{
      \sum_{L \subsetneq V} \sum_{s \in [8]^{E}}
      \of*{
        \prod_{v \in L} a(v)_{s(v)} \cdot
        \prod_{v \in V \setminus L} \Delta(v)_{s(v)}
      }
    }. \nonumber
\end{align}
Let $E(L) = \set{(u,v) \in E : u,v \in L}$. Recall that the value of $a(v)_{s(v)}$ depends only on the edges incident to $v$. Since $v \in V \setminus L$ in the second product, the value of $\Delta(v)_{s(v)}$ is not influenced by the edges in $E(L)$. We can thus decompose the string $s$ into two parts, $s_1$ and $s_2$, where $s_2$ is indexed by the edges in $E(L)$ and $s_1$ is indexed by the remaining edges:
\begin{align}
 & |T(g,\tilde{\a}) - T(g,\a)|
 = \Biggl{|}
     \sum_{L \subsetneq V}
     \sum_{s_1 \in [8]^{E \setminus E(L)}} \\
   & \of[\Bigg]{
       \sum_{s_2 \in [8]^{E(L)}}
       \prod_{v \in L} a(v)_{s(v)}
     } \cdot
     \prod_{v \in V \setminus L} \Delta(v)_{s_1(v)}
   \Biggr{|}. \nonumber
\end{align}
The bracketed expression is equal to $a(L)_{s(L)}$ which are the entries of the tensor obtained by contracting $L$ to a single vertex. Since $\abs{a(L)_{s(L)}} \leq 1$ for all $s$, by triangle inequality we get that
\begin{align}
 & |T(g,\tilde{\a}) - T(g,\a)| \nonumber \\
 & \leq \sum_{L \subsetneq V} \sum_{s_1 \in [8]^{E \setminus E(L)}}
     \abs*{
       \prod_{v \in V \setminus L} \Delta(v)_{s_1(v)}
     } \\
 & \leq \sum_{L \subsetneq V} \sum_{s_1 \in [8]^{E \setminus E(L)}}
     \delta^{|V-L|}.
\end{align}
Since $|E-E(L)| \leq d' \cdot |V-L|$, where $d'$ is the maximum degree,
\begin{align}
  |T(g,\tilde{\a}) - T(g,\a)|
  &\leq \sum_{L \subsetneq V} 8^{d'|V-L|}
        \delta^{|V-L|} \\
  &=    \sum_{L \subsetneq V}
        \of*{ 8^{d'} \delta }^{|V-L|} \\
  &=    \sum_{t=1}^r \binom{r}{t}
        \of*{ 8^{d'} \delta }^t \\
  &= (1+8^{d'}\delta)^{r} - 1.
\end{align}
Note that if $x\leq 1/r$ then
\begin{align}
(1+x)^{r} - 1
&= \sum_{t=1}^{r} \binom{r}{t} x^{t} \label{eq:boundDelta1}\\
&\le \sum_{t=1}^{r} \frac{(rx)^{t}}{t!} \\
&\le (rx)\sum_{t=1}^{r} \frac{1}{t!} \\
&\le (e-1)rx.\label{eq:boundDelta2}
\end{align}
Since $\delta \le 1/(r8^{d'})$,
\begin{align}
|T(g,\tilde{\a}) - T(g,\a)|
&\le (1+8^{d'}\delta)^{r} - 1\\
&\le (e-1)r8^{d'}\delta,
\end{align}
which completes the proof.
\end{proof}

\subsection{Appendix VI: Proof of \cref{thm:Hamiltonian}}

Consider an $N$-qubit system with qubits grouped into $n$ parties.
Let
\begin{equation}
  H = \sum_j H^{(1)}_j + \sum_j H^{(2)}_j, \quad
  \forall i,j: \norm{H_j^{(i)}} \leq 1,
  \label{eq:Ha}
\end{equation}
be a Hamiltonian on this system,
where each term $H^{(i)}_j$ acts on at most two qubits
and $i$ is the number of affected parties
(the $i=1$ terms act within a single party while
the $i=2$ terms act between two different parties).
Denote by $g$ the $n$-vertex graph whose vertices represent the parties
and whose edges correspond to interaction terms $H_{j}^{(2)}$ among different parties.

Starting with a product state $\rho_1 \x \dotsb \x \rho_n$ across all parties
and letting the system evolve for time $t$ with the Hamiltonian $H$,
our goal is to approximate the expectation
\begin{equation}\label{eq:Ex}
  \Tr \sof[\big]{ (O_1 \x \dotsb \x O_n) e^{-iHt} (\rho_1 \otimes \dotsb \otimes \rho_n) e^{iHt} }
\end{equation}
of local observables $O_i$ on each party.

\begin{theorem}\label{thm:Hamiltonian}
Let $H$ be the Hamiltonian in \cref{eq:Ha}
where each qubit is affected by at most a constant number of terms $d'$.
Assume that the states $\rho_i$ and observables $O_i$ in \cref{eq:Ex} can be implemented efficiently,
and that the eigenvalues of each $O_i$ lie within $[-1,1]$.
Then the correlation function in \cref{eq:Ex} can be approximated by an
\begin{align}
 (2^{O\of*{(h_Bt)^2\cc(g) /\epsilon}}, d) \text{-simulator},
\end{align}
where
$h_B = \sum_j \norm{H_j^{(2)}}$
is the total inter-party interaction strength,
$t \geq 0$ is the desired evolution time,
$\cc(g)$ is the contraction complexity of the interaction graph $g$ between the parties,
$\epsilon$ is the desired accuracy,
and $d$ is the number of qubits of the largest party.
\end{theorem}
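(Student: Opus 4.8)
The plan is to reduce Theorem~\ref{thm:Hamiltonian} to the clustered-circuit simulation result of Theorem~\ref{thm:main} (or the treewidth-aware Theorem~\ref{thm:Treewidth}) by constructing an explicit quantum circuit that computes the correlation function in \cref{eq:Ex}, and then bounding the cut parameter $K$ in terms of the quantities $h_B$, $t$, and $\cc(g)$ that appear in the statement. The main conceptual step is that the continuous-time evolution $e^{-iHt}$ is not itself a gate sequence with a natural cluster structure, so I first need a \emph{discretization} that (i) approximates $e^{-iHt}$ to the desired precision $\epsilon$ and (ii) produces a circuit whose only gates crossing between parties are the two-qubit gates coming from the inter-party terms $H_j^{(2)}$. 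Each such crossing gate contributes edges to $\Ecut$, so controlling their number is exactly controlling $K$.

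First I would apply a Lie--Trotter--Suzuki product formula to split $e^{-iHt}$ into $L$ time-slices, writing the evolution as a product of short-time evolutions $\prod_{\ell} \prod_j e^{-iH^{(i)}_j t/L}$ up to error that I would bound using a standard Trotter estimate. The intra-party terms $e^{-iH^{(1)}_j t/L}$ act inside a single cluster and do not enlarge $\Ecut$; only the inter-party terms $e^{-iH^{(2)}_j t/L}$ create edges between clusters. Since the whole point is to make the Trotter error at most $\epsilon$, I expect the number of slices to scale like $L = O((h_B t)^2 / \epsilon)$ coming from the second-order (or first-order, with the appropriate accounting) error term in the product formula, where $h_B = \sum_j \norm{H_j^{(2)}}$ measures the inter-party strength. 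The second step is a \textbf{counting argument}: each slice contributes a bounded number of inter-party gates, and each such gate cut by \cref{lem:edgecut} contributes $O(1)$ edges to $\Ecut$; moreover the \emph{contracted} graph $g$ records which pairs of parties interact, so after tallying I expect $K = O(L \cdot \cc(g)) = O((h_B t)^2 \cc(g)/\epsilon)$, matching the $2^{O(K)}$ prefactor $2^{O((h_B t)^2 \cc(g)/\epsilon)}$ in the claimed simulator bound. The role of $\cc(g)$ rather than merely the edge count of $g$ is what lets me invoke the sharper Theorem~\ref{thm:Treewidth} and charge the classical contraction cost to $\cc(g)$ as well.

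Having established that the discretized circuit is $(K,d)$-clustered with $d$ equal to the number of qubits in the largest party and $K = O((h_B t)^2 \cc(g)/\epsilon)$, I would invoke Theorem~\ref{thm:main}: it certifies that a $(K,d)$-clustered circuit can be simulated to precision $\epsilon$ by a $(2^{O(K)}, d)$-simulator, which substitutes directly to give the stated $(2^{O((h_B t)^2 \cc(g)/\epsilon)}, d)$-simulator. I must also check that the initial product state $\rho_1 \x \dotsb \x \rho_n$ and the local observable $O_1 \x \dotsb \x O_n$ fit the clustered framework: the product structure means each $\rho_i$ and $O_i$ lives entirely inside one cluster, with $O_i$ playing the role of the per-cluster observable $O_{f_i}$ (this is why the eigenvalue assumption $\norm{O_i} \le 1$, i.e.\ eigenvalues in $[-1,1]$, is needed for the boundedness estimate \cref{eq:bounded}). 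The efficiency assumptions on $\rho_i$ and $O_i$ guarantee each cluster is implementable on the $d$-qubit device.

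The \textbf{main obstacle} I anticipate is making the error budget split cleanly and tightly between the two independent approximations so that their combined error is still $\epsilon$: the Trotter discretization error (which forces $L$, and hence $K$, to grow) and the statistical sampling error from the finite number of circuit runs (controlled by Hoeffding as in Theorem~\ref{thm:main}). In particular, getting the exponent right---showing it is $(h_B t)^2$ rather than, say, $(h_B t)$ or a worse power---requires the correct product-formula order and a careful count that each inter-party Trotter step really contributes only $O(1)$ to $K$ per interacting pair; an off-by-a-factor here changes the exponent in the final simulator bound. A subtle secondary point is confirming that the Trotter-expanded evolution genuinely induces the interaction graph $g$ (and not some denser effective graph), so that $\cc(g)$ rather than a larger contraction complexity governs the cost; this follows because the product formula never creates gates between parties that do not share an $H_j^{(2)}$ term, but it should be stated explicitly.
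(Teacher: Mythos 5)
Your overall architecture matches the paper's in outline: Trotterize $e^{-iHt}$, observe that only inter-party factors create cut edges, cut those gates via \cref{lem:edgecut} (packaged in the paper as \cref{lem:gate-cut}), and exploit the product structure $f(y)=\prod_i f_i(y_i)$ to invoke \cref{thm:Treewidth} so that $\cc(g)$ rather than the edge count appears in the exponent. But there is a genuine gap at the crux of the argument: your claim that a plain Lie--Trotter product formula achieves error $\epsilon$ with $L = O((h_Bt)^2/\epsilon)$ slices is unjustified and false in general. For a single-level splitting into \emph{all} terms, the Trotter error is governed by commutators among all pairs of terms, including intra-party/intra-party pairs, so the required number of slices scales with $h_A = \sum_j \norm{H_j^{(1)}}$ (hence with the system size $N$), not with $h_B$ alone. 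Since every slice contains one layer of all inter-party gates, the edge multiplicities of the contracted graph---and hence the exponent of the simulator---would then grow with $N$, destroying the claimed $2^{O((h_Bt)^2\cc(g)/\epsilon)}$ bound. The paper's resolution is a \emph{nested} Trotter decomposition: first split $H = A + B$ (intra- vs.\ inter-party) using $m_1 = \lceil 4d'h_Bt^2/\epsilon\rceil + \lceil 2h_B^2t^2/\epsilon\rceil$ outer steps, whose error is controlled by the commutator bound $\norm{\mathrm{ad}_B^k(A)} \le 2^{k+1}d'h_B^k$ (\cref{lem:commutator,lem:lowdegree,lem:time-step})---which depends only on $h_B$ and the degree $d'$, not on $h_A$. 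This is exactly where the hypothesis that each qubit is affected by at most $d'$ terms enters; your proposal never uses that assumption, which is a symptom of the gap. Each outer factor $e^{-iAt/m_1}$ is then refined into individual terms with $m_2 = O(h_A^2t^2/(\epsilon m_1))$ inner sub-steps (\cref{lem:Hamiltonian}); these absorb the $h_A$ dependence but consist purely of intra-party gates, so each edge of $g$ acquires multiplicity only $O(m_1) = O(h_B^2t^2/\epsilon)$, yielding $\cc(g') = O(h_B^2t^2\cc(g)/\epsilon)$ as in \cref{eq:dd}.

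A secondary inconsistency: after correctly noting that decomposability and \cref{thm:Treewidth} are what put $\cc(g)$ in the exponent, you conclude by invoking \cref{thm:main}. That theorem only yields $2^{O(K)}$ with $K$ the total number of cut edges, i.e.\ $K = O(m_1 \cdot \abs{E(g)})$, which can vastly exceed $m_1\cc(g)$ (a path graph has $\cc(g) = O(1)$ but $r-1$ edges), and your intermediate identity ``$K = O(L\cdot\cc(g))$'' conflates the two quantities. The conclusion must instead go through the decomposable branch of \cref{lem:gate-cut} (which internally uses \cref{thm:Treewidth}), together with the short but necessary observation that multiplying every edge multiplicity by $O(m_1)$ multiplies the contraction complexity by at most $O(m_1)$. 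With the nested Trotter analysis and this routing fixed, the remaining steps you outline (per-cluster observables $O_{f_i}$ with eigenvalues in $[-1,1]$ giving \cref{eq:bounded}, the operator-norm propagation $\abs{\Tr[\rho e^{iHt}Oe^{-iHt}] - \Tr[\rho\tilde{U}^\dagger O\tilde{U}]} \le 2\norm{e^{-iHt}-\tilde{U}}$, and the Hoeffding sampling budget) agree with the paper's proof.
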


\begin{proof}
The overall structure of the proof is as follows.
We first implement the evolution of $H$ in the circuit model using a nested Trotter approximation based on \cref{lem:time-step,lem:Hamiltonian} (below).
Then we use \cref{lem:gate-cut} to decompose the circuit and remove the interactions among parties
so that each party can be simulated separately on a small quantum device.

Let us write the Hamiltonian as $H = A + B$
where $A$ and $B$ consist of all $H^{(1)}_j$ and $H^{(2)}_j$ terms, respectively.
Using a nested Trotter approximation, we can decouple $H$
into these two groups and then into individual terms $H^{(i)}_j$.

First, let us decouple $A$ and $B$.
Letting $m_1 = \lceil 4d'h_Bt^2/\epsilon \rceil + \lceil 2h_B^2t^2/\epsilon\rceil$,
\cref{lem:time-step} implies that
\begin{equation}
  \norm*{e^{-iHt} - (e^{-iBt/m_1}e^{-iAt/m_1})^{m_1}} \le 2\epsilon.
\end{equation}
Next, we approximate $e^{-iAt/m_1}$ and $e^{-iBt/m_1}$ by breaking them up into individual terms.
For $e^{-iBt/m_1}$, we directly use \cref{lem:Hamiltonian}:
\begin{equation}
  \norm[\Big]{e^{-iBt/m_1} - \prod_{j} e^{-i H_j^{(2)} t/m_1}} \le 2\epsilon/m_1.
\end{equation}
For $e^{-iAt/m_1}$, we first set
\begin{align}
  m_2
  &= O\of*{h_A^2 (t/m_1)^2 /(\epsilon / m_1)} \\
  &= O(h_A^2 t^2 / (\epsilon m_1))
\end{align}
where $h_A = \sum_{j} \norm{H_{j}^{(1)}}$
and then use \cref{lem:Hamiltonian} and \cref{eq:mUV}:
\begin{equation*}
  \norm[\bigg]{e^{-iAt/m_1}- \of[\Big]{\prod_{j} e^{-i H_j^{(1)} t/m_1m_2}}^{m_2}} \le 2\epsilon/m_1.
\end{equation*}

Combining everything together, the overall Trotter approximation of $e^{-iHt}$ is given by
\begin{equation}
  \tilde{U}
  = \of[\bigg]{\prod_{j_2} e^{-i H_{j_2}^{(2)} t/m_1}
    \of[\Big]{\prod_{j_1} e^{-i H_{j_1}^{(1)} t/m_1m_2}}^{m_2}}^{m_1}
  \label{eq:Ut}
\end{equation}
and its incurred error is
\begin{align}
  &\norm{e^{-iHt} - \tilde{U}} \nonumber \\
  &\le \abs[\big]{1 - \of[\big]{(1+2\epsilon/m_1)(1+2\epsilon/m_1)}^{m_1}} + 2 \epsilon \\
  &= O(\epsilon).
\end{align}

To construct a QC algorithm $(C,f)$ for approximating the correlation function in \cref{eq:Ex}, we need to prepare the states $\rho_i$, approximate the unitary evolution by $H$, and measure the observables $O_i$.
We implement these three steps as follows.

Recall that $N$ is the total number of qubits of all $n$ parties.
Starting from $\ket{0}\xp{N}$, we
first prepare the initial state $\rho = \rho_{1} \otimes \dotsb \otimes \rho_{n}$
by a sequence of single- and two-qubit gates (to account for mixed $\rho_i$, a random sequence of gates can be used.)
By assumption, this can be done efficiently, and we ignore $\poly(N)$ factors in the simulator complexity.

Next, we can approximately implement $e^{-iHt}$ by applying the Trotter unitary $\tilde{U}$ from \cref{eq:Ut}.
Since each factor $e^{-iH_{j}^{(2)}t/m_1}$ and $e^{-iH_{j}^{(1)}t/(m_1m_2)}$ can be implemented by a single- or two-qubit gate,
$\tilde{U}$ consists of $O(Jm_1m_2) = O(Jh_A^2 t^2 /\epsilon) = \poly(N)t^2/\epsilon$ gates, where $J$ is the total number of terms $H_{j}^{(i)}$ in the system (assume $J=\poly(N)$).

Finally, for each observable $O_i$ with spectral decomposition $O_i = \sum_k \alpha^{(i)}_k \proj{\phi_k}$, we can find (again, by assumption) an efficiently implementable unitary $V_i$ such that
\begin{equation*}
  V_i O_i V_i\ct = \sum_k \alpha^{(i)}_k \proj{k}
\end{equation*}
where $\ket{k}$ is the standard basis. Let $f_i(k) = \alpha^{(i)}_k$. To implement $O = O_1 \otimes \dotsb \otimes O_{n}$, we can first implement the basis change $V_i$ for each party, measure each party in the standard basis to obtain outcomes $y_1,\ldots,y_n$, and then output $f(y) = \prod_{i=1}^n f_i(y_i)$.

This procedure amounts to a QC algorithm $(C,f)$ whose output expectation is
\begin{align*}
  \E_{y}f(y)
  = \Tr \sof[\big]{
      (O_1 \x \dotsb \x O_n) \tilde{U}
      (\rho_1 \otimes \dotsb \otimes \rho_n) \tilde{U}\ct
    },
\end{align*}
where $\tilde{U}$ is a Trotter approximation of $e^{-iHt}$.
To bound the error from replacing $e^{-iHt}$ with $\tilde{U}$,
observe that
\begin{align}
&\abs[\big]{\Tr[\rho e^{iHt}Oe^{-iHt}] -
 \Tr [\rho \tilde{U}^{\dagger}O\tilde{U}]} \nonumber \\
&\leq \norm{e^{iHt}Oe^{-iHt} - \tilde{U}^{\dagger}O\tilde{U}} \\
&\leq \norm{ (e^{-iHt}-\tilde{U})^{\dagger} O e^{-iHt}} \\
&\quad+ \norm{ \tilde{U}^{\dagger}O (e^{-iHt} - \tilde{U})}\nonumber\\
&\leq \norm{e^{-iHt}-\tilde{U}} + \norm{e^{-iHt} - \tilde{U}}\\
&\leq O(\epsilon),
\end{align}
where the second-to-last inequality uses the assumption that $\|O\|\leq 1$.

The final step is to evoke \cref{lem:gate-cut} (below).
Partitioning the qubits in the QC algorithm $(C,f)$ according to the parties in the Hamiltonian simulation, each party has at most $d$ qubits.
We denote by $g'$ the $n$-vertex graph whose vertices represent the parties and whose edges correspond to the two-qubit gates between different parties in the circuit $C$.
Denote by $d_{g'}(i,j)$ the number of edges between vertices $i$ and $j$ in $g'$.
Since each $H_{j}^{(2)}$ contributes $O(m_1) = O(h_{B}^2t^2/\epsilon)$ gates to the Trotter approximation $\tilde{U}$,
\begin{equation}
  d_{g'}(i,j) = O(h_{B}^2t^2/\epsilon) \cdot d_{g}(i,j),
  \label{eq:dd}
\end{equation}
where $g$ is the interaction graph of the Hamiltonian $H$ and
$d_{g}(i,j)$ is the number of edges between vertices $i$ and $j$ in $g$.
Hence, the contraction complexities of $g'$ and $g$ are related as follows:
\begin{equation}
  \cc(g') = O(h_{B}^{2}t^{2}\cc(g)/\epsilon).
\end{equation}
By \cref{lem:gate-cut}, we conclude that a
\begin{align}
 (2^{O((h_Bt)^2\cc(g) /\epsilon)}, d) \text{-simulator}
\end{align}
exists for the Hamiltonian $H$.
\end{proof}

\begin{lemma}[Lemma 4 in \cite{haah2018quantum}]\label{lem:interaction}
Let $A_t$ and $B_t$ be continuous time-dependent Hermitian operators, and let $U_t^{A}$ and $U_{t}^{B}$ with $U_0^{A} = U_{0}^{B} = I$ be the corresponding time evolution unitaries. Then the following hold:
\begin{itemize}
\item[(a)] $W_t = {U_t^B}\ct U_t^{A}$ is the unique solution of\\$i \frac{dW_t}{dt} = \of[\big]{{U_t^B}\ct (A_t - B_t) U_t^{B}} W_t$ and $W_0 = I$.
\item[(b)] If $\norm{A_s - B_s} \le \delta$ for all $s\in [0,t]$, then\\$\norm{U_t^{A} - U_t^{B}} \le t\delta$.
\end{itemize}
\end{lemma}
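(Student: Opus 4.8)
The plan is to treat this as the standard \emph{interaction-picture} comparison argument, taking as given that the evolution unitaries satisfy the Schrödinger-type equations $i\,\frac{dU_t^A}{dt} = A_t\,U_t^A$ and $i\,\frac{dU_t^B}{dt} = B_t\,U_t^B$ with $U_0^A = U_0^B = I$. These ODEs (equivalently, the time-ordered exponentials generated by $A_t$ and $B_t$) are the only input I would need, together with the fact that $U_t^A$ and $U_t^B$ are unitary at every $t$.

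For part~(a), I would differentiate the product $W_t = {U_t^B}\ct U_t^A$ via the product rule. Differentiating the adjoint factor gives $\frac{d}{dt}{U_t^B}\ct = i\,{U_t^B}\ct B_t$, where Hermiticity of $B_t$ is exactly what lets me drop the dagger from the generator, while $\frac{d}{dt}U_t^A = -i\,A_t\,U_t^A$. Collecting the two contributions yields $\frac{dW_t}{dt} = -i\,{U_t^B}\ct (A_t - B_t)\,U_t^A$, and inserting the identity $U_t^B{U_t^B}\ct = I$ just before $U_t^A$ re-exposes the factor $W_t = {U_t^B}\ct U_t^A$ on the right, giving precisely $i\,\frac{dW_t}{dt} = \bigl({U_t^B}\ct(A_t-B_t)U_t^B\bigr)W_t$. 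The initial condition $W_0 = I$ is immediate from $U_0^A = U_0^B = I$, and uniqueness is the textbook existence–uniqueness statement for a linear ODE with continuous coefficients.

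For part~(b), I would first note that $U_t^A = U_t^B W_t$, hence $U_t^A - U_t^B = U_t^B(W_t - I)$; since $U_t^B$ is unitary and the operator norm is unitarily invariant, $\norm{U_t^A - U_t^B} = \norm{W_t - I}$. Integrating the ODE from part~(a) against the initial condition gives $W_t - I = -i\int_0^t {U_s^B}\ct(A_s - B_s)U_s^B\,W_s\,ds$. Taking norms and using that conjugation by the unitary $U_s^B$ preserves the operator norm, that $\norm{W_s} = 1$ as a product of unitaries, and the hypothesis $\norm{A_s - B_s}\le\delta$, the integrand is bounded by $\delta$, so $\norm{W_t - I}\le\int_0^t\delta\,ds = t\delta$, as claimed.

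There is no genuine obstacle here; once the interaction-picture substitution is made the argument is a few lines of calculus. The only points needing care are the bookkeeping of the $\pm i$ and the dagger when differentiating ${U_t^B}\ct$ (Hermiticity of $B_t$ is essential), and---if full rigor is desired---invoking existence and uniqueness of the time-ordered exponentials $U_t^A, U_t^B$ for merely continuous generators, which is standard.
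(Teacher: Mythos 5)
Your proof is correct. Note that the paper itself offers no proof of this lemma---it is imported verbatim as Lemma~4 of the cited reference \cite{haah2018quantum}---and your interaction-picture argument (differentiate $W_t = {U_t^B}^{\dagger} U_t^A$ using the Schr\"odinger equations and Hermiticity of $B_t$ for part~(a), then pass to the integral form and use unitary invariance of the operator norm together with $\norm{W_s}=1$ for part~(b)) is precisely the standard derivation given in that reference, with all signs and daggers handled correctly.
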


\newcommand{\ad}{\mathrm{ad}}

\begin{lemma} \label{lem:commutator}
For Hermitian matrices $A$ and $B$,
\begin{equation*}
\norm*{e^{-i(A+B)t} - e^{-iBt}e^{-iAt}} \le \sum_{k=1}^{\infty} \frac{t^{k+1}}{k!} \norm{\ad_{B}^{k}(A)}
\end{equation*}
where $\ad_B$ is the adjoint map $\ad_{B} (A) = [B,A]$ and $\ad_{B}^{k}$ is the $k$-th power of $\ad_{B}$:
\begin{equation}
  \ad_{B}^{k}(A) = [\underbrace{B, [\ldots [B, [B}_{k}, A]]\ldots]].
\end{equation}
\end{lemma}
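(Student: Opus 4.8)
The plan is to prove this Trotter bound by moving to the interaction picture and invoking \cref{lem:interaction}. First I would identify the time-dependent generator of the Trotter product $V_t := e^{-iBt}e^{-iAt}$. Differentiating and inserting $e^{iBt}e^{-iBt}=I$ gives
\[
  i \frac{dV_t}{dt}
  = \of[\big]{B + e^{-iBt} A\, e^{iBt}}\, V_t,
  \qquad V_0 = I,
\]
so $V_t$ is exactly the evolution generated by the Hermitian time-dependent Hamiltonian $B_t := B + e^{-iBt} A\, e^{iBt}$ (Hermiticity follows since $A,B$ are Hermitian). The exact evolution $U_t := e^{-i(A+B)t}$ is generated by the constant Hermitian Hamiltonian $A_t := A + B$. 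With these identifications $\norm{e^{-i(A+B)t} - e^{-iBt}e^{-iAt}} = \norm{U_t^{A} - U_t^{B}}$, and \cref{lem:interaction}(b) reduces the whole problem to bounding $\norm{A_s - B_s}$ uniformly over $s \in [0,t]$.

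Next I would evaluate this difference explicitly. Since $A_s - B_s = A - e^{-iBs} A\, e^{iBs}$, the key ingredient is the adjoint (Hadamard) expansion
\[
  e^{-iBs} A\, e^{iBs} = \sum_{k=0}^{\infty} \frac{(-is)^k}{k!}\, \ad_B^k(A),
\]
whose $k=0$ term is precisely $A$. Cancelling it, taking norms, and using $|(-is)^k| = s^k$ yields
\[
  \norm{A_s - B_s} \le \sum_{k=1}^{\infty} \frac{s^k}{k!}\, \norm{\ad_B^k(A)}.
\]

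The only remaining subtlety is that \cref{lem:interaction}(b) needs a single $\delta$ valid for all $s \in [0,t]$, whereas the bound above depends on $s$. I would resolve this by observing that each term $\frac{s^k}{k!}\norm{\ad_B^k(A)}$ is nonnegative and nondecreasing in $s$, so the supremum over $[0,t]$ is attained at the endpoint $s=t$; one may therefore take $\delta = \sum_{k\ge1}\frac{t^k}{k!}\norm{\ad_B^k(A)}$. Feeding this into \cref{lem:interaction}(b) gives
\[
  \norm{e^{-i(A+B)t} - e^{-iBt}e^{-iAt}} \le t\,\delta \le \sum_{k=1}^{\infty} \frac{t^{k+1}}{k!}\, \norm{\ad_B^k(A)},
\]
which is exactly the claim. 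I expect the main obstacle to be computing the generator of $V_t$ correctly---in particular isolating the conjugation $e^{-iBt}A\,e^{iBt}$ that is responsible for producing the nested commutators---together with the care needed to convert the $s$-dependent estimate into the single constant $\delta$ required by the time-dependent comparison lemma; once these are handled, the adjoint expansion finishes the argument directly.
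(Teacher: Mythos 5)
Your proof is correct and follows essentially the same route as the paper: identify the generator of the Trotter product $e^{-iBt}e^{-iAt}$ (the paper cites part (a) of the interaction lemma to write it as $e^{-iBt}(A+B)e^{iBt}$, which equals your $B + e^{-iBt}Ae^{iBt}$ since $B$ commutes with $e^{-iBt}$), expand the difference from $A+B$ via the Hadamard/adjoint series, and feed the resulting uniform bound over $s\in[0,t]$ into part (b). The only differences are cosmetic---you derive the generator by direct differentiation rather than quoting part (a), and you handle the $s$-dependence by monotonicity exactly as the paper does by bounding $s^k \le t^k$ termwise.
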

\begin{proof}
Note that $e^{-i(A+B)t}$ is generated by the Hamiltonian $A+B$ and
$e^{-iBt}e^{-iAt}$ is generated by the Hamiltonian $e^{-iBt}(A+B)e^{iBt}$
by part~(a) of \cref{lem:interaction}.
According to Lemma~5.3 in~\cite{miller1973symmetry},
\begin{equation}
e^{X}Ye^{-X} = \sum_{k=0}^{\infty} \frac{1}{k!}\ad_X^{k}(Y).
\end{equation}
Using this identity we get, for any $s \in [0,t]$,
\begin{align}
&\norm*{e^{-iB s}(A+B) e^{iB s} - (A+B)} \nonumber\\
&= \norm*{\sum_{k=0}^{\infty} \ad_B^{k}(A+B)\frac{(-is)^{k}}{k!} - (A+B)}\\
&= \norm*{\sum_{k=1}^{\infty} \ad_B^{k}(A)\frac{(-is)^{k}}{k!}}\\
& \le \sum_{k=1}^{\infty} \frac{s^{k}}{k!}\norm{\ad_{B}^{k}(A)}\\
& \le \sum_{k=1}^{\infty} \frac{t^{k}}{k!}\norm{\ad_{B}^{k}(A)}.
\end{align}
By part (b) of \cref{lem:interaction},
\begin{equation*}
\norm*{e^{-i(A+B)t} - e^{-iBt}e^{-iAt}}
\le t \sum_{k=1}^{\infty} \frac{t^{k}}{k!} \norm{\ad_{B}^{k}(A)},
\end{equation*}
which is the desired result.
\end{proof}

\begin{lemma}\label{lem:lowdegree}
Let $H = A + B = \sum_{j} H^{(1)}_j + \sum_{j} H^{(2)}_j$
where each $H_j$ is a one- or two-qubit Hamiltonian with $\norm{H_j} \le 1$.
If each qubit is involved in at most $d'$ terms
and $h_B = \sum_j \norm{H_j^{(2)}}$ denotes the total interaction strength, then
\begin{equation}
  \norm{\ad_B^{k}(A)} \le 2^{k+1} d' h_{B}^{k}.
\end{equation}
\end{lemma}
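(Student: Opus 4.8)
The plan is to expand $\ad_B^k(A)$ by multilinearity into a sum of nested commutators indexed by \emph{chains} $(a,b_1,\dots,b_k)$, where $a$ labels a term $H_a^{(1)}$ of $A$ and each $b_i$ labels a term $H_{b_i}^{(2)}$ of $B$, so that
\[
  \ad_B^k(A) = \sum_{a}\sum_{b_1,\dots,b_k}
    \bigl[H_{b_k}^{(2)}, \bigl[\dots\bigl[H_{b_1}^{(2)}, H_a^{(1)}\bigr]\dots\bigr]\bigr].
\]
First I would bound the norm of each individual chain term: applying $\norm{[P,Q]}\le 2\norm{P}\norm{Q}$ repeatedly together with $\norm{H_a^{(1)}}\le 1$, every term has norm at most $2^k\prod_{i=1}^k\norm{H_{b_i}^{(2)}}$. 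Summing these bounds over all chains via the triangle inequality reduces the lemma to a purely combinatorial estimate on the weighted number of chains that can contribute.

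The key observation is a locality constraint: the innermost commutator $[H_{b_1}^{(2)},H_a^{(1)}]$ vanishes unless $H_{b_1}^{(2)}$ and $H_a^{(1)}$ share a qubit, so every nonzero chain satisfies $\operatorname{supp}(H_{b_1}^{(2)})\cap\operatorname{supp}(H_a^{(1)})\ne\emptyset$. The step I expect to be the main obstacle is avoiding a spurious factor $d'^k$: a naive accounting would impose an overlap condition at every level of the chain and charge a degree factor each time. The resolution is that for an \emph{upper} bound it suffices to retain only this single innermost constraint and let $b_2,\dots,b_k$ range freely, since the set of nonzero chains is contained in the set of chains satisfying only the $b_1$--$a$ overlap and all summands are nonnegative. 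The resulting free sums factor out, each contributing $\sum_{b_i}\norm{H_{b_i}^{(2)}}=h_B$, for a total of $h_B^{k-1}$.

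It then remains to bound the innermost contribution $\sum_{a}\norm{H_a^{(1)}}\sum_{b_1:\,\operatorname{supp}(H_{b_1}^{(2)})\cap\operatorname{supp}(H_a^{(1)})\ne\emptyset}\norm{H_{b_1}^{(2)}}$. Swapping the order of summation, I would fix $b_1$ and count the terms $a$ that touch it: since $H_{b_1}^{(2)}$ acts on at most two qubits and each qubit participates in at most $d'$ terms, at most $2d'$ terms $H_a^{(1)}$ overlap $H_{b_1}^{(2)}$, each of norm at most $1$. Hence this sum is at most $2d'\sum_{b_1}\norm{H_{b_1}^{(2)}}=2d'\,h_B$. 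Multiplying the three contributions---$2^k$ from the commutator bounds, $2d'\,h_B$ from the innermost level, and $h_B^{k-1}$ from the free outer levels---yields $\norm{\ad_B^k(A)}\le 2^{k+1}d'\,h_B^k$, as claimed.
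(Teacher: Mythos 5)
Your proof is correct and is essentially the paper's argument: the paper peels the outer adjoints abstractly via $\norm{\ad_B^{k}(A)} \le (2\norm{B})^{k-1}\norm{[A,B]} \le (2h_B)^{k-1}\norm{[A,B]}$ and then bounds $\norm{[A,B]} \le 4d'h_B$ by exactly your locality count (at most $2d'$ terms $H_i^{(1)}$ overlapping a fixed $H_j^{(2)}$), which is where the degree factor enters only once. Your explicit chain expansion with the single innermost overlap constraint is just an unrolled version of this same two-step bound, with identical constants throughout.
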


\begin{proof}
Since $\norm{[X,Y]} \le 2\norm{X}\norm{Y}$ for any matrices $X$ and $Y$,
\begin{align}
  \norm{\ad_{B}^{k}(A)}
  &\leq 2 \norm{B} \norm{\ad_{B}^{k-1}(A)} \\
  &\leq (2\norm{B})^{k-1}\norm{[A, B]} \\
  &\leq (2h_B)^{k-1}\norm{[A,B]}.
\end{align}
To bound $\norm{[A,B]}$, note that
\begin{align}
\norm{[A,B]}
&\le \sum_{j} \norm{[A,H_j^{(2)}]}\\
&\le \sum_{j} \sum_i \norm{[H_i^{(1)}, H_j^{(2)}]}\\
&\le \sum_{j} \sum_{i: [H_i^{(1)}, H_j^{(2)}] \neq 0} 2 \norm{H_i^{(1)}} \norm{H_j^{(2)}}\\
&=\sum_{j} \norm{H_j^{(2)}} \sum_{i:[H_i^{(1)}, H_j^{(2)}] \neq 0} 2.
\end{align}
For a fixed $j$, assume $H_{j}^{(2)}$ acts on qubits $a$ and $b$.
Then $[H_i^{(1)}, H_j^{(2)}]$ can be non-zero only when $H_i^{(1)}$ also acts on $a$ or $b$.
The number of such $H_i^{(1)}$ that do not commute with $H_j^{(2)}$ is bounded by $2d'$.
Hence
\begin{align}
\norm{[A,B]}
&\le \sum_{j} \norm{H_{j}^{(2)}} \, 4d'\\
&\le 4h_Bd'.
\end{align}
Combining everything together we get
$\norm{\ad_{B}^{k}(A)} \le (2h_B)^{k-1} \cdot 4h_Bd' = 2^{k+1} d' h_B^{k}$.
\end{proof}

\begin{lemma}\label{lem:time-step}
Assume the same setting as in \cref{lem:lowdegree}.
Then, for any positive $\epsilon \le d't$,
\begin{equation}
  \norm*{e^{-i(A+B)t} - (e^{-iBt/m}e^{-iAt/m})^{m}} \le 2 \epsilon
\end{equation}
whenever $m \ge \lceil{4d' h_B t^2 / \epsilon\rceil}$.
\end{lemma}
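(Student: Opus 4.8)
The plan is to reduce the $m$-step Trotter error to a single-step estimate and then sum over steps, exactly in the spirit of standard product-formula analyses. Writing $\tau = t/m$, the natural first move is to bound the error of one step, $\norm{e^{-i(A+B)\tau} - e^{-iB\tau}e^{-iA\tau}}$, by directly invoking \cref{lem:commutator}, which gives
\[
  \norm{e^{-i(A+B)\tau} - e^{-iB\tau}e^{-iA\tau}}
  \le \sum_{k=1}^{\infty} \frac{\tau^{k+1}}{k!}\,\norm{\ad_B^{k}(A)}.
\]
I would then feed in the commutator-scaling bound $\norm{\ad_B^{k}(A)} \le 2^{k+1} d' h_B^{k}$ from \cref{lem:lowdegree}, factor out $2 d' \tau$, and recognize the remaining sum $\sum_{k \ge 1} (2 h_B \tau)^{k}/k!$ as $e^{2 h_B \tau} - 1$. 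This yields the clean single-step bound $2 d' \tau\,(e^{2 h_B \tau} - 1)$, which captures the crucial point that the error is controlled by the inter-party strength $h_B$ and the local degree $d'$ rather than by the total number of terms in the Hamiltonian.

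To pass from one step to $m$ steps, I would use the fact that $e^{-i(A+B)t} = (e^{-i(A+B)\tau})^{m}$ and compare it termwise with $(e^{-iB\tau}e^{-iA\tau})^{m}$. Since all factors are unitary (hence of operator norm one), the standard telescoping inequality $\norm{U^{m} - V^{m}} \le m\,\norm{U - V}$ applies, so the total error is at most $m$ times the single-step bound, namely
\[
  \norm{e^{-i(A+B)t} - (e^{-iB\tau}e^{-iA\tau})^{m}}
  \le 2 d' t\,(e^{2 h_B t/m} - 1).
\]

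The last step is purely a calculus estimate. Setting $x = 2 h_B t/m$ and using $m \ge 4 d' h_B t^2/\epsilon$ gives $x \le \epsilon/(2 d' t)$, and the hypothesis $\epsilon \le d' t$ then forces $x \le 1/2$. On this range the elementary inequality $e^{x} - 1 \le 2x$ holds, so $e^{2 h_B t/m} - 1 \le \epsilon/(d' t)$, and the total error is at most $2 d' t \cdot \epsilon/(d' t) = 2\epsilon$, as claimed. I expect the only genuinely delicate points to be bookkeeping ones: verifying that the hypothesis $\epsilon \le d' t$ is exactly what keeps $x$ in the regime where $e^{x} - 1 \le 2x$ is valid, and checking that the unitarity of each factor legitimizes the telescoping step (the ceiling in $m$ only strengthens the bound). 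Neither is a real obstacle, so the substance of the lemma lives entirely in the single-step estimate inherited from \cref{lem:commutator,lem:lowdegree}.
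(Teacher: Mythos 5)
Your proposal is correct and takes essentially the same route as the paper's proof: a single-step error bound obtained by combining \cref{lem:commutator} with \cref{lem:lowdegree}, followed by the telescoping inequality $\norm{U^m - V^m} \le m \norm{U - V}$ for unitaries. The only (cosmetic) difference is in the final arithmetic: you sum the tail series in closed form as $2 d' \tau \of{e^{2 h_B \tau} - 1}$ and finish with $e^x - 1 \le 2x$ for $x \le 1/2$, whereas the paper substitutes the bound on $m$ termwise and controls a geometric series with ratio $\epsilon/(2 d' t) \le 1/2$ --- the hypothesis $\epsilon \le d' t$ plays the identical role in both arguments.
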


\begin{proof}
First, by combining \cref{lem:commutator,lem:lowdegree},
\begin{align}
&\norm*{e^{-i(A+B)t/m} - (e^{-iBt/m}e^{-iAt/m})}\nonumber\\
&\le \sum_{k=1}^{\infty} \frac{t^{k+1}}{m^{k+1}k!} \cdot (2^{k+1} d' h_B^{k})\\
&\le \frac{1}{m} \sum_{k=1}^{\infty}
\frac{2^{k+1}d'h_B^{k}}{\of*{4d'h_Bt^2/\epsilon}^k}
\frac{t^{k+1}}{k!}\\
& = \frac{1}{m} \sum_{k=1}^{\infty} \frac{\epsilon^{k}}{(2d't)^{k-1}k!} \\
& \leq \frac{\epsilon}{m} \sum_{k=1}^{\infty} \of*{\frac{\epsilon}{2d't}}^{k-1}\\
& \le \frac{2\epsilon}{m},
\end{align}
where the last inequality follows from $\epsilon \le d't$ and $1+\frac{1}{2}+\frac{1}{4} + \dotsc = 2$.
Note that
$U^m-V^m = U^{m-1}(U-V) + U^{m-2}(U-V)V + \dotsb + (U-V)V^{m-1}$
so, for any unitary operations $U$ and $V$,
\begin{equation}
  \norm{U^m-V^m} \le m \norm{U-V} \label{eq:mUV},
\end{equation}
where the last inequality follows from the triangle inequality and $\norm{U} = \norm{V} = 1$. Using this,
\begin{equation}
  \norm*{e^{-i(A+B)t} - (e^{-iBt/m}e^{-iAt/m})^{m}}
  \le m \cdot \frac{2\epsilon}{m},
\end{equation}
which is the desired bound.
\end{proof}

\begin{lemma} \label{lem:Hamiltonian}
Let $H = \sum_{j} H_{j}$ and $h=\sum_{j}\norm{H_j}$.
For any positive $\epsilon \le ht$,
\begin{equation}
  \norm[\Big]{e^{-iHt/m} - \prod_j e^{-iH_jt/m}} \le 2\epsilon/m
\end{equation}
whenever $m\ge \lceil 2h^2t^2/\epsilon \rceil$.
\end{lemma}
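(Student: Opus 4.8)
The plan is to compare the exact evolution $e^{-iH\tau}$ (writing $\tau = t/m$ throughout) with the product $g(\tau) = \prod_{j} e^{-iH_j\tau}$ by realizing the product \emph{itself} as a genuine time-ordered evolution and then invoking the ODE-comparison bound of \cref{lem:interaction}(b). Concretely, I would first set $g(s) = \prod_{j} e^{-iH_j s}$ and differentiate to check that $g$ solves $\tfrac{d}{ds} g(s) = -i\,\tilde{H}(s)\,g(s)$ with $g(0) = I$, where $\tilde{H}(s) = \sum_k C_k(s)\,H_k\,C_k(s)^\dagger$ and $C_k(s) = \prod_{j<k} e^{-iH_j s}$. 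Since each $\tilde{H}(s)$ is Hermitian (a unitary conjugate of a sum of Hermitians) and continuous in $s$, uniqueness of the solution identifies $g(s)$ with the evolution generated by the time-dependent Hamiltonian $\tilde{H}(s)$, while $e^{-iHs}$ is generated by the constant Hamiltonian $H$. This puts both objects in the exact form required by \cref{lem:interaction}.

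With this in hand, \cref{lem:interaction}(b) reduces everything to bounding $\norm{H - \tilde{H}(s)}$ for $s \in [0,\tau]$. I would write $H - \tilde{H}(s) = \sum_k \bigl(H_k - C_k(s)\,H_k\,C_k(s)^\dagger\bigr)$ and use two elementary estimates: for a unitary $C$, $\norm{H_k - C H_k C^\dagger} \le 2\norm{H_k}\,\norm{C - I}$; and $\norm{C_k(s) - I} \le \sum_{j<k}\norm{e^{-iH_j s} - I} \le s\sum_{j<k}\norm{H_j}$, where the first inequality telescopes a product of unitaries and the second uses $\norm{e^{-iH_j s} - I} \le s\norm{H_j}$. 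Combining these gives $\norm{H - \tilde{H}(s)} \le 2s\sum_{j<k}\norm{H_j}\norm{H_k} \le s\,h^2$, using $2\sum_{j<k}\norm{H_j}\norm{H_k} = h^2 - \sum_j \norm{H_j}^2 \le h^2$.

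Plugging $\delta = \tau h^2$ into \cref{lem:interaction}(b) then yields $\norm{e^{-iH\tau} - g(\tau)} \le \tau \cdot \tau h^2 = h^2 t^2 / m^2$. It only remains to discharge the hypothesis on $m$: when $m \ge \lceil 2h^2 t^2/\epsilon\rceil$ we have $h^2 t^2 / m \le \epsilon/2$, hence $h^2 t^2 / m^2 \le \epsilon/(2m) \le 2\epsilon/m$, which is the claimed bound (so the stated threshold is in fact comfortable, and the assumption $\epsilon \le ht$ is not even needed along this route).

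I expect the only delicate point to be the accounting that forces the estimate to depend on $h = \sum_j \norm{H_j}$ rather than on the number of Hamiltonian terms: the pairwise sum $\sum_{j<k}\norm{H_j}\norm{H_k}$ must be collapsed into $\tfrac12 h^2$, which is exactly what avoids a spurious factor equal to the number of terms. An alternative route, parallel to the proof of \cref{lem:time-step}, would telescope the product one factor at a time and bound each step by \cref{lem:commutator}; there the main obstacle instead becomes controlling the tail of the resulting infinite series of nested-commutator norms, which converges precisely because $\epsilon \le ht$ forces $h\tau \le 1/2$. The \cref{lem:interaction}(b) route sidesteps this infinite series entirely, so I would prefer it.
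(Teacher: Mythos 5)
Your proof is correct, but it takes a genuinely different route from the paper. The paper proves \cref{lem:Hamiltonian} by brute-force Taylor expansion: it expands both $e^{-iHt/m}$ and $\prod_j e^{-iH_j t/m}$ as power series, observes that the $k=0$ and $k=1$ terms cancel exactly, bounds each remaining order-$k$ term by $(ht/m)^k/k!$ via the triangle inequality and submultiplicativity (this is where $h=\sum_j\norm{H_j}$ enters, exactly the collapse you anticipated), and then controls the resulting infinite tail $\sum_{k\ge 2}$ using $m\ge\lceil 2h^2t^2/\epsilon\rceil$ together with the hypothesis $\epsilon\le ht$, which makes the ratio $\epsilon/(2ht)$ at most $\frac12$ so that the series sums geometrically to at most $2\epsilon/m$. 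Your argument instead realizes the ordered product as the time-ordered evolution generated by $\tilde{H}(s)=\sum_k C_k(s)H_kC_k(s)^\dagger$ and invokes \cref{lem:interaction}(b) with the generator-difference bound $\norm{H-\tilde{H}(s)}\le s h^2$; all the individual estimates you use (the derivative computation, $\norm{H_k - CH_kC^\dagger}\le 2\norm{H_k}\norm{C-I}$ by splitting off $(I-C)H_k + CH_k(I-C^\dagger)$, the telescoping bound on $\norm{C_k(s)-I}$, and $2\sum_{j<k}\norm{H_j}\norm{H_k}\le h^2$) check out. Interestingly, the paper does use \cref{lem:interaction} in this appendix, but only for \cref{lem:commutator} (the $A$-versus-$B$ split in \cref{lem:time-step}), never for \cref{lem:Hamiltonian} itself. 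What each approach buys: your route yields the strictly stronger bound $h^2t^2/m^2\le\epsilon/(2m)$, avoids any infinite series of error terms, and — as you correctly note — renders the hypothesis $\epsilon\le ht$ superfluous, whereas the paper's hypothesis is genuinely load-bearing for its tail estimate; the paper's route is more elementary (no ODE uniqueness or interaction-picture machinery) and runs in parallel with the standard first-order Trotter analysis used elsewhere in the appendix. Your closing remark is also accurate: a term-by-term telescoping via \cref{lem:commutator} would reintroduce a nested-commutator series whose convergence needs a smallness condition of the $\epsilon\le ht$ type, which is precisely the obstruction your chosen route sidesteps.
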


\begin{proof}
Let $L$ denote the number of terms in $H$.
By direction expansion,
\begin{align}
&\norm[\Big]{e^{-iHt/m} - \prod_{j} e^{-iH_jt/m}} \nonumber \\
&= \norm[\Bigg]{\sum_{k=0}^{\infty} (H_1+\dotsb+H_L)^{k} (t/m)^{k} / k! \\
&\quad-
  \sum_{k=0}^{\infty} \sum_{x_1+\dotsb+x_L=k}
  \frac{H_1^{x_1}\dotsb H_L^{x_L}(t/m)^{k}}{x_1!\dotsb x_L!}} \nonumber \\
&=
  \norm[\Bigg]{\sum_{k=2}^{\infty} (H_1+\dotsb+H_L)^{k} (t/m)^{k} / k! \\
&\quad-
  \sum_{k=2}^{\infty} \sum_{x_1+\dotsb+x_L=k}
  \frac{H_1^{x_1}\dotsb H_L^{x_L}(t/m)^{k}}{x_1!\dotsb x_L!}} \nonumber \\
&\le
  \sum_{k=2}^{\infty} \norm*{H_1+\dotsb+H_L}^k (t/m)^{k} /k! \\
&\quad+
  \sum_{k=2}^{\infty} \sum_{x_1+\dotsb+x_L=k}
  \frac{\norm{H_1^{x_1}\dotsb H_L^{x_L}}(t/m)^{k}}{x_1!\dotsb x_L!} \nonumber \\
&\le
  \sum_{k=2}^{\infty} (ht/m)^{k}/k! + \sum_{k=2}^{\infty} (ht/m)^{k}/k! \\
&\le
  \frac{1}{m}2\sum_{k=2}^{\infty} \frac{(ht)^{k}}{m^{k-1}} \frac{1}{k!}.
\end{align}
If $m\geq \lceil 2h^2t^2/\epsilon \rceil$,
\begin{align}
&\norm[\Big]{e^{-iHt/m} - \prod_{j} e^{-iH_jt/m}}\nonumber\\
&\le \frac{1}{m} 2\sum_{k=1}^{\infty} \frac{(ht)^{k+1}}{(2h^2t^2/\epsilon)^k k!} \\
&= \frac{1}{m} 2\sum_{k=1}^{\infty} \frac{\epsilon^{k}}{2^{k}h^{k-1}t^{k-1} k!}\\
&=\frac{\epsilon}{m} \sum_{k=1}^{\infty} \of*{\frac{\epsilon}{2ht}}^{k-1} \frac{1}{k!}\\
& \le \frac{2\epsilon}{m},
\end{align}
as desired.
\end{proof}

\newcommand{\Qart}{\mathcal{Q}}

\begin{lemma}\label{lem:gate-cut}
Let $(C,f)$ be a QC algorithm and $\Qart = \set{Q_1,\dotsc,Q_r}$ be a partition of its qubits.
Let $g'$ be the $r$-vertex (multi-)graph obtained by representing each subset $Q_i$ as a vertex $i$ and
each two-qubit gate in $C$ acting across sets $Q_i$ and $Q_j$ as an edge $(i,j)$.
Let $K$ be the number of edges in $g'$ and let $d=\max_{i}|Q_i|$ be the size of the largest set of qubits.
Then $(C,f)$ has a $(2^{O(K)},d)$-simulator.
If in addition $f$ is decomposable,
i.e., $f(y) = \prod_{i=1}^r f_i(y_i)$ where $y_i$ is the outcome of the standard basis measurement on qubits $Q_i$,
and $f_i(y_{i}) \in [-1,1]$ then $(C,f)$ has a $(2^{O(\cc(g'))}, d)$-simulator.
\end{lemma}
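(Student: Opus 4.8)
The plan is to reduce both claims to the two clustered-circuit simulation theorems already established, \cref{thm:main} and \cref{thm:Treewidth}, by exhibiting the partition $\Qart$ as a genuine clustering of the tensor network of $C$. First I would build the tensor network $(G,\A)$ of $(C,f)$ as in Supplement~II and let each subset $Q_i$ define a cluster: every single-qubit gate and every two-qubit gate acting inside a single $Q_i$ is placed in the obvious cluster, while each two-qubit gate acting across $Q_i$ and $Q_j$ is assigned (say) to the cluster of smaller index. The crucial counting observation is that such a cross-cluster gate is a single tensor with four incident wires, exactly two of which — the incoming and outgoing wire of the ``foreign'' qubit — cross between clusters; hence each of the $K$ cross-cluster gates contributes precisely two inter-cluster edges to $G$. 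Consequently the induced clustering makes $C$ a $(2K,d)$-clustered circuit with $r$ clusters.

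For the first part I would then invoke \cref{thm:main} directly: a $(2K,d)$-clustered circuit can be simulated to precision $\epsilon$ using $O(2^{4\cdot 2K}\, r/\epsilon^2)$ runs of a $d$-qubit device, i.e.\ a $(2^{O(K)},d)$-simulator, as claimed.

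For the decomposable case I would instead invoke \cref{thm:Treewidth}. Its guarantee is phrased in terms of the contracted multigraph $g$ that collapses each cluster of $G$ to a single vertex and keeps the inter-cluster edges, with quantum and classical costs governed by $2^{O(d'(g))}$ and $2^{O(\cc(g))}$, respectively. The remaining step is to relate this $g$ to the gate graph $g'$ of the lemma. By the counting observation above, $g$ is exactly $g'$ with every edge doubled into two parallel edges. A contraction order for $g'$ lifts to one for $g$ by contracting each pair of parallel edges together, at most doubling the degree of every intermediate tensor; hence $d'(g)\le 2d'(g')$ and $\cc(g)\le 2\cc(g')=O(\cc(g'))$. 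Feeding these bounds into \cref{thm:Treewidth} — using that $f$ is decomposable with $f_i\in[-1,1]$, so each $O_{f_i}$ can be absorbed into its cluster exactly as in that proof — yields the claimed $(2^{O(\cc(g'))},d)$-simulator.

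The step I expect to require the most care is the qubit bookkeeping that justifies the device width ``$d$''. When the two foreign wires of a cross-cluster gate are cut via \cref{lem:edgecut}, the cluster that owns the gate must, at the moment it applies the gate, supply the foreign qubit as a freshly prepared Pauli eigenstate and subsequently measure it in a Pauli basis; this temporarily needs one ancilla beyond the $|Q_i|\le d$ qubits of the cluster. The obstacle is thus to argue that a single reusable ancilla suffices, so that the width is only $d+O(1)$ and the $(\,\cdot\,,d)$ bound is unaffected: since a cluster's gates are applied in a fixed (topological) order and each cross-cluster gate prepares, uses, and measures its foreign qubit atomically, the ancilla is free between successive cross-cluster gates and can be recycled. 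Verifying that this serialization never forces two ancillas to coexist — together with confirming that the factor ``$2K$'' and the edge-doubling of $g$ are the only effects of cross-cluster gates — is the main thing to nail down; everything else follows mechanically from \cref{thm:main,thm:Treewidth}.
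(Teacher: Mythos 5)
Your high-level strategy is the same as the paper's: realize the partition $\{Q_1,\dotsc,Q_r\}$ as a clustering of the tensor network, invoke \cref{thm:main} for the first claim and \cref{thm:Treewidth} for the second, and relate the induced multigraph $g$ to $g'$ via doubled edges (the paper makes exactly this move, in reverse, to get $\cc(g) = O(\cc(g'))$). The genuine gap is the very point you flagged and then waved away. Under your clustering, where each cross gate is assigned to the cluster of smaller index, the owning cluster must hold the foreign qubit as a freshly prepared Pauli eigenstate at the moment it applies the gate, while in general all $|Q_i| \le d$ of its own qubits are still live mid-circuit; your serialization argument correctly shows one ancilla can be recycled across successive cross gates, but nothing lets you avoid that ancilla, so your construction yields a $(2^{O(K)}, d+1)$-simulator, not the claimed $(2^{O(K)}, d)$-simulator. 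Asserting that ``the $(\,\cdot\,,d)$ bound is unaffected'' is exactly the step that fails: the lemma claims width $d$, and this matters downstream in \cref{thm:Hamiltonian}, where $d$ is precisely the number of qubits of the largest party.

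The paper's proof avoids the ancilla by not assigning cross gates to either side: each cross-cluster gate $U_j$ becomes its own two-qubit cluster $T_j$, cut out by \emph{four} applications of \cref{lem:edgecut} (so the induced graph has $4K$ inter-cluster edges rather than your $2K$ --- still $2^{O(K)}$, so this costs nothing asymptotically), and since $T_j$ involves only two qubits it is evaluated \emph{classically}, consuming no quantum width. Within each remaining cluster $S_i$, the qubit of $Q_i$ that fed into $U_j$ is measured at the cut and the \emph{same physical qubit} is then re-initialized in the appropriate Pauli eigenstate for the continuation, so $|Q_i| \le d$ qubits suffice by recycling, with no ancilla at all. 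With this replacement the rest of your argument goes through essentially verbatim; in particular, absorbing each $T_j$ into an adjacent $S_i$ is exactly what produces the doubled edges of $g'$ that you use to conclude $\cc(g) = O(\cc(g'))$ in the decomposable case.
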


\begin{proof}
Let $(G,\A)$ be the tensor network of $(C,f)$.
Our strategy is to construct a clustering of $G$ whose induced graph $g$ is very similar to the graph $g'$ given in the Lemma.
The two simulators are then obtained by applying \cref{thm:main,thm:Treewidth}, respectively.

We obtain the desired clustering of $G$ from the graph $g'$.
Since each cluster corresponds to either a vertex or an edge in $g'$, we get two types of clusters.
Let us describe them in more detail.

For each vertex $i$ of $g'$, we introduce a cluster $S_i$ that contains tensors that act only on qubits $Q_i$.
That is, $S_i$ consists of all vertices of $G$ that correspond to either an initial qubit state or a single- or a two-qubit gate in the set $Q_i$.

Similarly, we introduce a separate cluster also for each edge of $g'$
(recall that the edges of $g'$ correspond to two-qubit gates between different parts of $\Qart$).
We denote these clusters by $T_j$.

Since the number of edges of $g'$ is $K$, the overall clustering of $G$ is given by
\begin{equation}
  \Part = \set{S_1, \dotsc, S_r}
     \cup \set{T_1, \dotsc, T_K},
\end{equation}
where (as usual) the final observable $O_f$ has been left out of the clustering.
This partition induces a graph $g$ \footnote{Although we reuse the symbol $g$, here it has a slightly different meaning than in \cref{thm:Hamiltonian} where this Lemma is evoked. Here the graph $g$ is induced from a circuit $C$ while in \cref{thm:Hamiltonian} it is induced by interactions of a Hamiltonian $H$. The former is obtained from the latter via a Trotter approximation, see \cref{eq:dd}.} that is very similar to $g'$.
In particular, the number of edges in $g$
(or the number of qubits communicated between clusters of $\Part$)
is $O(K)$.

Let us verify that the number of qubits needed for simulating each cluster of $\Part$ is also $d$.
Following the strategy of \cref{thm:main}, we can repeatedly apply \cref{lem:edgecut} to cut the tensor network into separate clusters.
Since all interactions between qubits in different parts of $\Qart$ are mediated by two-qubit gates $U_j$ represented by the clusters $T_j$,
splitting the tensor network into clusters $\Part$ effectively amounts to cutting these two-qubit gates $U_j$ out of the circuit.
For each $U_j$, we apply \cref{lem:edgecut} four times:
\begin{center}
\begin{tikzpicture}[thick, > = latex,
  gate/.style = {fill = white, draw, text height = 1.5ex, text depth = .25ex, minimum size = 30pt},
  pt/.style = {circle, draw = black, fill = black, inner sep = 1.5pt},
  tr/.style = {isosceles triangle, isosceles triangle stretches, inner sep = 0pt, minimum height = 7pt, minimum width = 11pt, fill = white},
  ob/.style = {tr, draw},
  st/.style = {tr, draw, shape border rotate = 180},
  bd/.style = {draw = none, rounded corners = 2pt},
	nr/.style = {draw = none, rounded corners = 0pt},
	lo/.style = {orange!20},
	lb/.style = {blue!20}
]

  \def\h{0.30}  
  \def\H{0.7}   
  \def\e{0.06}  
  \def\c{0.6}   

  \def\UL{-4.5} 
  \def\UR{-2.7} 

  \fill[lb]
  (\UL-0.3,\H+\c) --
  (\UL-0.3,\e)[bd] --
  (\UL+0.1,\e) --
  (\UL+0.1,\H) --
  (\UR-0.3,\H) --
  (\UR-0.3,\e)[nr] --
  (\UR+0.1,\e) --
  (\UR+0.1,\H+\c) -- cycle;

  \fill[lo]
  (\UL-0.3,-\H-\c) --
  (\UL-0.3,-\e)[bd] --
  (\UL+0.1,-\e) --
  (\UL+0.1,-\H) --
  (\UR-0.3,-\H) --
  (\UR-0.3,-\e)[nr] --
  (\UR+0.1,-\e) --
  (\UR+0.1,-\H-\c) -- cycle;

  \draw[->] (\UL, \h) -- (\UR, \h);
  \draw[->] (\UL,-\h) -- (\UR,-\h);

  \node[gate] (I) at (-3.7,0) {$U_j$};

  \node at (-2.3,0) {$=$};
  \node at (-1.6,0) {$\displaystyle\sum_{k_1,\dotsc,k_4=1}^8 \!\!\! a_k$};

  \def\UL{0.0} 
  \def\UR{2.4} 
  \def\gp{0.65} 

  \fill[lb]
  (\UL-\gp,\H+\c) --
  (\UL-\gp,\e)[bd] --
  (\UL+0.0,\e) --
  (\UL+0.0,\H) --
  (\UR+0.0,\H) --
  (\UR+0.0,\e)[nr] --
  (\UR+\gp,\e) --
  (\UR+\gp,\H+\c) -- cycle;

  \fill[lo]
  (\UL-\gp,-\H-\c) --
  (\UL-\gp,-\e)[bd] --
  (\UL+0.0,-\e) --
  (\UL+0.0,-\H) --
  (\UR+0.0,-\H) --
  (\UR+0.0,-\e)[nr] --
  (\UR+\gp,-\e) --
  (\UR+\gp,-\H-\c) -- cycle;

  \node[ob] (O1) at (\UL-0.15, \h) {}; \draw[->] (O1)+(-0.45,0) -- (O1);
  \node[ob] (O2) at (\UL-0.15,-\h) {}; \draw[->] (O2)+(-0.45,0) -- (O2);
  \node[st] (R1) at (\UL+0.15, \h) {};
  \node[st] (R2) at (\UL+0.15,-\h) {};

  \def\d{0.60} 

  \path (O1)+(-0.05, \d) node {\scriptsize$O_{k_1}^1$};
  \path (O2)+(-0.05,-\d) node {\scriptsize$O_{k_2}^2$};
  \path (R1)+( 0.23, \d) node {\scriptsize$\rho_{k_1}^1$};
  \path (R2)+( 0.23,-\d) node {\scriptsize$\rho_{k_2}^2$};

  \node[ob] (O3) at (\UR-0.15, \h) {}; \draw[->] (R1) -- (O3);
  \node[ob] (O4) at (\UR-0.15,-\h) {}; \draw[->] (R2) -- (O4);
  \node[st] (R3) at (\UR+0.15, \h) {}; \draw (R3) -- ++(0.4,0);
  \node[st] (R4) at (\UR+0.15,-\h) {}; \draw (R4) -- ++(0.4,0);

  \path (O3)+(-0.05, \d) node {\scriptsize$O_{k_3}^3$};
  \path (O4)+(-0.05,-\d) node {\scriptsize$O_{k_4}^4$};
  \path (R3)+( 0.23, \d) node {\scriptsize$\rho_{k_3}^3$};
  \path (R4)+( 0.23,-\d) node {\scriptsize$\rho_{k_4}^4$};

  \node[gate] (I) at (0.5*\UL+0.5*\UR,0) {$U_j$};

\end{tikzpicture}
\end{center}
where $a_k = c_{k_1} c_{k_2} c_{k_3} c_{k_4} \in \{-\frac{1}{2^{4}}, \frac{1}{2^{4}}\}$
is the product of the four coefficients from \cref{lem:edgecut}
and $O_{k_1}^1, \dotsc, O_{k_4}^4$
and $\rho_{k_1}^1, \dotsc, \rho_{k_4}^4$
are the four associated observables and states, respectively.
Note that we can simulate each $U_j$ classically since only two qubits are involved.
This takes care of all clusters $T_j$.

The remaining clusters $S_i$ are associated to subsets of qubits $Q_i$.
Let us argue that $|Q_i|$ qubits are enough for simulating $S_i$ if qubit recycling is permitted.
Indeed, for each input qubit of $S_i$ that results from cutting out some two-qubit gate $U_j$
(e.g., $\rho_{k_3}^3$ or $\rho_{k_4}^4$ in the above figure),
there always exists an output qubit of $S_i$
(e.g., the qubit measured by $O_{k_1}^{1}$ or $O_{k_2}^{2}$ in the above figure)
that can be recycled to be the new input qubit after the measurement.
Hence, no additional qubits -- other than the original $|Q_i|$ ones -- are needed for simulating the cluster $S_i$.

To summarize, the original parameters $K$ and $d$ of the graph $g'$ defined in the statement of the Lemma translate into $O(K)$ and $d$ for the graph $g$ induced by the partition $\Part$ of $G$.
Hence, the circuit $C$ is $(O(K),d)$-clustered
and thus by \cref{thm:main} has a $(2^{O(K)},d)$-simulator.

If $f$ is decomposable, we can get a $(2^{O(\cc(g))}, d)$-simulator by using \cref{thm:Treewidth}.
We can find a good contraction order for $g$ based on $g'$.
Indeed, we can first absorb each cluster $T_j$ that represents some two-qubit gate $U_j$
into one of its adjacent clusters $S_i$ or $S_{i'}$ --
this effectively introduces a double edge between $S_i$ and $S_{i'}$.
Repeating this for each $T_j$ reduces the graph $g$ to $g'$.
Hence $\cc(g) = O(\cc(g'))$, yielding a $(2^{O(\cc(g'))},d)$-simulator for $(C,f)$.
\end{proof}

\subsection{Appendix VII: VQE experiments}

We consider VQE algorithms using the following parameterized circuit with $n$ qubits and $D$ layers proposed in \cite{kandala2017hardware}:
\begin{equation}
  U(\theta) = U_D(\theta_D) \Uent \dotsb U_1(\theta_1) \Uent U_0(\theta_0),
  \label{eq:Utheta}
\end{equation}
where $U_i(\theta_i) = \bigotimes_{j=1}^n U_i^j(\theta_i^j)$ and each $U_i^j(\theta_i^j)$ is a parameterized single-qubit gate applied on the $j$-th qubit:
\begin{equation}
  U_i^j(\theta_i^j) = Z_j(\theta_{i,1}^j) X_j(\theta_{i,2}^j) Z_j(\theta_{i,3}^j),
\end{equation}
where $Z_j(\beta) = \exp(-i \beta Z)$ and $X_j(\beta) = \exp(-i \beta X)$ are single-qubit $X$ and $Z$ rotations applied to the $j$-th qubit. Since $U_0(\theta_0)$ is applied directly to the input $\ket{0}^{n}$, the first $Z$ rotation gates can be removed:
\begin{equation}
  U_0(\theta_0) = \bigotimes_{j=1}^{n} Z_j(\theta_{0,1}^j) X_j(\theta_{0,2}^j),
\end{equation}
so in total $\theta$ has $(3D+2)n$ parameters. Each layer of single-qubit gates in \cref{eq:Utheta} is followed by the following layer of entangling gates:
\begin{align}
  \Uent = \prod_{i=1}^{n-1} \mathrm{CZ}(i,i+1),
\end{align}
where $\mathrm{CZ}(i,j)$ denotes the controlled $Z$ gate between qubits $i$ and $j$, defined as $\mathrm{CZ} \ket{a,b} = (-1)^{ab} \ket{a,b}$, for all $a,b \in \set{0,1}$.

We run the VQE circuit from~\cite{kandala2017hardware} with $n=6$ and $D=1$ on the 5-qubit ``ibmq\_ourense'' cloud quantum computer provided by the IBM Quantum Experience \cite{IBM:20}. Our experiment uses up to three qubits. Our goal is to solve the optimization problem $\min_\theta F(\theta)$ with
\begin{align}
F(\theta) = \bra{0}\xp{6} U(\theta)^{\dag} H U(\theta) \ket{0}\xp{6}
\end{align}
where $\theta$ has $(3+2)\cdot6=30$ parameters and the minimal eigenvalue of $H$ approximates the ground energy of $\text{BeH}_{2}$ with an interatomic distance of 1.7~\AA~\footnote{We choose $H$ as a diagonal matrix, which both provides low measurement complexity and desired approximation precision of the energy.}.

We optimize the parameters of $\theta$ using the simultaneous perturbation stochastic approximation (SPSA) method. Let $\theta(k)$ denote the parameters at the $k$-th iteration of SPSA. In each iteration, a random vector $r(k)$ is generated, where each element of $r(k)$ is drawn uniformly from $\{1, -1\}.$ Let $\theta^{\pm}(k) = \theta(k) \pm c_{k} r(k)$, where $c_{k}$ are hyper-parameters. The gradient of $\theta(k)$ is estimated by
\begin{align}
g(k) = \frac{F(\theta^{+}(k)) - F(\theta^{-}(k))}{2 c_k r(k)}
\end{align}
where $F$ is evaluated by the quantum device. Then, $\theta$ is updated as follows:
\begin{align}
\theta(k+1) = \theta(k) - a_k g(k)
\end{align}
where $a_{k}$ controls the learning rate. We set $c_{k} = 0.3/\sqrt{k}$ and $a_{k} = k^{-0.3}$, and perform $200$ iterations.

In \cite{kandala2017hardware}, $F(\theta)$ is evaluated by running $U(\theta)$ on a $6$-qubit quantum computer. Using our cluster simulation scheme, we evaluate $F$ using only $3$~qubits. To do so, we partition the six qubits of $U(\theta)$ into two sets: $\{1,2,3\}$ and $\{4,5,6\}.$ We then decompose the only two-qubit gate CZ$(3,4)$ that acts between these two sets using the gadget proposed in~\cite{mitarai2019constructing} (see \cref{fig:CZ-decomposition}~\footnote{We have corrected typos that appear in Fig.~2 of~\cite{mitarai2019constructing}.}). Note that $Z$, $RZ(\pi/2)$, and $(I+aZ)/2$ are all single-qubit operations~\footnote{To implement $(I+aZ)/2$ on the IBM Quantum Experience, we directly measure the qubit in the $Z$-basis and do classical post-processing.}, so the circuits on $\{1,2,3\}$ and $\{4,5,6\}$ can be run separately using only $3$~qubits. For simulating each run of $U(\theta)$, we need to execute $12$ different circuits on a $3$-qubit computer. For each of these circuits, we produce $8000$ samples to estimate their output. The value of $F(\theta)$ is then obtained by aggregating and post-processing these outputs.

The resulting accuracy of our estimate is similar to the one obtained using the full $6$-qubit circuit~\cite{kandala2017hardware}, including the evaluation of $\theta^{-}(k), \theta^{+}(k)$ and $\theta(k)$. See \cref{fig:VQE-details}. This demonstrates the potential of our cluster simulation scheme for decreasing the number of qubits required in near-term VQE applications. Our source code can be found at:
\url{https://github.com/TianyiPeng/Partiton_VQE}.
\begin{figure}
\includegraphics[scale=0.45]{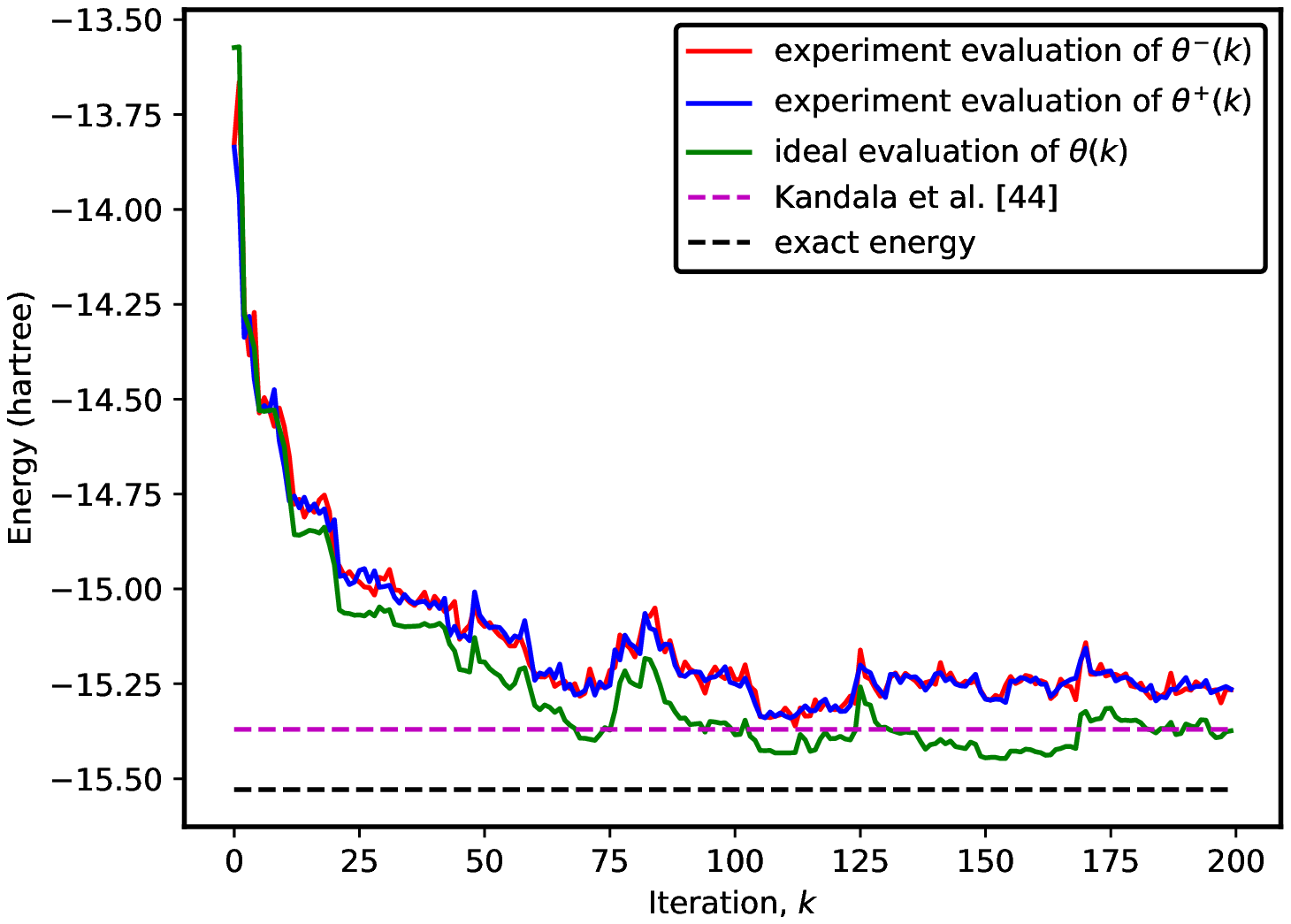}
\caption{Experiment results for estimating the ground energy of $\text{BeH}_2$ by running the 6-qubit VQE of \cite{kandala2017hardware} on a 3-qubit device. The red (blue) line is $F(\theta^{-}(k))$ ($F(\theta^{+}(k))$) estimated in the experiments at the $k$-th step. The green line is the value of $F(\theta(k))$ evaluated ideally. The pink dotted line is the final energy estimate from \cite{kandala2017hardware} with sufficient samples to approximate an ideal evalution. The black dotted line is the exact ground energy of $\text{BeH}_2$ with the with interatomic distance of 1.7~\AA{}.
}
\label{fig:VQE-details}
\end{figure}

\usetikzlibrary{decorations.pathreplacing}
\begin{figure*}
\begin{center}
\begin{tikzpicture}[thick, > = latex,
  gate/.style = {fill = white, draw, text height = 1.5ex, text depth = .25ex, minimum size = 40pt},
  sgate/.style = {fill = white, draw, text height = 1.5ex, text depth = .25ex, minimum size = 15pt},
  pt/.style = {circle, draw = black, fill = black, inner sep = 1.5pt},
  tr/.style = {isosceles triangle, isosceles triangle stretches, inner sep = 0pt, minimum height = 7pt, minimum width = 11pt, fill = white},
  ob/.style = {tr, draw},
  st/.style = {tr, draw, shape border rotate = 180},
  bd/.style = {draw = none, rounded corners = 2pt},
	nr/.style = {draw = none, rounded corners = 0pt},
	lo/.style = {orange!20},
	lb/.style = {blue!20}
]

  \def\h{0.50}  
  \def\H{0.7}   
  \def\e{0.06}  
  \def\c{0.6}   

  \def\UL{-5} 
  \def\UR{-1.0} 

  \node at (-5.5, 0) {$2$};
  \draw[->] (\UL, \h) -- (\UR, \h);
  \draw[->] (\UL,-\h) -- (\UR,-\h);

  \node[gate] (I) at (-3.7,0) {$CZ$};

  \node[sgate] at (-2, \h) {$RZ(\frac{\pi}{2})$};
  \node[sgate] at (-2, -\h) {$RZ(\frac{\pi}{2})$};

  \node at (-0.5,0) {$=$};

  \def\UL{0.0} 
  \def\UR{0.9} 

  \draw[->] (\UL, \h) -- (\UR, \h);
  \draw[->] (\UL,-\h) -- (\UR,-\h);

  \node at (1.2, 0) {$+$};
  \def\UL{1.5} 
  \def\UR{2.6} 

  \draw[->] (\UL, \h) -- (\UR, \h);
  \draw[->] (\UL,-\h) -- (\UR,-\h);
  \node[sgate] at (2, \h) {$Z$};
   \node[sgate] at (2, -\h) {$Z$};

  \node at (3.0, 0) {$+$};
  \node at (4.2, -0.1) {$\displaystyle \sum_{a_1,a_2 \in \{\pm 1\}^2} \hspace{-5mm} a_1 a_2$};
  \def\UL{5.7} 
  \def\UR{7.9} 

  \draw[->] (\UL, \h) -- (\UR, \h);
  \draw[->] (\UL,-\h) -- (\UR,-\h);

  \node[sgate] at (\UL+1, \h) {$\frac{I+a_2Z}{2}$};
  \node[sgate] at (\UL+1, -\h) {$RZ(\frac{a_1\pi}{2})$};

  \draw [decorate,decoration={brace,amplitude=2ex},xshift=-0.5ex,yshift=0pt]
(5.7,-\h-0.6) -- (5.7,\h+0.6);

  \node at (8.2, 0) {$+$};
  \def\UL{8.4} 
  \def\UR{10.6} 

  \draw[->] (\UL, \h) -- (\UR, \h);
  \draw[->] (\UL,-\h) -- (\UR,-\h);

  \node[sgate] at (\UL+1, \h) {$RZ(\frac{a_1\pi}{2})$};
  \node[sgate] at (\UL+1, -\h) {$\frac{I+a_2Z}{2}$};

  \draw [decorate,decoration={brace,amplitude=2ex,mirror},xshift=-0.5ex,yshift=0pt]
(10.8,-\h-0.6) -- (10.8,\h+0.6);

\end{tikzpicture}
\end{center}
\caption{Decomposition of the $CZ$ gate where $RZ(\theta) = e^{-i\theta Z/2}$ \cite{mitarai2019constructing}.}
\label{fig:CZ-decomposition}
\end{figure*}
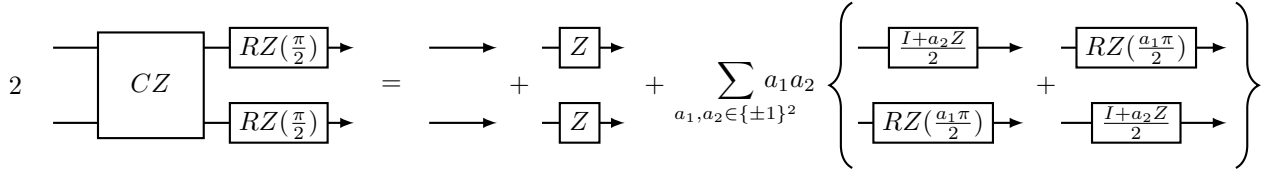

\subsection{Appendix VIII: Reducing $\cc(g)$ in VQE}
To reduce the number of circuits required to run, we propose the following steps when applying our scheme to the VQE circuit in \cref{eq:Utheta}
\begin{itemize}
\item[1.] choose a partition $\Part = \set{S_1, \dotsc, S_r}$ of $n$ qubits such that $\abs{S_i} \leq d$ for each $i$;
\item[2.] remove some entangling gates from $\Uent$ that go across different parts of $\Part$ and denote the modified sequence by $\Uent'$;
\item[3.] replace some occurrences of $\Uent$ in the original circuit $U(\theta)$ by $\Uent'$ and denote the modified circuit by $U'(\theta)$.
\end{itemize}
Given the circuit $U'(\theta)$ with fixed parameters $\theta$, we estimate $\bra{0}\xp{n} U'^{\dag}(\theta) H U'(\theta) \ket{0}\xp{n}$ by performing Pauli measurements and summing according to the Pauli decomposition of $H$:
\begin{equation}
H = \sum_{i=1}^{L} \alpha_i \of[\bigg]{\bigotimes_{j=1}^{n}\sigma_{i,j}},
\end{equation}
where $\sigma_{i,j} \in \{X,Y,Z,I\}$ is a Pauli matrix. Since the measurement is Pauli observable, we can use a QC circuit $(C, f)$ to compute the expectation value, where $f$ is decomposable. Denote $g$ by regarding each set $S_i$ as a node and each gate that acts cross two sets as an edge. By \cref{lem:gate-cut}, the modified circuit has a $(2^{O(\cc(g))},d)$-simulator where $\cc(g)$ is the contraction complexity of the graph $g$.

This leads to a trade-off in time versus performance: replacing $\Uent$ by $\Uent'$ decreases $\cc(g)$ and hence reduces the simulation time; however, this also decreases the expressive power of the circuit and thus degrades the quality of the solution. This may be justified in physically meaningful cases, e.g.,when the true ground state obeys an area law~\cite{RevModPhys.82.277}.

Our numerical observations suggest that for a small scale quantum circuit, replacing some $\Uent$ by $\Uent'$ does not influence the performance significantly. Specifically, we let $\Uent$ be a fixed sequence of two-qubits gates for producing entanglement:
\begin{equation}
  \Uent = \prod_{i=1}^{n-1} \prod_{j=i+1}^{n} \mathrm{CNOT}(i,j),
\end{equation}
where $\mathrm{CNOT}(i,j)$ is the CNOT gate with the $i$-th qubit as control and the $j$-th qubit as target.

We partition our $n$ qubits into two sets: $S_1 = \set{1,\dotsc,\frac{n}{2}}$ and $S_2 = \set{\frac{n}{2}+1,\dotsc,n}$ and construct $\Uent'$ form $\Uent$ by removing CNOT gates that cross the two sets:
\begin{align}
\Uent' =& \left(\prod_{i=n/2+1}^{n-1}\prod_{j=i+1}^{n} \mathrm{CNOT}(i,j) \right)\\ &\cdot  \left(\prod_{i=1}^{n/2-1}\prod_{j=i+1}^{n/2} \mathrm{CNOT}(i,j)\right).
\end{align}
Replacing $\Uent$ by $\Uent'$ decreases the number of gates across the two sets.

Let $R \subseteq \set{1, \dotsc, D}$ denote the set of locations where we replace $\Uent$ by $\Uent'$ and let $D_1$ denote the number of remaining $\Uent$ gates. In our numerical experiments, $n = 6$ and we compare three different regimes:
\begin{itemize}
  \item $D = 9, D_1 = 9, R = \set{}$;
  \item $D = 9, D_1 = 3, R = \set{1,2,4,6,8,9}$;
  \item $D = 3, D_1 = 3, R = \set{}$.
\end{itemize}

For each run of the experiment, we generate a Hamiltonian
\begin{equation}
  H = \sum_{i=1}^{L} \alpha_i \of[\bigg]{\bigotimes_{j=1}^{n}\sigma_{i,j}},
\end{equation}
where $\alpha_i$ is drawn uniformly at random from $[-1,1]$ and $\sigma_{i,j}$ is a random Pauli matrix drawn uniformly from the set $\set{I,X,Y,Z}$, with all random variables drawn independently. We set the number of terms to be $L = 50$. After generating the Hamiltonian $H$, we execute $10^5$ steps of the simultaneous perturbation stochastic approximation (SPSA) method to determine the parameters $\theta$. At step $t$, we record the relative error of the current expectation value and the optimal eigenvalue:
\begin{equation}
  \left|\frac{\bra{0}\xp{n} U(\theta(t))\ct H U(\theta(t)) \ket{0}\xp{n} - v_{\mathrm{opt}}}{v_{\mathrm{opt}}}\right|,
\end{equation}
where $\theta(t)$ are the current parameter values and $v_{\mathrm{opt}} = \min_{\ket{\phi}}\bra{\phi}H\ket{\phi}$ is the smallest eigenvalue of $H$ obtained by explicitly diagonalizing the Hamiltonian. We run each experiment $100$ times and average the relative error over these runs. The code can be found at \url{https://github.com/TianyiPeng/Partiton_VQE}.

We report our numerical observations in \cref{fig:VQE}. Our results show that, for the problem at hand, replacing some $\Uent$ by $\Uent'$ does not greatly influence the performance of the VQE. This suggests that our scheme is worth pursuing when implementing VQE with limited quantum memory. Because of its heuristic nature, theoretical analysis of our scheme is difficult and awaits future exploration.

\begin{figure}
\includegraphics[scale=0.48]{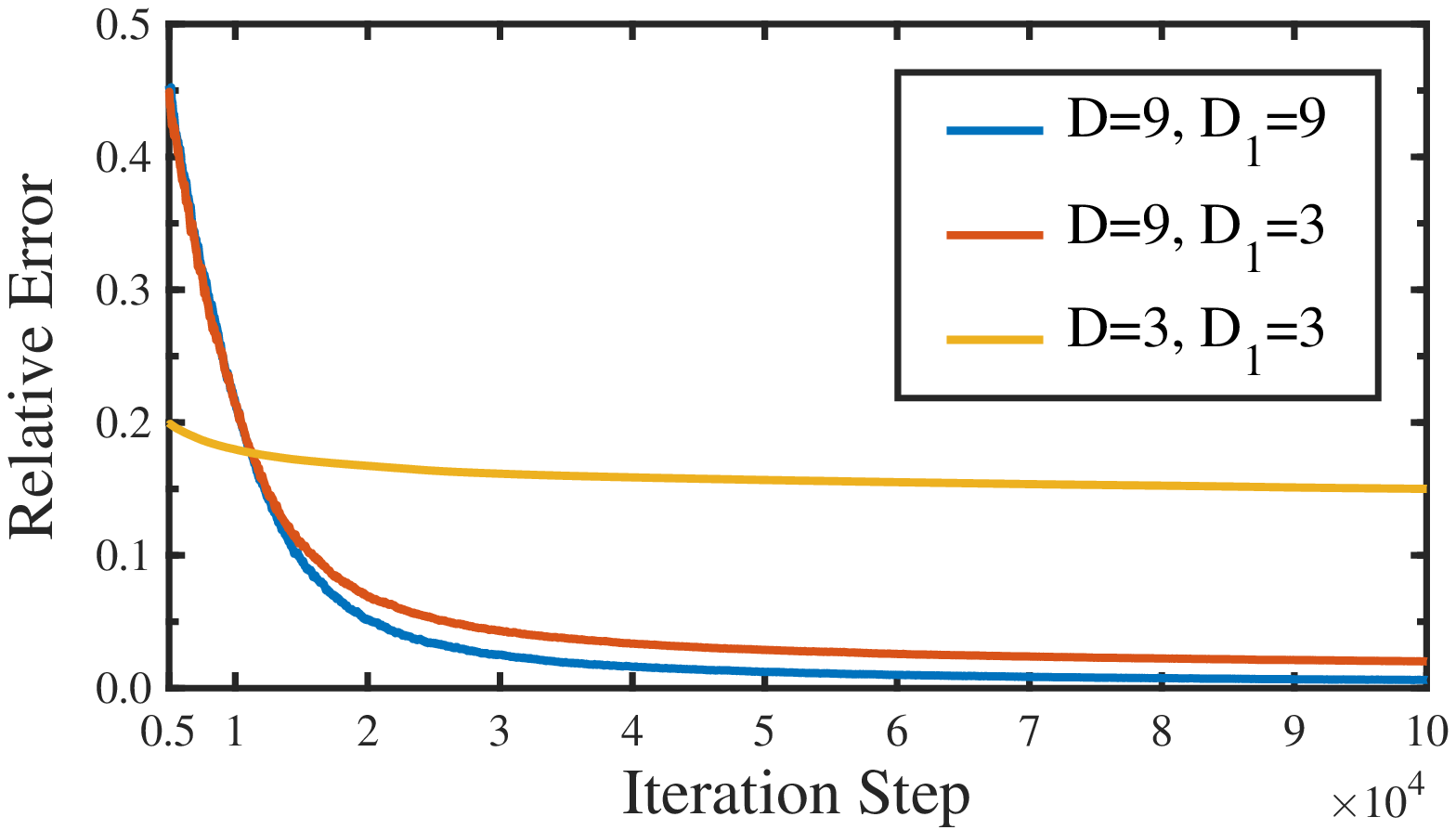}
\caption{In this example, we consider $n=6$ qubits and circuits of depth $D=9$. Let $\Uent$ be a sequence of CNOT gates. We choose $\Uent'$ by removing gates from $\Uent$ between the sets $\set{1,\dotsc,\frac{n}{2}}$ and $\set{\frac{n}{2}+1,\dotsc,n}$. We keep $D_1$ out of $D$ occurrences of $\Uent$ in $U(\theta)$ and replace the rest by $\Uent'$.
We use randomly sampled $H$ to test the performance of different parameter combinations: blue $(D=9, D_1=9)$, red $(D=9,D_1=3)$, and yellow $(D=3, D_1=3)$.
Our numerical results suggest that the behaviors of the blue and red settings are similar, i.e., replacing some $\Uent$ by $\Uent'$ may not influence the VQE performance. Both perform better than the yellow setting which has the same depth of $\Uent$ gates as the red one, but without any $\Uent'$ gates.
}
\label{fig:VQE}
\end{figure}

%
%

\end{document}